\def\T{{ \mathrm{\scriptscriptstyle T} }}
\def\##1\#{\begin{align}#1\end{align}}
\def\$#1\${\begin{align*}#1\end{align*}}
\def\T{{ \mathrm{\scriptscriptstyle T} }} 
\newcommand{\Rom}[1]{\text{\uppercase\expandafter{\romannumeral #1\relax}}}
\def\En{\widehat{\mathbb{E}}_{(\sigma^\star, \bm{\alpha}^\star, \bm{\beta}^\star)}}
\def\Ea{\widehat{\mathbb{E}}_{(\sigma, \bm{\alpha}, \bm{\beta})}}
\def\Pn{\widehat{\mathbb{P}}_{(\sigma^\star, \bm{\alpha}^\star, \bm{\beta}^\star)}}
\def\Pa{\widehat{\mathbb{P}}_{(\sigma, \bm{\alpha}, \bm{\beta})}}
\begin{document}

\title{ \LARGE  Bayesian Factor-adjusted Sparse Regression}     

\author{Jianqing Fan\footnotemark[1], ~Bai Jiang\thanks{Department of Operations Research and Financial Engineering, Princeton University, Princeton, NJ 08544; E-mail: \texttt{jqfan@princeton.edu, baij@princeton.edu}.}, ~ and ~Qiang Sun\thanks{Department of Statistical Sciences, University of Toronto, Toronto, ON M5S 3G3; E-mail: \texttt{qsun@utstat.toronto.edu}.}}

\date{January 31, 2019}

\maketitle

\vspace{-0.25in}

\begin{abstract}

This paper investigates the high-dimensional linear regression with highly correlated covariates. In this setup, the traditional sparsity assumption on the regression coefficients often fails to hold, and consequently many model selection procedures do not work. To address this challenge, we model the variations of covariates by a factor structure. Specifically, strong correlations among covariates are explained by common factors and the remaining variations are interpreted as idiosyncratic components of each covariate. This leads to a factor-adjusted regression model with both common factors and idiosyncratic components as covariates. We generalize the traditional sparsity assumption accordingly and assume that all common factors but only a small number of idiosyncratic components contribute to the response. A Bayesian procedure with a spike-and-slab prior is then proposed for parameter estimation and model selection. Simulation studies show that our Bayesian method outperforms its lasso analogue, manifests insensitivity to the overestimates of the number of common factors, pays a negligible price in the no correlation case, and scales up well with increasing sample size, dimensionality and sparsity. Numerical results on a real dataset of U.S. bond risk premia and macroeconomic indicators lend strong support to our methodology.
\end{abstract}
\noindent
{keywords}: factor model, Bayesian sparse regression, posterior convergence rate, model selection.

\section{Introduction}

High-dimensional linear models are useful for a wide arrays of economic problems \citep{fan2011sparse, belloni2012sparse}. These models typically assume the sparsity of regression coefficients, that is, only a small number of covariates have significant effects on the response. However, the explanatory variables in the panel of an economic dataset are often highly correlated due to the influence of latent common factors, rendering the sparsity assumption unreasonable and restrictive. To address this issue, this paper proposes a general regression model with a factor-adjusted sparsity assumption, and develops a Bayesian method for this model.

To motivate the factor-adjusted model and its corresponding methodology, we start with the standard linear regression model
\begin{equation} \label{model1}
\mbf{Y}_{n \times 1} = \mbf{X}_{n \times p}\bm{\beta}_{p \times 1} + \sigma \bm{\varepsilon}_{n \times 1},
\end{equation}
where $\mbf{Y}_{n \times 1}=(y_1,\ldots, y_n)^\T$ is an $n\times 1$ response vector, $\mbf{X}_{n \times p}=(\bm{x}_1,\ldots, \bm{x}_n)^\T = [\mbf{X}_1,\dots,\mbf{X}_p]$ is a design matrix of $n$ observations and $p$ covariates, $\bm{\beta} = (\beta_1,\dots,\beta_p)^\T$ is a $p$-dimensional vector of regression coefficients, $\sigma$ is an unknown standard deviation, and $\bm \varepsilon$ is an $n$-dimensional standard Gaussian random vector, independent of $\mbf{X}$. Without loss of generality, we assume $\mbb{E}\mbf{X}_j = \mbf{0}$ and include no intercept term in the model. Of interest is the high-dimensional regime in which the dimensionality $p$ is much larger than the sample size $n$.

This model has attracted intensive interests in the frequentist community (\citealp{tibshirani1996regression, fan2001variable, candes2007dantzig, fan2008sure, zhang2008sparsity, su2016slope}, among others). All of these methods hinge on at least two basic assumptions. The first one assumes that the correlations between explanatory variables are sufficiently weak. Examples of this assumption are the mutual coherence condition \citep{donoho2001uncertainty, donoho2003optimally, donoho2006stable, bunea2007sparsity}, the irrepresentable condition \citep{zhao2006model}, the restricted eigenvalue condition \citep{bickel2009simultaneous, fan2018lamm} and the uniform compatibility condition \citep[page 157]{buhlmann2011statistics}. The second one, referred to as the sparsity assumption, assumes that only a small number $s$ of covariates contribute to the response. Formally, the sparsity, defined as $s := |\{j: \beta_j \neq 0\}|$, is much smaller than the dimensionality $p$.

Nevertheless, the weak correlation conditions do not necessarily hold in many applications, especially those in economic and financial studies. In an economic or financial dataset, the explanatory variables, e.g., stock returns or macroeconomic indicators over a period of time, are often influenced by similar economic fundamentals and are thus heavily correlated due to the existence of co-movement patterns \citep{forbes2002no, stock2002forecasting}. In the presence of such strong correlations introduced by common factors, one naturally expects strong effects of common factors on the response. If this is true, many covariates would have non-negligible effects on the response, rendering the traditional sparsity assumption in the standard regression model \eqref{model1} ideologically unreasonable.

The above argument shows the necessity to take the correlation structure of explanatory variables into account and adjust the sparsity assumption accordingly. For this purpose, we consider using factor models \citep{stock2002forecasting, bai2003inferential, bai2006confidence, fan2013large} and assume that each datum (row) $\bm{x}_i\in\mathbb{R}^p$ of the data matrix $\mbf{X}$ exhibits a decomposition of form
\begin{equation}\label{model2}
\bm{x}_i = \mbf{B} \bm{f}_i + \bm{u}_i,
\end{equation}
where $\mbf{B} = [\bm{b}_1,\dots,\bm{b}_p]^\T$ is a $p \times k$ unknown matrix of factor loading coefficients, $\bm{f}_i$ is a $k$-dimensional random vector of common factors, and $\bm{u}_i$ is a $p$-dimensional random vector of weakly-correlated idiosyncratic components, uncorrelated with $\bm{f}_i$. Without loss of generality, we assume $\mbb{E}\bm{f}_i = \mbf{0}$, $\mbb{E}\bm{u}_i = \mbf{0}$, and $\Cov(\bm{f}_i) = \mbf{I}$. Both common factors and idiosyncratic components are latent, but they are often estimated by using principal component analysis (PCA) \citep{bai2003inferential,fan2013large,wang2015asymptotics}. Model \eqref{model2} embraces the well-known CAPM model \citep{sharpe1964capital, lintner1975valuation} and Fama-French model \citep{fama1993common} as its special cases, with observable common factors. Let $\mbf{F}_{n \times k} = [\bm{f}_1,\dots,\bm{f}_n]^\T = [\mbf{F}_1,\dots,\mbf{F}_k]$ be the matrix of common factors, and $\mbf{U}_{n \times p} = [\bm{u}_1,\dots,\bm{u}_n]^\T = [\mbf{U}_1,\dots,\mbf{U}_p]$ be the matrix of idiosyncratic components. Then a more compact matrix form reads as
\begin{align} \label{model3}
\mbf{X} = \mbf{F} \mbf{B}^\T + \mbf{U}.
\end{align}
Each covariate (column) $\mbf{X}_j$ in $\mbf{X}$ can be decomposed as a sum of two components $\mbf{F}\bm{b}_j$ and $\mbf{U}_j$, reflecting the influence of common factors and idiosyncratic variations respectively.

Utilizing this factor structure \eqref{model3}, we generalize the standard sparse regression model \eqref{model1} to a factor-adjusted sparse regression model of the form
\begin{equation} \label{model4}
\mbf{Y}_{n \times 1} = \mbf{F}_{n \times k} \bm{\alpha}_{k \times 1} + \mbf{U}_{n \times p}\bm{\beta}_{p \times 1} + \sigma \bm{\varepsilon}_{n \times 1},
\end{equation}
where $\bm{\alpha}$ and $\bm{\beta}$ are regression coefficient vectors of $\mbf{F}$ and $\mbf{U}$, respectively. We assume that $\bm{\alpha}$ is dense (as it is usually low-dimensional) but $\bm{\beta}$ is sparse. That is, all common factors but only a small number of idiosyncratic components of the original explanatory variables contribute to the response. A non-zero $\beta_j$ indicates that the covariate $\mbf{X}_j$, excluding the strong correlation with other covariates, has a \textit{specific} effect on the response. Compared to the traditional sparsity assumption, this factor-adjusted sparsity assumption is more tenable as the idiosyncratic components are weakly-correlated.

We remark that our generalized factor-adjusted regression model \eqref{model4} covers the standard regression model \eqref{model1} as a special case by restricting the side constraint that $\bm{\alpha} = \mbf{B}^\T \bm{\beta}$. Under this constraint, the factor-adjusted sparsity assumption imposed on regression coefficients of idiosyncratic components in model \eqref{model4} coincides with the traditional sparsity assumption in model \eqref{model1}. Thus any statistical method for estimating model \eqref{model4} would estimate model \eqref{model1}. Of course, when such a constraint is not enforced, model \eqref{model4} provides more flexibility in the regression analysis than model \eqref{model1}.

Model \eqref{model4} is similar but different from the factor-augmented regression or the augmented principal component regression of \citet{stock2002forecasting, bai2006confidence}. In the factor-augmented models, the factors are usually extracted from a large panel of data via PCA and used as a part of covariates, yet the other variables are introduced from outside of the panel. These models are typically low-dimensional. In contrast, model \eqref{model4} takes idiosyncratic components as covariates, which are created internally from the panel of the data. This allows to explore additional explanatory power of the data. Our analyses of model \eqref{model4} in the high-dimensional fashion are applicable to the low-dimensional factor-augmented regression models in the literature, as model \eqref{model4} can easily incorporate external variables in the part of $\mbf{F}$ and/or $\mbf{U}$. For simplicity of presentation, we omit the details.

\citet{kneip2011factor} gave an insightful discussion on the limitation of the traditional sparse assumption in model \eqref{model1} with factor-structured covariates and proposed a factor-augmented regression model. Nevertheless, they still need the weak correlation condition on the original covariates, which is unlikely to hold for factor-structured covariates. See equation (5.5) of \citet{kneip2011factor}. \citet{fan2016decorrelation} pointed out the failure of classical frequentist methods dealing with model \eqref{model1} with factor-structured covariates, and proposed a frequentist method for estimating model \eqref{model4}. Specially, they estimated the latent common factors and idiosyncratic components, and then run frequentist sparse regression methods (e.g., lasso) on estimated common factors and idiosyncratic components. Similar to ours, they impose the weak correlation condition on idiosyncratic components, instead of the original covariates. See Example 3.2 of \cite{fan2016decorrelation}.

This paper focuses on Bayesian solutions to model \eqref{model4}. As shown in Section \ref{sec:2}, the fully Bayesian procedure cannot work easily due to the involvement of latent common factors and idiosyncratic components in the posterior computation. Inspired by \citet{fan2016decorrelation}, we consider estimating these latent variables by PCA and running a Bayesian sparse regression method on their estimates. The arsenal of Bayesian sparse regression methods, including those exploiting shrinkage priors (e.g. \citealp{park2008bayesian, polson2012local, armagan2013generalized, bhattacharya2015dirichlet, song2017nearly}) and those exploiting spike-and-slab priors (among others, \citealp{ishwaran2005spike, narisetty2014bayesian, castillo2015bayesian, rovckova2018spike}), has been developed in parallel to the frequentist methods. However, it is unclear whether these methods would work on estimated common factors and idiosyncratic components in model \eqref{model4}. When it does work, it remains unknown whether the factor model estimation would incur any loss to the convergence rate or model selection consistency of the Bayesian sparse regression method. Given theoretical results in the frequentist setting, these questions are still challenging, because the definitions of estimation errors and technical conditions of frequentist and Bayesian methods are significantly different \citep{castillo2015bayesian}. Even if a Bayesian sparse regression method is theoretically sound, it is unclear whether it performs better or worse than the frequentist methods on finite sample data. We would like answer these questions in the current paper.

Specifically, our Bayesian method imposes a slab prior on the regression coefficients of estimated common factors, and a spike-and-slab prior on the regression coefficients of estimated idiosyncratic components. This procedure results in a pseudo-posterior distribution, which differs from the exact posterior distribution obtained by a Bayesian regression on exact common factors and idiosyncratic components. Interestingly, the pseudo-posterior distribution achieves the $\ell_2$ contraction rate $\sqrt{s\log p/n}$ of the regression coefficients, which matches that of the exact posterior distribution. Byproducts of our analyses include the adaptivity to the unknown sparsity $s$ and the unknown standard deviation $\sigma$. We only need a type of sparse eigenvalue condition on the idiosyncratic components to overcome the non-identifiability issue of the parameters. This is easy to hold due to the weak correlation among idiosyncratic components. Moreover, by assuming a beta-min condition that is frequently used in the high-dimensional regression literature, we prove that our method consistently selects the support of the true sparse regression coefficients.

The rest of this paper proceeds as follows. In Section \ref{sec:2}, we propose the Bayesian methodology for the factor-adjusted regression model \eqref{model4}. Section \ref{sec:3} establishes the contraction rates and model selection consistency of the pseudo-posterior distribution. These theoretical results rely on a high-level condition concerning the estimation of factor models, which is examined by Section \ref{sec:4}. Section \ref{sec:5} presents experimental results on simulation datasets. Section \ref{sec:6} applies our method to a real dataset of U.S. bond risk premia and macroeconomic indicators. Section \ref{sec:7} is devoted to discussions. All technical proofs and algorithmic implementation are detailed in the appendices.

\textbf{Notation.} We write $\diag(a_1,\dots,a_m)$ for a diagonal matrix of elements $a_1,\dots,a_m$. For a symmetric matrix $\mbf{A}$, we write its largest eigenvalue as $\lambda_{\max}(\mbf{A})$ and its smallest eigenvalue as $\lambda_{\min}(\mbf{A})$. For a matrix $\mbf{A}_{m_1 \times m_2} = [a_{ij}]_{1 \le i \le m_1, 1 \le j \le m_2}$, we write $\mbf{A}_j$ to denote its $j$-th column, and lowercase $\bm{a}_i$ to denote its $i$-th row. For a index set $\xi \subseteq \{1,\dots,m_2\}$, $\mbf{A}_\xi = [\mbf{A}_j: j \in \xi]$ is the sub-matrix of $\mbf{A}$ assembling the columns indexed by $\xi$. Let $\Vert \mbf{A}\Vert_{\max} = \max_{i,j} |a_{ij}|$ be the element-wise maximum norm of $\mbf{A}$, let $\Vert \mbf{A} \Vert_\text{F}$ be its Frobenius norm. For a vector $\bm{v}$, let $\bm{v}_\xi$ denote its sub-vector assembling components indexed by $\xi$, and let $\Vert \bm{v} \Vert$ denote its $\ell_2$ norm. For two sequences $a_n$ and $b_n$, $a_n \prec b_n$ or $b_n \succ a_n$ means $a_n = \smallo(b_n)$.

\section{Model and Methodology}\label{sec:2}
Our goal is to study the factor-adjusted regression model \eqref{model4}, in which both common and idiosyncratic components $[\mbf{F}, \mbf{U}]$ are unobserved, but $\mbf{X}$ are observed through \eqref{model3}. Each datum (row) $\bm{x}_i$ in $\mbf{X}$ admits the factor structure \eqref{model2} with $\{(\bm{f}_i, \bm{u}_i)\}_{1 \le i \le n}$ therein identically distributed as $(\bm{f}, \bm{u})$. Note that $\{(\bm{f}_i, \bm{u}_i)\}_{1 \le i \le n}$ are not necessarily independently distributed. The dimension $k$ of $\bm{f}$ is fixed, but the dimension $p$ of $\bm{u}$ may grow as $n$ increases. By decomposition, $\bm{f}$ and $\bm{u}$ are uncorrelated. Without loss of generality, we assume that $\mbb{E}\bm{f} = \mbf{0}$, $\mbb{E}\bm{u} = \mbf{0}$ and $\Cov(\bm{f}) = \mbf{I}$. The regression coefficient vector $\bm{\beta}$ of $\mbf{U}$ is sparse in the sense that $s = |\{j: \beta_j \neq 0\}|$ is small. We allow $s$ to grow as $n$ increases, but require $s \prec n / \log p$ so that the desired $\ell_2$ contraction rate $\sqrt{s\log p/n} \to 0$ as $n \to \infty$. The Gaussian errors $\bm{\varepsilon}$ are independent from $\mbf{F}$ and $\mbf{U}$.

An inherent difficulty for estimating model \eqref{model4} is that both common factors and idiosyncratic components are unobserved. Therefore the first step is to estimate these unobserved variables. We follow \citet{bai2003inferential,fan2013large,wang2015asymptotics} and use PCA for this task. Let $\widehat{\lambda}_1 \ge \dots \ge \widehat{\lambda}_n$ be the $n$ eigenvalues of $\mbf{X}\mbf{X}^\T/n$. A natural estimator of $\mbf{F}$ is the concatenation of the $k$ square-root-$n$-scaled eigenvectors corresponding to the top $k$ eigenvalues of $\mbf{X}\mbf{X}^\T/n$, denote by $\widehat{\mbf{F}}$. That is,
$$\frac{\mbf{X}\mbf{X}^\T}{n} \widehat{\mbf{F}} = \widehat{\mbf{F}} \widehat{\mbf{\Lambda}}, ~~~\frac{\widehat{\mbf{F}}^\T\widehat{\mbf{F}}}{n} = \mbf{I}, ~~~\widehat{\mbf{B}} = \mbf{X}^\T \widehat{\mbf{F}}/n,$$
where $\widehat{\mbf{\Lambda}} = \diag(\widehat{\lambda}_1,\dots,\widehat{\lambda}_k).$
Then we estimate $\mbf{U}$ by
$$
\widehat{\mbf{U}} = \mbf{X} -\widehat{\mbf{F}}\widehat{\mbf{B}}^\T = (\mbf{I} - \widehat{\mbf{F}}\widehat{\mbf{F}}^\T/n)\mbf{X}.$$
If $k$ is unknown, we may estimate $k$ by
\begin{equation} \label{khat}
\widehat{k} = \argmax_{k \le k_{\max}} \frac{\text{$k$-th eigenvalue of $\mbf{X}^\T\mbf{X}/n$}}{\text{$(k+1)$-th eigenvalue of $\mbf{X}^\T\mbf{X}/n$}},
\end{equation}
where $k_{\max}$ is any prescribed upper bound for $k$ \citep{luo2009contour, lam2012factor, ahn2013eigenvalue}.

After estimating unobserved variables, we propose a Bayesian sparse regression method for tasks of parameter estimation and model selection. Suppose we are given data $(\mbf{X}, \mbf{Y})$ generated from true parameter $(\sigma^\star, \bm{\alpha}^\star, \bm{\beta}^\star)$. Let $(\sigma, \bm{\alpha}, \bm{\beta})$ be its running parameter. Let $\xi = \{j: \beta_j \ne 0\}$ and $\xi^\star = \{j: \beta^\star_j \ne 0\}$ be the support of $\bm{\beta}$ and $\bm{\beta}^\star$, respectively. We consider a hierarchical prior $\pi(\sigma^2, \bm{\alpha}, \bm{\beta})$ with a slab prior on the coefficients of common factors $\mbf{F}$ and a spike-and-slab prior on the coefficients of idiosyncratic components $\mbf{U}$ as follows:
\begin{equation} \label{prior}
\begin{split}
\sigma^2 &\sim g(\sigma^2),\\
\bm{\alpha}|\sigma^2 &\sim \prod_{j=1}^k \frac{1}{\sigma}h\left(\frac{\alpha_j}{\sigma}\right),\\
1\{j \in \xi\} &\sim \texttt{Bernoulli}(s_0/p),\\
\bm{\beta}_\xi|\sigma^2 &\sim \prod_{j \in \xi} \frac{1}{\tau_j\sigma}h\left(\frac{\beta_j}{\tau_j\sigma}\right), ~~~\bm{\beta}_{\xi^c}|\sigma^2 = 0,
\end{split}
\end{equation}
where $g$ is a positive continuous density function on $(0,\infty)$, e.g., the inverse-gamma density; $h$ is a ``slab'' density function on $(-\infty,+\infty)$ in the sense that $-\log [\inf_{|z| \le t} h(z)]  = \bigO(t^2)$ as $t \to \infty$, e.g., the Gaussian density $e^{-z^2/2}/\sqrt{2\pi}$ and the Laplace density $e^{-|z|/2}/2$; hyperparameters $\tau_1,\dots,\tau_p$ control the scales of running coefficients $\beta_1,\dots,\beta_p$; and, hyperparameter $s_0$ controls the sparsity of running models $\xi$. For the scaling hyperparameters, we set $\tau_j^{-1} = \Vert \widehat{\mbf{U}}_j \Vert/\sqrt{n}$ so that the effects of possibly heterogeneous scales of $\widehat{\mbf{U}}_j$'s are appropriately adjusted. For the sparsity hyperparameter, we simply set $s_0 = 1$ in the simulation experiments. When dealing with a real dataset, one could choose an informative $s_0$ according to expertise knowledges in the specific area, or tune $s_0$ by sophisticated cross-validation or empirical Bayes procedures

The Bayesian sparse regression on response $\mbf{Y}$ and regressors $\widehat{\mbf{F}}$, $\widehat{\mbf{U}}$ with prior \eqref{prior} obtains a pseudo-posterior distribution
\begin{equation} \label{posterior}
\widehat{\pi}(\sigma^2, \bm{\alpha}, \bm{\beta}|\mbf{X}, \mbf{Y})\\
= \widehat{\pi}(\sigma^2, \bm{\alpha}, \bm{\beta}|\widehat{\mbf{F}}, \widehat{\mbf{U}}, \mbf{Y})\\
\propto \pi(\sigma^2, \bm{\alpha}, \bm{\beta}) \mc{N}(\mbf{Y}|\widehat{\mbf{F}}\bm{\alpha} + \widehat{\mbf{U}}\bm{\beta}, \sigma^2 \mbf{I}),
\end{equation}
where $\mc{N}(\mbf{Y}|\bm{\mu}, \sigma^2 \mbf{I})$ denotes the $n$-dimensional normal distribution with mean $\bm{\mu}_{n \times 1}$ and covariance $\sigma^2 \mbf{I}$. We call it a ``pseudo-posterior'' distribution and put a hat over $\pi$ to emphasize that it differs from the exact posterior distributions $\pi(\sigma^2, \bm{\alpha}, \bm{\beta}|\mbf{F}, \mbf{U}, \mbf{Y})$, obtained by a Bayesian regression on observed $[\mbf{F}, \mbf{U}]$, and $\pi(\sigma^2, \bm{\alpha}, \bm{\beta}|\mbf{X}, \mbf{Y})$, obtained by a fully Bayesian procedure.

It is worth noting that, even in the simplest setting in which $\{(\bm{f}_i, \bm{u}_i)\}_{1 \le i \le n}$ are i.i.d. and $\bm{f}_i \sim P_f, \bm{u}_i \sim P_u$ are jointly independent, the exact posterior distribution given by a fully Bayesian procedure
\begin{align*}
&~~~\pi(\sigma^2, \bm{\alpha}, \bm{\beta}|\mbf{X}, \mbf{Y})\\
&\propto \pi(\sigma^2, \bm{\alpha}, \bm{\beta}) \int \mc{N}(\mbf{Y}|\mbf{F}\bm{\alpha} + (\mbf{X} - \mbf{F}\mbf{B}^\T)\bm{\beta}, \sigma^2 \mbf{I}) \prod_{i=1}^n P_f(\bm{f}_i) P_u(\bm{x}_i - \mbf{B}\bm{f}_i)d\bm{f}_i,
\end{align*}
is computationally intractable due to the involvement of latent variables in the complicated integral. Thus a fully Bayesian procedure does not solve model \eqref{model4} easily.

\section{Theory} \label{sec:3}
In this section, we show under commonly-seen assumptions for Bayesian sparse regression methods that the pseudo-posterior distribution \eqref{posterior} achieves the convergence rate $\epsilon_n = \sqrt{s\log p/n}$ of the $\ell_2$ estimation error for the coefficient vectors $(\bm{\alpha}^\star, \bm{\beta}^\star)$. This rate is so far the best rate Bayesian methods can achieve with observed $[\mbf{F}, \mbf{U}]$ \citep{song2017nearly}. We see that the factor adjustment added by our approach to the Bayesian sparse regression method incurs no loss in terms of $\ell_2$ estimation error rate. Byproducts of our analysis are the adaptivities of the pseudo-posterior distribution to the unknown sparsity $s$ and unknown standard deviation $\sigma^\star$. Finally, when the beta-min condition holds, we establish the model selection consistency of the pseudo-posterior distribution \eqref{posterior}.

\subsection{Assumptions}

In the high-dimensional regime $p \succ n$, a common assumption is that $\bm{\beta}^\star$ is sparse of size $s$. Following the sparse regression literature, we assume that $s \prec n / \log p$ such that the desired error rate $\epsilon_n = \sqrt{s\log p/n} \to 0$ as $n \to \infty$. To recover the sparse coefficient vector $\bm{\beta}^\star$ at rate $\epsilon_n$, we need the following assumptions.

\begin{assumption}\label{asm:1}
There exists a large integer $\bar{p}(n,p) \succ s$ and a constant $\kappa_0 > 0$ such that
$$\min_{\xi: |\xi| \le \bar{p}} \lambda_{\min}(\mbf{U}_\xi^\T\mbf{U}_\xi/n) \ge \kappa_0$$
holds with probability approaching $1$.
\end{assumption}

This assumption is commonly referred to as the sparse eigenvalue condition in the frequentist literature \citep{fan2018lamm}. In a recent study of Bayesian sparse regression with shrinkage priors, \citet{song2017nearly} imposed the same assumption on original covariates $\mbf{X}$. Here our assumption is imposed on their idiosyncratic components $\mbf{U}$.

Our next assumption upper bounds the maximum eigenvalue of $\mbf{U}_{\xi^\star}^\T\mbf{U}_{\xi^\star}/n$, which is the Gram matrix corresponding to the true model $\xi^\star = \{j:\beta^\star_j \ne 0\}$. Assumptions \ref{asm:1}-\ref{asm:2} together ensure that $\mbf{U}_{\xi^\star}^\T\mbf{U}_{\xi^\star}/n$ is well conditioned.

\begin{assumption}
\label{asm:2}
There exists a constant $\kappa_1 > 0$ such that
$$\lambda_{\max}(\mbf{U}_{\xi^\star}^\T\mbf{U}_{\xi^\star}/n) \le \kappa_1$$
holds with probability approaching $1$.
\end{assumption}

\citet{raskutti2010restricted, dobriban2016regularity} gave sufficient conditions for correlated covariates to satisfy Assumptions \ref{asm:1}-\ref{asm:2}. These theories typically allow $\bar{p}(n,p) \asymp n/\log p$ in Assumption \ref{asm:1}. If $\mbf{U}_{\xi^\star}$ consists of i.i.d. entries with zero mean, unit variance and only finite fourth moment, Assumption \ref{asm:2} holds by Bai-Yin theorem in the random matrix theory \citep{bai1988necessary, yin1988limit}.


Since we feed a Bayesian sparse regression method with the estimated variables $[\widehat{\mbf{F}}, \widehat{\mbf{U}}]$ rather than the latent variables $[\mbf{F},\mbf{U}]$, it is necessary to control the error of $(\widehat{\mbf{F}}\bm{\alpha} + \widehat{\mbf{U}}\bm{\beta}) - (\mbf{F}\bm{\alpha}^\star + \mbf{U}\bm{\beta}^\star)$. This goal is achieved by assumptions on the estimation errors of latent variables and the magnitudes of the true coefficient vectors. For the estimation error of the factor model, we impose a generic high-level condition as follows.
\begin{assumption}\label{asm:3}
The latent common factors and idiosyncratic components can be estimated by $\widehat{\mbf{F}}$ and $\widehat{\mbf{U}}$ as follows.
\begin{align*}
\max_{1 \leq j \leq k} \Vert (\widehat{\mbf{F}}\mbf{H})_j - \mbf{F}_j\Vert &= \bigOp(\sqrt{\log p}),\\
\max_{1 \leq j \leq p} \Vert \widehat{\mbf{U}}_j - \mbf{U}_j\Vert &= \bigOp(\sqrt{\log p}),
\end{align*}
for some nearly orthogonal matrix $\mbf{H}_{k \times k}$ such that $\Vert \mbf{H}^\T\mbf{H} - \mbf{I} \Vert = \bigOp(\sqrt{\log p/n})$ and $\Vert \mbf{H}\mbf{H}^\T - \mbf{I} \Vert = \bigOp(\sqrt{\log p/n})$.\end{assumption}

Since $\widehat{\mbf{F}}$ represents the eigenspace of the top $k$ eigenvalues of $\mbf{X}\mbf{X}^\T$ and mimics the column space of $\mbf{F}$, there is a nearly-orthogonal transformation, represented by $\mbf{H}$, between $\mbf{F}$ and $\widehat{\mbf{F}}$. Next section will verify this error rate in factor models under standard assumptions.

Our last assumption requires constant orders of the true parameters $(\sigma^\star, \bm{\alpha}^\star, \bm{\beta}^\star)$.
\begin{assumption} \label{asm:4}
$\sigma^\star > 0$ is fixed, $\Vert \bm{\alpha}^\star \Vert = \bigO(1)$, and $\Vert \bm{\beta}^\star \Vert = \bigO(1)$.
\end{assumption}

This condition is not restrictive. It holds if and only if the response variable has finite variance, under Assumptions \ref{asm:1}-\ref{asm:2}. To see this point, note that the variance of a single response variable $y = \bm{f}^\T\bm{\alpha}^\star  + \bm{u}^\T \bm{\beta}^\star + \sigma^\star \varepsilon$ is
$$\Var(y) = \Vert \bm{\alpha}^\star \Vert^2 + (\bm{\beta}^\star_{\xi^\star})^\T \Cov(\bm{u}_{\xi^\star})\bm{\beta}^\star_{\xi^\star} + \sigma^{\star 2},$$
where $\bm{u}_{\xi^\star}$ is the sub-vector of $\bm{u}$ corresponding to the true model $\xi^\star$, and $\Cov(\bm{u}_{\xi^\star})$ have all eigenvalues bounded away from $0$ and $\infty$, due to Assumptions \ref{asm:1}-\ref{asm:2}. Although our theoretical analyses need bounded magnitude of regression coefficients to avoid the amplification of estimation errors of latent variables, we remark here that, when the underlying true factors, $\mbf{F}_j$'s and $\mbf{U}_j$'s, and/or more accurate estimates are available, we can allow larger magnitudes of regression coefficients.

\subsection{Definition of Posterior Contraction Rate}
The definition of convergence rate in the Bayesian setting differs from that in the frequentist setting. We formally define it by following the classical Bayesian literature \citep{ghosal2000convergence, shen2001rates}.

\begin{definition}[Posterior contraction]
Consider a parametric model indexed by $\bm{\theta}$. Let $\{\mc{D}_n\}_{n \ge 1}$ be a sequence of data generations according to some true parameter $\bm{\theta}^\star$. Let $\bm{\gamma}(\bm{\theta})$ be a function of $\bm{\theta}$. Let $\ell(\bm{\gamma}(\bm{\theta}), \bm{\gamma}^\star)$ be a loss function between the estimate $\bm{\gamma}(\bm{\theta})$ and the parameter $\bm{\gamma}^\star$. A sequence of posterior distributions (random measures) $\{\pi(\bm{\theta} | \mc{D}_n)\}_{n \ge 1}$ is said to achieve convergence rate $\epsilon_n$ of estimation error $\ell(\bm{\gamma}(\bm{\theta}), \bm{\gamma}^\star)$ if
$$\pi(\ell(\bm{\gamma}(\bm{\theta}), \bm{\gamma}^\star) \ge M\epsilon_n|\mc{D}_n) \to 0$$
in $\mbb{P}_{\bm{\theta}^\star}$-probability as $n \to \infty$ for some constant $M > 0$.
\end{definition}

Specifically in the factor-adjusted regression model \eqref{model4} with covariates hidden in \eqref{model3}, we consider
$$\mc{D}_n = (\mbf{X}, \mbf{Y}), ~~~\bm{\theta} = (\mbf{B}, \sigma, \bm{\alpha}, \bm{\beta}),~~~\bm{\gamma}(\bm{\theta}) = {\bm{\alpha} \choose \bm{\beta}},~~~ \bm{\gamma}^\star = {\mbf{H}\bm{\alpha}^\star \choose \bm{\beta}^\star},$$
where $\mbf{H}$ is introduced by Assumption \ref{asm:3}, and want to show that $\widehat{\pi}(\sigma^2, \bm{\alpha}, \bm{\beta}|\mbf{X}, \mbf{Y})$ achieves the contraction rate $\epsilon_n = \sqrt{s\log p/n}$ of $\ell_2$ estimation error
$$\ell(\bm{\gamma}(\bm{\theta}), \bm{\gamma}^\star) = \Vert \bm{\gamma}(\bm{\theta}) - \bm{\gamma}^\star \Vert = \left\Vert{\bm{\alpha} \choose \bm{\beta}} - {\mbf{H}\bm{\alpha}^\star \choose \bm{\beta}^\star}\right\Vert.$$
As noted on Assumption \ref{asm:3}, $\widehat{\mbf{F}}$ approximates $\mbf{F}$ in the sense that they have almost the same column space and $\widehat{\mbf{F}}\mbf{H} \approx \mbf{F}$ element-wisely for some nearly orthogonal transformation matrix $\mbf{H}$. Thus the pseudo-posterior distribution would concentrate around $\bm{\alpha} \approx \mbf{H}\bm{\alpha}^\star$ such that $\widehat{\mbf{F}}\bm{\alpha} \approx \widehat{\mbf{F}}\mbf{H}\bm{\alpha}^\star \approx \mbf{F}\bm{\alpha}^\star$.

\subsection{Results}
This subsection presents the main results of the paper. Recall that $\epsilon_n = \sqrt{s\log p/n}$. Let
$$A(\sigma', \bm{\alpha}', \bm{\beta}', M_0, M_1, M_2, \epsilon_n) = \left\{(\sigma, \bm{\alpha}, \bm{\beta}):
\begin{split}
& |\xi \setminus \xi'|\le M_0 s,\\
& \frac{\sigma^2}{\sigma'^2} \in \left(\frac{1-M_1\epsilon_n}{1+M_1\epsilon_n}, \frac{1+M_1\epsilon_n}{1-M_1\epsilon_n}\right),\\ & \left\Vert{\bm{\alpha} \choose \bm{\beta}} - {\bm{\alpha}' \choose \bm{\beta}'}\right\Vert \le \sigma' M_2 \epsilon_n.
\end{split}\right\},$$
where $M_0, M_1, M_2$ are constants, $\xi$ and $\xi'$ are supports of $\bm{\beta}$ and $\bm{\beta}'$, respectively, and $|\xi \setminus\xi'|$ is the cardinality of the set difference of $\xi'$ and $\xi$.

\begin{theorem} \label{thm:1}
Let $\mbb{P}^\star = \mbb{P}_{(\mbf{B}, \sigma^\star, \bm{\alpha}^\star, \bm{\beta}^\star)}$ denote the probability measure under the true parameters. Under Assumptions \ref{asm:1}-\ref{asm:4}, the following statements hold.
\begin{enumerate}[label=(\alph*)]
\item (estimation error rate) There exist constants $M_0, M_1, M_2$ and $C_1$ such that
$$\mbb{P}^\star \left( \widehat{\pi}\left( A^c(\sigma^\star, \mbf{H}\bm{\alpha}^\star, \bm{\beta}^\star, M_0, M_1, M_2, \epsilon_n) | \mbf{X}, \mbf{Y}\right) \ge e^{-C_1s\log p}\right) \to 0$$
as $n \to \infty$.
\item (prediction error rate) There exist constants $M_3$ and $C_2$ such that
$$\mbb{P}^\star \left( \widehat{\pi}\left( \Vert (\widehat{\mbf{F}}\bm{\alpha} \!+\! \widehat{\mbf{U}}\bm{\beta}) \!-\! (\mbf{F}\bm{\alpha}^\star \!+\! \mbf{U}\bm{\beta}^\star) \Vert \!\ge\! \sigma^\star M_3 \sqrt{n} \epsilon_n| \mbf{X}, \mbf{Y}\right) \!\ge\! e^{-C_2s\log p}\right) \to 0$$
as $n \to \infty$.
\item (model selection consistency) If $\min_{j \in \xi^\star} |\beta^\star_j| \succ \epsilon_n$ then there exist constants $M_0, M_1, M_2$ and $C_3$ such that
$$\mbb{P}^\star \left( \widehat{\pi}\left( A^c(\sigma^\star, \mbf{H}\bm{\alpha}^\star, \bm{\beta}^\star, M_0, M_1, M_2, \epsilon_n) \cup \{\xi \not \supseteq \xi^\star\} | \mbf{X}, \mbf{Y}\right) \ge e^{-C_3s\log p}\right) \to 0$$
as $n \to \infty$. It follows that
\begin{align*}
\mbb{P}^\star \left(\widehat{\pi}\left( \{ |\xi \setminus \xi^\star| \le M_0s, \xi \supseteq \xi^\star\}^c| \mbf{X}, \mbf{Y}\right) \ge e^{-C_3s\log p}\right) &\to 0\\
\mbb{P}^\star \left(\widehat{\pi}\left( \left. \left\{j: |\beta_j| \ge \sigma\sqrt{|\xi|\log p/n}\right\} \ne \xi^\star \right| \mbf{X}, \mbf{Y}\right) \ge e^{-C_3s\log p}\right) &\to 0
\end{align*}
as $n \to \infty$.
\end{enumerate}
\end{theorem}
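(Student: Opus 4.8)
I would first condition on the latent $(\mbf{F},\mbf{U})$. On the event of $\mbb{P}^\star$-probability tending to $1$ where Assumptions \ref{asm:1}--\ref{asm:3} hold, $\widehat{\mbf{F}}$, $\widehat{\mbf{U}}$ and the factor-estimation residual $\bm{r}:=(\mbf{F}-\widehat{\mbf{F}}\mbf{H})\bm{\alpha}^\star+(\mbf{U}-\widehat{\mbf{U}})\bm{\beta}^\star$ are deterministic while $\bm{\varepsilon}$ stays an independent standard Gaussian, so the pseudo-posterior is an ordinary spike-and-slab posterior for the fixed design $[\widehat{\mbf{F}},\widehat{\mbf{U}}]$ and response $\mbf{Y}=\widehat{\mbf{F}}(\mbf{H}\bm{\alpha}^\star)+\widehat{\mbf{U}}\bm{\beta}^\star+\bm{r}+\sigma^\star\bm{\varepsilon}$. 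Two facts drive everything: (i) by Assumption \ref{asm:3}, $k$ fixed and Assumption \ref{asm:4}, $\Vert\bm{r}\Vert\le\Vert\bm{\alpha}^\star\Vert_1\max_{j\le k}\Vert(\widehat{\mbf{F}}\mbf{H})_j-\mbf{F}_j\Vert+\Vert\bm{\beta}^\star_{\xi^\star}\Vert_1\max_{j\le p}\Vert\widehat{\mbf{U}}_j-\mbf{U}_j\Vert=\bigOp(\sqrt{s\log p})=\bigOp(\sqrt{n}\,\epsilon_n)$, so the factor-estimation bias has exactly the target prediction-error magnitude, no larger; and (ii) since $\Vert\widehat{\mbf{U}}_\xi-\mbf{U}_\xi\Vert_{\mathrm{op}}^2\le|\xi|\max_j\Vert\widehat{\mbf{U}}_j-\mbf{U}_j\Vert^2=\bigOp(|\xi|\log p)=\smallo(n)$ for $|\xi|\lesssim s$, Assumptions \ref{asm:1}--\ref{asm:2} transfer from $\mbf{U}$ to $\widehat{\mbf{U}}$, and together with $\widehat{\mbf{F}}^\T\widehat{\mbf{F}}/n=\mbf{I}$, $\widehat{\mbf{F}}^\T\widehat{\mbf{U}}=\mbf{0}$ the Gram matrices of $[\widehat{\mbf{F}},\widehat{\mbf{U}}_\xi]$ are uniformly well conditioned for $|\xi|$ up to any fixed constant multiple of $s$ (which is all the coefficient-recovery steps use), and $\tau_j^{-1}=\Vert\widehat{\mbf{U}}_j\Vert/\sqrt{n}$ is bounded away from $0$ and $\infty$ on $\xi^\star$. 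All later claims are made on this good event, conditionally on $(\mbf{F},\mbf{U})$.

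\textbf{Part (a).} Write $R(\sigma,\bm{\alpha},\bm{\beta})=\mc{N}(\mbf{Y}\,|\,\widehat{\mbf{F}}\bm{\alpha}+\widehat{\mbf{U}}\bm{\beta},\sigma^2\mbf{I})/\mc{N}(\mbf{Y}\,|\,\mbf{F}\bm{\alpha}^\star+\mbf{U}\bm{\beta}^\star,\sigma^{\star 2}\mbf{I})$, so $\widehat{\pi}(A^c|\mbf{X},\mbf{Y})=\int_{A^c}R\,d\pi/\int R\,d\pi$. For the denominator, restricting to $\mc{B}=\{\xi=\xi^\star,\ |\sigma^2/\sigma^{\star 2}-1|\le\epsilon_n^2,\ \Vert(\bm{\alpha},\bm{\beta})-(\mbf{H}\bm{\alpha}^\star,\bm{\beta}^\star)\Vert\le\delta\epsilon_n\}$ gives $\pi(\mc{B})\ge e^{-C_1's\log p}$ (Bernoulli mass $(s_0/p)^s(1-s_0/p)^{p-s}$ of $\xi^\star$, times a fixed-radius slab ball bounded below by $-\log\inf_{|z|\le t}h(z)=\bigO(t^2)$ and boundedness of $\tau_j\sigma$, times the $g$-mass of an $\epsilon_n^2$-interval), while on $\mc{B}$, $\log R\ge-C_1''s\log p$ with high probability because $\Vert\widehat{\mbf{F}}\bm{\alpha}+\widehat{\mbf{U}}\bm{\beta}-\mbf{F}\bm{\alpha}^\star-\mbf{U}\bm{\beta}^\star\Vert\le\sqrt{n}\,\delta\epsilon_n+\Vert\widehat{\mbf{U}}_{\xi^\star}\Vert_{\mathrm{op}}\delta\epsilon_n+\Vert\bm{r}\Vert=\bigOp(\sqrt{n}\,\epsilon_n)$, the noise cross terms are $\bigOp(s\log p)$ ($\widehat{\mbf{F}}^\T\bm{\varepsilon}$, $\widehat{\mbf{U}}_{\xi^\star}^\T\bm{\varepsilon}$ live in $k+s$ coordinates and $\bm{r}$ is deterministic given $(\mbf{F},\mbf{U})$), and $\tfrac{n}{2}|\log(\sigma^2/\sigma^{\star 2})|=\bigO(n\epsilon_n^2)=\bigO(s\log p)$; hence $\int R\,d\pi\ge e^{-(C_1'+C_1'')s\log p}$. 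For the numerator use $\widehat{\pi}(A^c|\mbf{X},\mbf{Y})\le\phi+(1-\phi)\int_{A^c}R\,d\pi/\int R\,d\pi$, with $\mbb{E}^\star[(1-\phi)\int_{A^c}R\,d\pi\mid\mbf{F},\mbf{U}]=\int_{A^c}\mbb{E}_{\mc{N}(\widehat{\mbf{F}}\bm{\alpha}+\widehat{\mbf{U}}\bm{\beta},\sigma^2\mbf{I})}[1-\phi]\,d\pi$, and split $A^c$ three ways. On $\{|\xi\setminus\xi^\star|>M_0s\}$, where no test separates the mean, bound $\int R\,d\pi$ directly by the prior's model-size penalty times the marginal pseudo-likelihood of an over-large model --- controlled by the Occam factor $\bigOp(n^{-|\xi\setminus\xi^\star|/2})$ from the well-conditioned Gram matrix, the slab-tail bound keeping coefficient integrals finite, and a uniform over-fitting bound $\Vert(\mbf{P}_{[\widehat{\mbf{F}},\widehat{\mbf{U}}_\xi]}-\mbf{P}_{[\widehat{\mbf{F}},\widehat{\mbf{U}}_{\xi^\star}]})(\bm{r}+\sigma^\star\bm{\varepsilon})\Vert^2\lesssim(|\xi\setminus\xi^\star|+s)\log p$ --- which for $M_0$ large makes this $\ll e^{-(C_1'+C_1'')s\log p}$. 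On $\{|\xi\setminus\xi^\star|\le M_0s,\ \Vert(\bm{\alpha},\bm{\beta})-(\mbf{H}\bm{\alpha}^\star,\bm{\beta}^\star)\Vert>M_2\sigma^\star\epsilon_n\}$, the well-conditioning of $[\widehat{\mbf{F}},\widehat{\mbf{U}}_{\xi\cup\xi^\star}]$ (size $\le(M_0+1)s$) converts the coefficient gap into a mean gap of order $M_2\sqrt{n}\,\epsilon_n$ from $\mbf{F}\bm{\alpha}^\star+\mbf{U}\bm{\beta}^\star$ (after subtracting $\Vert\bm{r}\Vert=\bigOp(\sqrt{n}\,\epsilon_n)$), for which a Gaussian likelihood-ratio test over an $e^{\bigO(s\log p)}$-net of admissible means has type-I and type-II errors $\le e^{-cM_2^2s\log p}$. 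On the rest $\{\sigma^2/\sigma^{\star 2}\notin((1-M_1\epsilon_n)/(1+M_1\epsilon_n),(1+M_1\epsilon_n)/(1-M_1\epsilon_n))\}$, where the mean is within $\bigO(\sqrt{n}\,\epsilon_n)$ of the truth, the profile log-likelihood in $\sigma^2$ is locally strongly concave with curvature of order $n$, so an $M_1\epsilon_n$ deviation costs of order $M_1^2s\log p$; a variance test then has exponentially small errors for $M_1$ large. Combining via Markov's inequality (using $\mbb{E}^\star[R\mid\mbf{F},\mbf{U}]=1$ off the good event, whose indicator is $\sigma(\mbf{F},\mbf{U})$-measurable) yields part (a) and determines $M_0,M_1,M_2,C_1$.

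\textbf{Parts (b) and (c).} Part (b) follows from (a): on $A(\sigma^\star,\mbf{H}\bm{\alpha}^\star,\bm{\beta}^\star,\dots)$, $\Vert(\widehat{\mbf{F}}\bm{\alpha}+\widehat{\mbf{U}}\bm{\beta})-(\mbf{F}\bm{\alpha}^\star+\mbf{U}\bm{\beta}^\star)\Vert\le\sqrt{n}\,\sigma^\star M_2\epsilon_n+\Vert\widehat{\mbf{U}}_{\xi\cup\xi^\star}\Vert_{\mathrm{op}}\sigma^\star M_2\epsilon_n+\Vert\bm{r}\Vert=\bigOp(\sqrt{n}\,\epsilon_n)$ since $\Vert\widehat{\mbf{F}}\Vert_{\mathrm{op}}=\sqrt{n}$ and $\Vert\widehat{\mbf{U}}_{\xi\cup\xi^\star}\Vert_{\mathrm{op}}=\bigOp(\sqrt{n})$ by the transferred Assumption \ref{asm:2}; take $M_3$ accordingly and $C_2=C_1$. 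For (c), the beta-min condition forces $\xi\supseteq\xi^\star$ on $A(\sigma^\star,\mbf{H}\bm{\alpha}^\star,\bm{\beta}^\star,\dots)$ --- else some $j^\star\in\xi^\star$ with $\beta_{j^\star}=0$ would give $\Vert(\bm{\alpha},\bm{\beta})-(\mbf{H}\bm{\alpha}^\star,\bm{\beta}^\star)\Vert\ge|\beta^\star_{j^\star}|\succ\epsilon_n>\sigma^\star M_2\epsilon_n$ for $n$ large --- so $\widehat{\pi}(A^c\cup\{\xi\not\supseteq\xi^\star\}|\mbf{X},\mbf{Y})=\widehat{\pi}(A^c|\mbf{X},\mbf{Y})$ eventually and the first consequence follows from (a) with $C_3=C_1$. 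For the thresholding consequence, on that event $j\in\xi^\star$ gives $|\beta_j|\ge\min_{j\in\xi^\star}|\beta^\star_j|-\sigma^\star M_2\epsilon_n\succ\epsilon_n$, which dominates the threshold $\sigma\sqrt{|\xi|\log p/n}\le\sigma\sqrt{(M_0+1)s\log p/n}\asymp\epsilon_n$; while for $j\in\xi\setminus\xi^\star$ I would augment the good event with $\max_{j\in\xi\setminus\xi^\star}|\beta_j|=\bigOp(\sqrt{\log p/n})=\smallo(\sqrt{|\xi|\log p/n})$, valid because, conditionally on a sparse model $\xi$, the slab posterior for false-positive coordinates concentrates near the corresponding least-squares coefficients, of order $\Vert\widehat{\mbf{U}}_{\xi\cup\xi^\star}^\T\bm{\varepsilon}\Vert_\infty/n=\bigOp(\sqrt{\log p/n})$ after a union bound over the $\le p^{(M_0+1)s}$ relevant models; hence $\{j:|\beta_j|\ge\sigma\sqrt{|\xi|\log p/n}\}=\xi^\star$ there.

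\textbf{Main obstacle.} The crux is the over-large-model piece of the numerator: proving that the prior's combinatorial model-size penalty, the marginal-likelihood Occam factor, and a uniform over-fitting bound together push $\int_{\{|\xi\setminus\xi^\star|>M_0s\}}R\,d\pi$ below $e^{-C_1's\log p}$ \emph{uniformly} over exponentially many models (including those larger than $n$, where the design is rank-deficient and the marginal likelihood must be handled by the super-exponential decay of the prior weights). This requires a delicate $\chi^2$-type concentration for $\Vert(\mbf{P}_{[\widehat{\mbf{F}},\widehat{\mbf{U}}_\xi]}-\mbf{P}_{[\widehat{\mbf{F}},\widehat{\mbf{U}}_{\xi^\star}]})(\bm{r}+\sigma^\star\bm{\varepsilon})\Vert^2$ simultaneously over all such $\xi$, the slab-tail bound to keep the coefficient integrals finite, and a check that the deterministic bias $\bm{r}$ (which lies in no model's column space) never inflates this reduction beyond $\bigO((|\xi\setminus\xi^\star|+s)\log p)$. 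A close second is constructing the separating tests on the coefficient-error region while keeping the covering net of admissible mean vectors at cardinality $e^{\bigO(s\log p)}$. Everything else is bookkeeping that Assumption \ref{asm:3}'s columnwise $\sqrt{\log p}$ errors aggregate against the $s$-sparse $\bm{\beta}^\star$ to exactly $\sqrt{n}\,\epsilon_n$, so the factor-model estimation costs nothing in the rate.
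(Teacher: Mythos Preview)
Your overall architecture---an evidence lower bound on a small neighborhood, separating tests on the well-behaved part of $A^c$, and the prior alone killing over-large models---is exactly the paper's strategy (it is packaged there as an application of Barron's lemma, Proposition~\ref{prop: barron}, with Lemmas~\ref{lem:Binomial}--\ref{lem:merging} supplying the three ingredients). Two points deserve comment.

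\textbf{The ``main obstacle'' is not one.} For the piece $\{|\xi\setminus\xi^\star|>M_0s\}$ you propose a delicate marginal-likelihood/Occam/uniform-overfitting analysis and flag it as the crux. It is in fact the easy part: conditionally on $(\mbf{F},\mbf{U})$, the denominator of $R$ is the true density of $\mbf{Y}$, so Fubini gives $\mbb{E}^\star\big[\int_{\Theta_{0n}}R\,d\pi\,\big|\,\mbf{F},\mbf{U}\big]=\pi(\Theta_{0n})$, and Markov plus the binomial tail bound $\pi(|\xi|>M_0s)\le e^{-M_0s\log p/2}$ finishes it---no control of marginal likelihoods, no $\chi^2$ concentration over exponentially many models, no worry about models of size exceeding $n$. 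The paper does precisely this (condition~(1) of Proposition~\ref{prop: barron} via Lemma~\ref{lem:Binomial}). Your proposed route might be made to work, but it is far harder than necessary. Relatedly, the paper absorbs the deterministic bias $\bm{r}$ once and for all by a Cauchy--Schwarz change of measure to $\mc{N}(\widehat{\mbf{F}}\mbf{H}\bm{\alpha}^\star+\widehat{\mbf{U}}\bm{\beta}^\star,\sigma^{\star2}\mbf{I})$, paying a factor $e^{\|\bm{r}\|^2/\sigma^{\star2}}\le e^{C_4's\log p}$; this cleanly reduces everything to a fixed-design result (Theorem~\ref{thm:3}) with no bias term, whereas you carry $\bm{r}$ through every step. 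Both are valid, but the change of measure is tidier.

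\textbf{A genuine gap in your derivation of (b) from (a).} You bound $\|\widehat{\mbf{U}}(\bm{\beta}-\bm{\beta}^\star)\|\le\|\widehat{\mbf{U}}_{\xi\cup\xi^\star}\|_{\mathrm{op}}\,\sigma^\star M_2\epsilon_n$ and then assert $\|\widehat{\mbf{U}}_{\xi\cup\xi^\star}\|_{\mathrm{op}}=\bigOp(\sqrt{n})$ ``by the transferred Assumption~\ref{asm:2}.'' But Assumption~\ref{asm:2} bounds $\lambda_{\max}(\mbf{U}_{\xi^\star}^\T\mbf{U}_{\xi^\star}/n)$ only for the \emph{true} support $\xi^\star$; there is no upper sparse-eigenvalue hypothesis for general $\xi$, and columns $\widehat{\mbf{U}}_j$ with $j\notin\xi^\star$ can have arbitrarily large norm under the stated assumptions. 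So (b) does not follow from (a) as written. The paper sidesteps this by proving (b) with a separate projection-based test $\phi_{3n}$ (Lemma~\ref{lem:test}(c)): since $\widehat{\mbf{F}}\widehat{\mbf{F}}^\dagger+\widehat{\mbf{U}}_{\xi\cup\xi^\star}\widehat{\mbf{U}}_{\xi\cup\xi^\star}^\dagger$ is an orthogonal projector of rank $k+|\xi\cup\xi^\star|$, the relevant quadratic forms are exact $\chi^2$ variables and no maximum-eigenvalue control is needed. Your observation that, under the beta-min condition, $A\cap\{\xi\not\supseteq\xi^\star\}=\emptyset$ eventually---so the first display of (c) reduces to (a)---is correct and in fact slicker than the paper's route through the additional test $\phi_{4n}$.
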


Part (a) establishes the convergence rate $\epsilon_n$ of the $\ell_2$-estimation error of $\bm{\alpha}^\star$ (up to a nearly orthogonal transformation $\mbf{H}$) and $\bm{\beta}^\star$, the adaptivity to the unknown sparsity $s$, and the adaptivity to the unknown standard deviation $\sigma^\star$.

Part (b) shows that $\widehat{\mbf{Y}} = \widehat{\mbf{F}}\bm{\alpha} + \widehat{\mbf{U}}\bm{\beta}$ predicts the conditional mean $\mbb{E}[\mbf{Y}|\mbf{F},\mbf{U}] = \mbf{F}\bm{\alpha}^\star \!+\! \mbf{U}\bm{\beta}^\star$ with mean squared error $ \bigOp(\epsilon_n)$ for each single datum instance on average.

The first implication in Part (c) asserts that the pseudo-posterior distribution will select all variables in $\xi^\star$ and at most $M_0s$ other variables, with high probability. In simulation experiments, we observe that the pseudo-posterior distribution overestimates the true support size $s = |\xi^\star|$ by less than $5\%$. The second implication asserts that
$$\widehat{\pi}\left( \left. \left\{j: |\beta_j| \ge \sigma\sqrt{|\xi|\log p/n}\right\} = \xi^\star \right| \mbf{X}, \mbf{Y}\right) \to 1$$
in probability as $n \to \infty$, and therefore provides a variable selection rule. Simply speaking, we can consistently select the true model $\xi^\star$ by thresholding the running coefficients $\beta_j$ at $\sigma\sqrt{|\xi|\log p/n}$. In simulation experiments, the majority of pseudo-posterior samples of parameters hit the true model correctly even if the thresholding rule is not used.

The additional condition that $\min_{j \in \xi^\star} |\beta^\star_j| \succ \epsilon_n$ in part (c) is called ``beta-min condition'' in the literature on Bayesian sparse regression \citep{castillo2015bayesian,song2017nearly}. \citet{narisetty2014bayesian} use another identifiability condition to achieve the model selection consistency. Their condition can be shown slightly stronger than the beta-min condition in presence of the minimum sparse eigenvalue condition. To see this point, one can compare their Condition 4.4 to our equation (10) in the proof of Lemma \ref{lem:test}, part(d).

\section{Factor Model Estimation} \label{sec:4}

This section verifies Assumption \ref{asm:3}, which concerns the estimation errors of factor models under standard assumptions. Following \cite{bickel2008covariance}, we define a uniformity class of positive semi-definite matrices as follows
\begin{align*}
\mathcal{S}_q^+ &= \left\{\mbf{\Sigma} \ge 0: \max_{1 \leq j \leq p} \sum_i |\mbf{\Sigma}_{ij}|^q < m_q(p), \Vert \mbf{\Sigma} \Vert_{\max} < C_0\right\}, ~~~\text{for}~0 \le q < 1,\\
\mathcal{S}_1^+ &= \left\{\mbf{\Sigma} \ge 0: \max_{1 \leq j \leq p} \sum_i |\mbf{\Sigma}_{ij}| < m_1(p) \right\}.
\end{align*}

\begin{assumption}\label{asm:5}
$\{(\bm{f}_i, \bm{u}_i)\}_{1 \le i \le n}$ are identically (not necessarily independently) distributed as $(\bm{f}, \bm{u})$. $\mbb{E}\bm{f} = \mbf{0}$, $\mbb{E}\bm{u} = \mbf{0}$; $\Cov(\bm{f}) = \mbf{I}$, $\Cov(\bm{u}) = \mbf{\Sigma} \in \mc{S}_q^+$ with $m_q(p) = \smallo(\log p)$ for some $0 \le q \le 1$, and $\Cov(\bm{f}, \bm{u}) = \mbf{0}$.
\end{assumption}

\begin{assumption} \label{asm:6}
All entries in the loading matrix $\mbf{B}$ are uniformly bounded, i.e., $\Vert \mbf{B} \Vert_{\max} = \bigO(1)$, and all the eigenvalues of $\mbf{B}^\T\mbf{B}/p$ is strictly bounded away from $0$ and $\infty$.
\end{assumption}

\begin{assumption} \label{asm:7}
The sample covariance matrices of $\mbf{F}$ and $\mbf{U}$ converge to the true covariance matrices at rate $\sqrt{\log p/n}$ in the element-wise maximum norm.
\begin{equation*}
\begin{split}
\Vert \mbf{F}^\T\mbf{F}/n - \mbf{I}\Vert_{\max} &= \bigOp(\sqrt{\log p / n}),\\
\Vert \mbf{U}^\T\mbf{U}/n - \mbf{\Sigma}\Vert_{\max} &= \bigOp(\sqrt{\log p / n}),\\
\Vert \mbf{F}^\T\mbf{U}/n \Vert_{\max} &= \bigOp(\sqrt{\log p / n}).
\end{split}
\end{equation*}
\end{assumption}

In Assumption \ref{asm:5}, $\Cov(\bm{f}) = \mbf{I}$ is made to avoid the non-identifiability issue of $\mbf{B}$ and $\bm{f}$.
If rows $\bm{b}_j, j=1,\dots,p$ of $\mbf{B}$ are $p$ i.i.d. copies of some $k$-dimensional distribution then $\mbf{B}^\T\mbf{B}/p$ converges almost surely to $\Cov(\bm{b}_j)$ as $p \to \infty$ and Assumption \ref{asm:6} holds when $\Cov(\bm{b}_j)$ has eigenvalues bounded away from 0 and $\infty$. Assumptions \ref{asm:5}-\ref{asm:6} together characterize the ``low-rank plus sparse'' structure of the covariance matrix of $\bm{x} = \mbf{B}\bm{f}+\bm{u}$. That is,
$$\Cov(\bm{x}) = \mbf{B}\mbf{B}^\T + \mbf{\Sigma},$$
where the first part $ \mbf{B}\mbf{B}^\T$ is of low rank $k$, and the second part is sparse in the sense that the quantity $\max_{1 \leq j \leq p} \sum_i |\mbf{\Sigma}_{ij}|^q$ for some $q \in [0,1]$ is $\smallo(\log p)$. This decomposition has a ``spike plus non-spike'' structural interpretation as well:  the smallest non-zero eigenvalue of $\mbf{B}\mbf{B}^\T$ is of order $p$, while the largest eigenvalue of $\mbf{\Sigma}$ is of order $\smallo(\log p)$. This eigen-gap plays the key role in estimating $\mbf{F}$ and $\mbf{U}$.

Assumption \ref{asm:7} requires that the sample covariance $\mbf{F}^\T\mbf{F}/n$, $\mbf{U}^\T\mbf{U}/n$ and $\mbf{F}^\T\mbf{U}/n$ converge to their ideal counterparts at an appropriate rate. \citet{kneip2011factor} provided sufficient conditions for it to hold in case that $\{(\bm{f}_i,\bm{u}_i)\}_{1 \le i \le n}$ are i.i.d.. \citet{fan2013large} established the same rate for stationary and weakly-correlated time-series. Our recent work on the concentration inequalities for general Markov chains \citep{jiang2018bernstein} can verify this assumption in case that $\{(\bm{f}_i, \bm{u}_i)\}_{1 \le i \le n}$ are functionals of ergodic Markov chains.

Next theorem summarizes the theoretical results on factor model estimation under Assumptions \ref{asm:5}-\ref{asm:7}. Part (b) of this theorem bounds the difference between column spaces of $\widehat{\mbf{F}}$ and $\mbf{F}$ in terms of principal angles, which is novel from the previous theory in the literature \citep{fan2013large} and may be of independent interest. Parts (c) and (d), which are immediate corollaries of part (b), derive Assumption \ref{asm:3}.

\begin{definition}
The principal angles between two linear spaces spanned by orthonormal column vectors of $\widehat{\mbf{\Psi}}_{n \times k}$ and $\widetilde{\mbf{\Psi}}_{n \times k}$ are defined as
$$\angle (\widehat{\mbf{\Psi}}, \widetilde{\mbf{\Psi}}) = (\arccos(d_1), \dots, \arccos(d_k))^\T,$$
where $d_1,\dots,d_k \in [0,1]$ are the singular values of $\widehat{\mbf{\Psi}}^\T\widetilde{\mbf{\Psi}}$ or $\widetilde{\mbf{\Psi}}^\T\widehat{\mbf{\Psi}}$.
\end{definition}

\begin{theorem} \label{theorem: factor model}
Let $\widetilde{\mbf{F}}$ consist of $\sqrt{n}$-scaled left singular vectors of $\mbf{F}$, which are orthonormal vectors spanning the column space of $\mbf{F}$. Under Assumptions \ref{asm:5}-\ref{asm:7}, the following statements hold.
\begin{enumerate}[label=(\alph*)]
\item Eigenvalue recovery:
$$\Vert \widehat{\mbf{\Lambda}} - \mbf{\Lambda} \Vert_{\max}/p = \bigOp(\sqrt{\log p/n}),~~~\max_{k+1\leq k \leq n} |\widehat{\lambda}_j|/p = \bigOp(\sqrt{\log p/n}).$$
\item Eigenspace recovery:
$$\Vert \sin \angle (\widehat{\mbf{F}}/\sqrt{n}, \widetilde{\mbf{F}}/\sqrt{n}) \Vert = \bigOp(\sqrt{\log p/n}).$$
\item Common factor recovery:
$$\Vert \widehat{\mbf{F}}\mbf{H} - \mbf{F}\Vert_\text{F} = \bigOp(\sqrt{\log p}),$$
for some nearly orthogonal matrix $\mbf{H}_{k \times k}$ with $\Vert \mbf{H}^\T\mbf{H} - \mbf{I} \Vert = \bigOp(\sqrt{\log p/n})$ and $\Vert \mbf{H}\mbf{H}^\T - \mbf{I} \Vert = \bigOp(\sqrt{\log p/n})$.
\item Idiosyncratic component recovery:
$$\max_{1 \leq j \leq p} \Vert \widehat{\mbf{U}}_j - \mbf{U}_j\Vert = \bigOp(\sqrt{\log p}).$$
\end{enumerate}
\end{theorem}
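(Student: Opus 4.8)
\emph{Proof strategy.} The engine is matrix perturbation theory. Write the signal part $\mbf{M} = \mbf{F}\mbf{B}^\T\mbf{B}\mbf{F}^\T/n$, whose column space is exactly that of $\mbf{F}$ and whose $k$ nonzero eigenvalues coincide with those of $(\mbf{B}^\T\mbf{B})^{1/2}(\mbf{F}^\T\mbf{F}/n)(\mbf{B}^\T\mbf{B})^{1/2}$; by Weyl these differ from $\lambda_j(\mbf{B}^\T\mbf{B}) \asymp p$ (Assumption~\ref{asm:6}) by at most $\Vert \mbf{B}\Vert^2\Vert \mbf{F}^\T\mbf{F}/n - \mbf{I}\Vert = \bigOp(p\sqrt{\log p/n})$ (Assumption~\ref{asm:7}, $k$ fixed), so $\mbf{M}$ has a rank-$k$ ``spike'' of order $p$. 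This eigen-gap is what makes everything work. It is convenient to set $\mbf{S} = \widetilde{\mbf{F}}^\T\mbf{F}/n$, so that $\mbf{F} = \widetilde{\mbf{F}}\mbf{S}$ and $\mbf{S}^\T\mbf{S} = \mbf{F}^\T\mbf{F}/n = \mbf{I} + \bigOp(\sqrt{\log p/n})$; hence $\mbf{S}$ and $\mbf{S}^{-1}$ have operator norm $\bigOp(1)$ and $\Vert \mbf{S}\mbf{S}^\T - \mbf{I}\Vert = \bigOp(\sqrt{\log p/n})$. Write $P = \widetilde{\mbf{F}}\widetilde{\mbf{F}}^\T/n$ for the projector onto the column space of $\mbf{F}$.

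\emph{Part (a).} Pass to the $p\times p$ Gram matrix $\mbf{X}^\T\mbf{X}/n$, whose largest $n$ eigenvalues are exactly $\widehat\lambda_1,\dots,\widehat\lambda_n$, and expand
$$\frac{\mbf{X}^\T\mbf{X}}{n} - (\mbf{B}\mbf{B}^\T + \mbf{\Sigma}) = \mbf{B}\Bigl(\frac{\mbf{F}^\T\mbf{F}}{n} - \mbf{I}\Bigr)\mbf{B}^\T + \mbf{B}\frac{\mbf{F}^\T\mbf{U}}{n} + \Bigl(\mbf{B}\frac{\mbf{F}^\T\mbf{U}}{n}\Bigr)^\T + \Bigl(\frac{\mbf{U}^\T\mbf{U}}{n} - \mbf{\Sigma}\Bigr).$$
By Assumption~\ref{asm:7} (with $k$ fixed) the first term is $\bigOp(\Vert \mbf{B}\Vert^2\sqrt{\log p/n}) = \bigOp(p\sqrt{\log p/n})$; each cross term is $\bigOp(\Vert \mbf{B}\Vert\,\Vert \mbf{F}^\T\mbf{U}/n\Vert_\text{F}) = \bigOp(\sqrt p\cdot\sqrt{p\log p/n}) = \bigOp(p\sqrt{\log p/n})$ after the bound $\Vert \mbf{F}^\T\mbf{U}/n\Vert_\text{F} \le \sqrt{kp}\,\Vert \mbf{F}^\T\mbf{U}/n\Vert_{\max}$; and the last term is $\le p\,\Vert \mbf{U}^\T\mbf{U}/n - \mbf{\Sigma}\Vert_{\max} = \bigOp(p\sqrt{\log p/n})$. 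Weyl's inequality then gives $|\widehat\lambda_j - \lambda_j(\mbf{B}\mbf{B}^\T + \mbf{\Sigma})| = \bigOp(p\sqrt{\log p/n})$ for every $j$; since $\mbf{B}\mbf{B}^\T$ has rank $k$ and $\Vert \mbf{\Sigma}\Vert \le C_0^{1-q}m_q(p) = \smallo(\log p)$ (Assumption~\ref{asm:5}), this proves both displays in part (a), with $\mbf{\Lambda}$ the diagonal of the top $k$ eigenvalues of $\mbf{B}\mbf{B}^\T + \mbf{\Sigma}$.

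\emph{Part (b), the heart of the matter.} Since $P$ projects onto the top-$k$ eigenspace of $\mbf{M}$, whose $k$-th eigenvalue is $\asymp p$ while $\widehat\lambda_{k+1} = \bigOp(p\sqrt{\log p/n})$ is of strictly smaller order by part (a), the Davis--Kahan theorem gives, with probability tending to one,
$$\Vert \sin\angle(\widehat{\mbf{F}}/\sqrt n, \widetilde{\mbf{F}}/\sqrt n)\Vert \le \frac{C}{p}\,\Bigl\Vert \Bigl(\frac{\mbf{X}\mbf{X}^\T}{n} - \mbf{M}\Bigr)\frac{\widetilde{\mbf{F}}}{\sqrt n}\Bigr\Vert, \qquad \frac{\mbf{X}\mbf{X}^\T}{n} - \mbf{M} = \frac{\mbf{F}\mbf{B}^\T\mbf{U}^\T}{n} + \frac{\mbf{U}\mbf{B}\mbf{F}^\T}{n} + \frac{\mbf{U}\mbf{U}^\T}{n}.$$
The first summand contributes $(\mbf{F}\mbf{B}^\T/n)(\mbf{U}^\T\widetilde{\mbf{F}})(1/\sqrt n)$; using $\Vert \mbf{U}^\T\widetilde{\mbf{F}}\Vert \le \Vert \mbf{S}^{-1}\Vert\,\Vert \mbf{U}^\T\mbf{F}\Vert_\text{F} = \bigOp(\sqrt{np\log p})$ and $\Vert \mbf{F}\mbf{B}^\T/n\Vert = \bigOp(\sqrt{p/n})$, it is $\bigOp(p\sqrt{\log p/n})$. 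The third summand contributes at most $\Vert \mbf{U}\mbf{U}^\T/n\Vert = \Vert \mbf{U}^\T\mbf{U}/n\Vert \le \Vert \mbf{\Sigma}\Vert + p\,\Vert \mbf{U}^\T\mbf{U}/n - \mbf{\Sigma}\Vert_{\max} = \bigOp(p\sqrt{\log p/n})$. Both, divided by $p$, yield the claimed rate $\sqrt{\log p/n}$. The obstacle is the second summand, $(\mbf{U}\mbf{B}\mbf{F}^\T/n)(\widetilde{\mbf{F}}/\sqrt n) = \mbf{U}\mbf{B}\mbf{S}^\T/\sqrt n$, whose norm is $\bigOp(1)\cdot n^{-1/2}\Vert \mbf{U}\mbf{B}\Vert = \bigOp(1)\cdot\bigl(\Vert \mbf{B}^\T(\mbf{U}^\T\mbf{U}/n)\mbf{B}\Vert\bigr)^{1/2}$. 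The bare entrywise bound $\Vert \mbf{B}^\T(\mbf{U}^\T\mbf{U}/n - \mbf{\Sigma})\mbf{B}\Vert \le \Vert \mbf{B}\Vert_{\max}^2\sum_{ij}|(\mbf{U}^\T\mbf{U}/n - \mbf{\Sigma})_{ij}| = \bigOp(p^2\sqrt{\log p/n})$ is too crude and gives only the slower rate $(\log p/n)^{1/4}$. I would get past this by observing that $\mbf{B}^\T(\mbf{U}^\T\mbf{U}/n)\mbf{B} = n^{-1}\sum_{i=1}^n(\mbf{B}^\T\bm{u}_i)(\mbf{B}^\T\bm{u}_i)^\T$ is a \emph{fixed}-dimensional ($k\times k$) sample covariance of the aggregated variables $\mbf{B}^\T\bm{u}_i$, whose population version $\mbf{B}^\T\mbf{\Sigma}\mbf{B}$ has operator norm $\smallo(p\log p)$ (bounded diagonal plus $\bigO(p\,m_q(p))$ off-diagonal mass by Assumptions~\ref{asm:5}--\ref{asm:6}); under the weak-dependence and moment conditions behind Assumption~\ref{asm:7} (the i.i.d., weakly-correlated time-series, or ergodic Markov-chain settings referenced after that assumption) this low-dimensional sample covariance concentrates around its mean at rate $\sqrt{\log p/n}$, so $\Vert \mbf{B}^\T(\mbf{U}^\T\mbf{U}/n)\mbf{B}\Vert = \bigOp(p\log p)$ and this term contributes $\bigOp(\sqrt{\log p/p}) = \smallo(\sqrt{\log p/n})$ since $p \succ n$. (Residuals that instead involve the estimated eigenvectors, such as $\Vert \mbf{U}^\T\widehat{\mbf{F}}/\sqrt n\Vert$ if one uses the Wedin form, can be tightened by bootstrapping the crude rate; but the spectral control of $\mbf{B}^\T\mbf{U}^\T\mbf{U}\mbf{B}$ is the one genuinely new estimate.)

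\emph{Parts (c) and (d).} These follow from part (b) by linear algebra. For (c): part (b) and $\min_{\mbf{O}}\Vert \widehat{\mbf{F}}/\sqrt n - (\widetilde{\mbf{F}}/\sqrt n)\mbf{O}\Vert_\text{F} \le \sqrt{2k}\,\Vert \sin\angle(\widehat{\mbf{F}}/\sqrt n, \widetilde{\mbf{F}}/\sqrt n)\Vert$ (minimum over $k\times k$ orthogonal $\mbf{O}$) produce an $\mbf{O}$ with $\Vert \widehat{\mbf{F}} - \widetilde{\mbf{F}}\mbf{O}\Vert_\text{F} = \bigOp(\sqrt{\log p})$; put $\mbf{H} = \mbf{O}^\T\mbf{S}$, so $\mbf{H}^\T\mbf{H} = \mbf{S}^\T\mbf{S}$ and $\mbf{H}\mbf{H}^\T = \mbf{O}^\T(\mbf{S}\mbf{S}^\T)\mbf{O}$ are both $\mbf{I} + \bigOp(\sqrt{\log p/n})$, and $\Vert \widehat{\mbf{F}}\mbf{H} - \mbf{F}\Vert_\text{F} = \Vert (\widehat{\mbf{F}}\mbf{O}^\T - \widetilde{\mbf{F}})\mbf{S}\Vert_\text{F} \le \Vert \mbf{S}\Vert\,\Vert \widehat{\mbf{F}} - \widetilde{\mbf{F}}\mbf{O}\Vert_\text{F} = \bigOp(\sqrt{\log p})$, which bounds each column and so gives the first display of Assumption~\ref{asm:3}. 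For (d): since $\mbf{X} = \mbf{F}\mbf{B}^\T + \mbf{U}$ and $P\mbf{F} = \mbf{F}$ we have $\mbf{F}\mbf{B}^\T = P\mbf{X} - P\mbf{U}$, hence $\widehat{\mbf{U}} - \mbf{U} = (\mbf{I} - \widehat{\mbf{F}}\widehat{\mbf{F}}^\T/n)\mbf{X} - \mbf{X} + \mbf{F}\mbf{B}^\T = (P - \widehat{\mbf{F}}\widehat{\mbf{F}}^\T/n)\mbf{X} - P\mbf{U}$. Column by column, $\Vert (P - \widehat{\mbf{F}}\widehat{\mbf{F}}^\T/n)\mbf{X}_j\Vert \le \Vert P - \widehat{\mbf{F}}\widehat{\mbf{F}}^\T/n\Vert\,\Vert \mbf{X}_j\Vert = \bigOp(\sqrt{\log p/n})\cdot\bigOp(\sqrt n)$ — the first factor is $\bigOp(\sqrt{\log p/n})$ by part (b), the second because $\max_j(\mbf{U}^\T\mbf{U}/n)_{jj} = \bigOp(1)$ by Assumptions~\ref{asm:5} and~\ref{asm:7} and $\Vert \mbf{F}\Vert = \bigOp(\sqrt n)$ — while $\Vert P\mbf{U}_j\Vert = \Vert \widetilde{\mbf{F}}^\T\mbf{U}_j/\sqrt n\Vert \le \Vert \mbf{S}^{-1}\Vert\,\Vert \mbf{F}^\T\mbf{U}_j/\sqrt n\Vert = \bigOp(\sqrt{\log p})$ from $\Vert \mbf{F}^\T\mbf{U}/n\Vert_{\max} = \bigOp(\sqrt{\log p/n})$ (with $k$ fixed). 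Taking the maximum over $j \le p$ gives part (d), hence the second display of Assumption~\ref{asm:3}.
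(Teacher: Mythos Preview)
Your overall architecture matches the paper's: Weyl for part (a), Davis--Kahan for part (b), and parts (c) and (d) as linear-algebraic corollaries. The organization of (c) and (d) is essentially the paper's (the paper writes your $\mbf{S}$ as $\mbf{D}\mbf{O}_0^\T$ from the SVD of $\mbf{F}/\sqrt{n}$, and for (d) decomposes $\widehat{\mbf{U}}_j - \mbf{U}_j$ via $\mbf{H}$ rather than via the projector difference $P - \widehat{\mbf{F}}\widehat{\mbf{F}}^\T/n$, but the bounds are term-by-term equivalent).

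There is one genuine weak spot, in part (b). You correctly identify that the bottleneck is $\Vert\mbf{U}\mbf{B}\Vert/\sqrt{n}$, and you correctly state the target $\Vert\mbf{B}^\T(\mbf{U}^\T\mbf{U}/n)\mbf{B}\Vert = \bigOp(p\log p)$. But your route to it---``under the weak-dependence and moment conditions behind Assumption~\ref{asm:7} \dots\ this low-dimensional sample covariance concentrates around its mean at rate $\sqrt{\log p/n}$''---appeals to structure that is not part of the stated high-level Assumption~\ref{asm:7}; that assumption only delivers elementwise rates for $\mbf{F}^\T\mbf{F}/n$, $\mbf{U}^\T\mbf{U}/n$, $\mbf{F}^\T\mbf{U}/n$, not for the aggregated $k\times k$ matrix $\mbf{B}^\T\mbf{U}^\T\mbf{U}\mbf{B}/n$. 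The paper closes this with a much simpler device that stays strictly inside Assumptions~\ref{asm:5}--\ref{asm:6}: since $\mbf{B}^\T(\mbf{U}^\T\mbf{U}/n)\mbf{B}$ is positive semidefinite, its operator norm is bounded by its trace $\Vert\mbf{U}\mbf{B}\Vert_\text{F}^2/n$, and
\[
\mbb{E}\bigl[\Vert\mbf{U}\mbf{B}\Vert_\text{F}^2/n\bigr] \;=\; \trace(\mbf{B}^\T\mbf{\Sigma}\mbf{B}) \;\le\; k\,p\,\Vert\mbf{B}\Vert_{\max}^2\,\Vert\mbf{\Sigma}\Vert \;=\; \smallo(p\log p),
\]
so Markov's inequality alone gives $\Vert\mbf{U}\mbf{B}\Vert_\text{F} = \bigOp(\sqrt{np\log p})$ (this is the paper's Lemma~\ref{lem:UB}). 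Then $\Vert\mbf{U}\mbf{B}\mbf{S}^\T/\sqrt{n}\Vert \le \Vert\mbf{S}\Vert\,\Vert\mbf{U}\mbf{B}\Vert_\text{F}/\sqrt{n} = \bigOp(\sqrt{p\log p})$, and after dividing by the eigengap $\asymp p$ you get the contribution $\bigOp(\sqrt{\log p/p})$ you wanted, with no concentration beyond identically distributed rows. With this replacement your argument is complete.

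A minor stylistic difference: you apply the one-sided Davis--Kahan bound $\Vert(\mbf{X}\mbf{X}^\T/n - \mbf{M})\widetilde{\mbf{F}}/\sqrt{n}\Vert$, working directly with the exact top-$k$ eigenspace $\widetilde{\mbf{F}}/\sqrt{n}$ of $\mbf{M}$. The paper instead proves a generalized Davis--Kahan (Proposition~\ref{prop:davis-kahan}) allowing approximate eigenpairs, applies it to the non-orthonormal $\mbf{F}\mbf{R}/\sqrt{n}$ with a correction $\mbf{\Delta}$, and bounds the full operator norm $\Vert\mbf{X}\mbf{X}^\T/n - \mbf{M}\Vert$. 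Both routes land on the same numerator $\bigOp(p\sqrt{\log p/n}) + \bigOp(\sqrt{p\log p})$; yours is a bit more direct.
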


\section{Simulation Experiments}\label{sec:5}
This section reports simulation results. As a basic case, we set $(n,p,s,k) = (200,500,5,3)$, and generate $\bm{f}_i \overset{i.i.d.}{\sim} \mc{N}(\mbf{0},\mbf{I}_{k \times k})$, $\bm{u}_i \overset{i.i.d.}{\sim} \mc{N}(\mbf{0},\mbf{I}_{p \times p})$ and $\bm{b}_j \overset{i.i.d.}{\sim} \text{Uniform}[-1,+1]^k$. We set true parameters $\bm{\alpha}^\star = (0.8,1.0,1.2)$, $\xi^\star = \{1,2,3,4,5\}$, $\bm{\beta}^\star_{\xi^\star} = (0.3,0.3,0.3,0.3,0.3)^\T$, and $\sigma^\star = 0.5$.

For prior \eqref{prior}, we choose the inverse-gamma density $g$ with shape $1$ and scale $1$, the Gaussian density $h(z) = e^{-z^2/2}/\sqrt{2\pi}$ and set hyperparameters $s_0 = 1$ and $\tau_j = \Vert \widehat{\mbf{U}}_j \Vert/\sqrt{n}$. Starting from $(\sigma, \bm{\alpha}, \bm{\beta}) = (1.0, \mbf{0}, \mbf{0})$, we iterate a Gibbs sampler $T=20$ times and drop the first $T/2 = 10$ iterations as the burn-in period. The implementation details of the Gibbs sampler is given in the appendix.

The pseudo-posterior distribution are evaluated in terms of five measures. The posterior mean of $\bm{\beta}$ is compared to $\bm{\beta}^\star$ in terms of $\ell_2$ estimation error. The model selection rate and the sure screening rate are also computed. The former is the portion of the posterior samples that select the true model, i.e., $\xi = \xi^\star$, and the latter is the portion of the posterior samples that select all sparse coefficients, i.e., $\xi \supseteq \xi^\star$. To evaluate the adaptivity to unknown sparsity $s$, we report the average model size $|\xi|$. To evaluate the adaptivity to unknown standard deviation $\sigma^\star$, the posterior mean of $\sigma^2$ is compared to $\sigma^{\star 2}$ in terms of relative estimation error. These measures are evaluated over 100 replicates of the datasets, and their averages are reported.

For the comparison purpose, the factor-adjusted lasso method is implemented by using R package \textit{glmnet} \citep{friedman2010regularization}. The $\ell_1$-penalty hyperparameters of the lasso methods are tuned by 10-fold cross-validation. Since the generic Bayesian / lasso with $\mbf{X}$ as covariates can be seen as the factor-adjusted Bayesian / lasso with the underestimate $\widehat{k} = 0$ of $k = 3$, we also include them in the comparison.

\subsection{Comparison of four methods, and insensitivity to misestimates of $k$}
Table \ref{tab:1} summarizes the five measures of four methods in the basic case. Results show that the factor-adjusted Bayesian method outperforms the factor-adjusted lasso method in the tasks of $\bm{\beta}$ estimation and model selection. The poor performance of the factor-adjusted lasso method may partly result from the less satisfactory hyperparameter tuning procedure implemented in the R package \textit{glmnet}.

We feed the factor-adjusted methods with the various estimates $\widehat{k} = 3,6,9,12$, and observe that their performances are insensitive to the overestimate of $k$ (Table \ref{tab:1}). In case that there is no correlation among $\mbf{X}$, i.e., $k=0$, the factor-adjusted Bayesian method performs slightly worse than the generic Bayesian method (Table \ref{tab:2}).

\begin{table}[h!]
\center
\begin{tabular}{c | c |c c c| c}
\hline
Method & \thead{$\bm{\beta}$ estimation\\($\ell_2$ error)} & \thead{model \\ selection \\rate} & \thead{sure \\ screening \\ rate} & \thead{average \\ model \\ size} & \thead{$\sigma^2$ estimation\\ (relative error)}\\
\hline
generic Bayes, $\widehat{k} = 0$ & 1.536 & 0.0\% & 100.0\% & 15.92 & 5.940\\
Factor-adjusted Bayes, $\widehat{k} = 3$ & 0.124 & 88.2\% & 100.0\% & 5.13 & 2.057\\
Factor-adjusted Bayes, $\widehat{k} = 6$ & 0.124 & 87.0\% & 100.0\% & 5.14 & 2.058\\
Factor-adjusted Bayes, $\widehat{k} = 9$ & 0.130 & 86.4\% & 100.0\% & 5.14 & 2.057\\
Factor-adjusted Bayes, $\widehat{k} = 12$ & 0.133 & 86.1\% & 100.0\% & 5.14 & 2.069\\
\hline
generic lasso, $\widehat{k} = 0$ & 1.189 &  0\% & 100\% & 92.57 & 3.187\\
Factor-adjusted lasso, $\widehat{k} = 3$ & 0.460  & 27\%  & 100\% & 21.20 & 1.899\\
Factor-adjusted lasso, $\widehat{k} = 6$ & 0.463  & 25\% &  100\% & 21.56 & 1.865\\
Factor-adjusted lasso, $\widehat{k} = 9$ & 0.467  & 27\% & 100\% & 26.07 & 1.787\\
Factor-adjusted lasso, $\widehat{k} = 12$ & 0.466 & 24\% & 100\% & 29.21 & 1.657\\
\hline
\end{tabular}
\caption{Simulation results in the basic case with $k=3$.}
\label{tab:1}
\end{table}

\begin{table}[h!]
\center
\begin{tabular}{c | c |c c c| c}
\hline
Method & \thead{$\bm{\beta}$ estimation\\($\ell_2$ error)} & \thead{model \\ selection \\rate} & \thead{sure \\ screening \\ rate} & \thead{average \\ model \\ size} & \thead{$\sigma^2$ estimation\\ (relative error)}\\
\hline
generic Bayes, $\widehat{k} = 0$ & 0.092  & 90.5\%  & 100.0\%       & 5.10 & 0.881\\
Factor-adjusted Bayes, $\widehat{k} = 3$ & 0.095  & 87.9\% & 100.0\% & 5.13 & 0.906\\
Factor-adjusted Bayes, $\widehat{k} = 6$ & 0.097  & 88.4\% & 100.0\% & 5.12 & 0.914\\
Factor-adjusted Bayes, $\widehat{k} = 9$ & 0.098 & 88.5\% & 100.0\% & 5.12 & 0.932\\
Factor-adjusted Bayes, $\widehat{k} = 12$ & 0.102 & 88.9\% & 100.0\% & 5.12 & 0.968\\
\hline
generic lasso, $\widehat{k} = 0$ & 0.495 &  53\% & 100\% & 11.72 & 1.302\\
Factor-adjusted lasso, $\widehat{k} = 3$ & 0.498  & 61\%  & 100\% & 11.98 & 1.248\\
Factor-adjusted lasso, $\widehat{k} = 6$ & 0.500  & 56\% &  100\% & 13.28 & 1.279\\
Factor-adjusted lasso, $\widehat{k} = 9$ & 0.481  & 56\% & 100\% & 12.48 & 1.141\\
Factor-adjusted lasso, $\widehat{k} = 12$ & 0.487 & 58\% & 100\% & 13.62 & 1.114\\
\hline
\end{tabular}
\caption{Simulation results in no correlation case with $k=0$.}
\label{tab:2}
\end{table}

We emphasize that the meaning of the model selection rate for the Bayesian methods are slightly different from that for the frequentist methods. For example, 50\% model selection rate given by a frequentist method means that it select the true sparse model in 50 out of 100 replicates of the dataset. In contrast, 90\% model selection rate given by a Bayesian method means that every 9 of 10 posterior samples of parameters hit the true sparse model in a single replicate of the dataset on average. In the simulation experiments reported by Tables \ref{tab:1}-\ref{tab:2}, at least every 7 of 10 pseudo-posterior samples obtained by our method hit the true sparse model in each of 100 replicates of the dataset.

\subsection{Scalability as $n,p,s$ increase}
We vary the sample size $n$, the dimensionality $p$ and the sparsity $s$ in the basic case, and test the scalability of the proposed methodology.

In Figure \ref{fig:1}(a), we fix all parameters in the basic case but vary $n = 100$, $150$, $200$, $250$, $300$, $350$. In Figure \ref{fig:1}(b), we fix all parameters in the basic case but vary $p = 200$, $300$, $400$, $500$, $600$, $700$. In Figure \ref{fig:1}(c), we fix all parameters in the basic case but vary $s = 1$, $3$, $5$, $7$, $9$, $11$, $13$, $15$. For factor-adjusted methods, $\widehat{k} = k = 3$ are used.

We observe that our method outperforms the other three methods in terms of $\beta$ estimation error and model selection rate under each combination of $(n,p,s)$, and achieves comparable relative error of $\sigma^2$ to the factor-adjusted lasso method.

\begin{figure}[h!]
    \centering
    \begin{subfigure}[b]{\textwidth}
    	\includegraphics[width=\textwidth]{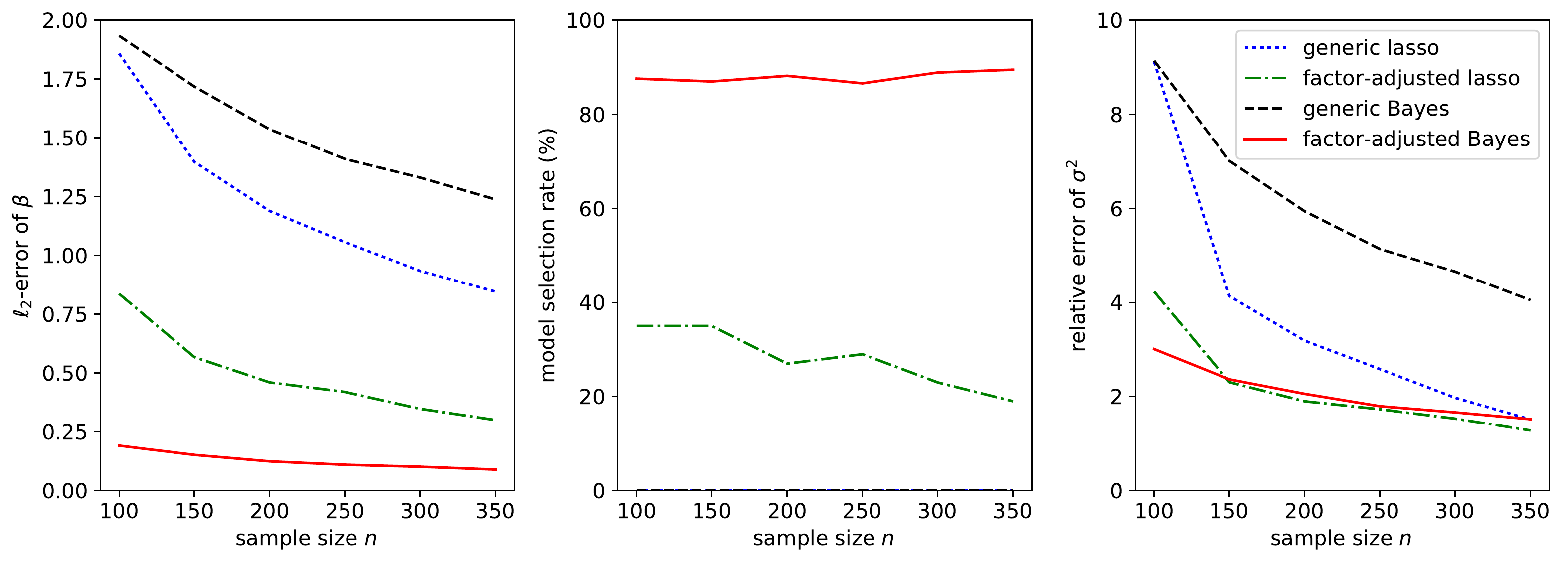}
    	\caption{}
    \end{subfigure}
    ~
    \begin{subfigure}{\textwidth}
    	\includegraphics[width=\textwidth]{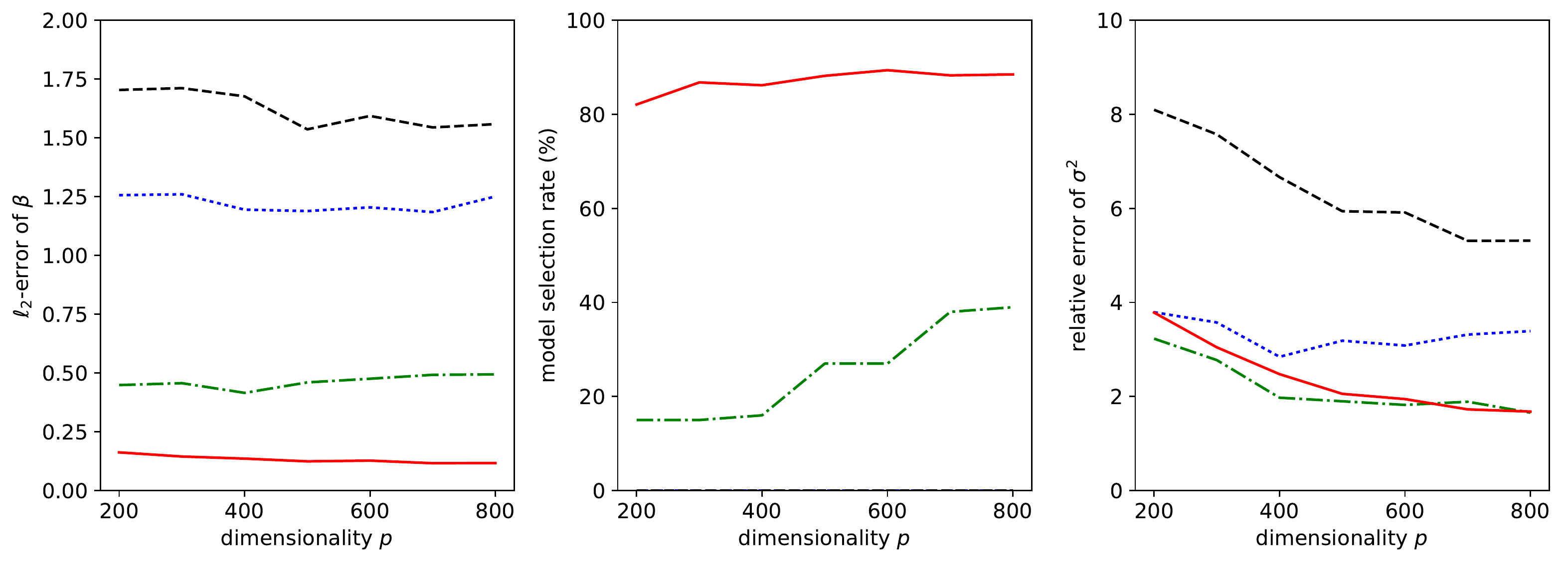}
    	\caption{}
    \end{subfigure}
    ~
    \begin{subfigure}{\textwidth}
    	\includegraphics[width=\textwidth]{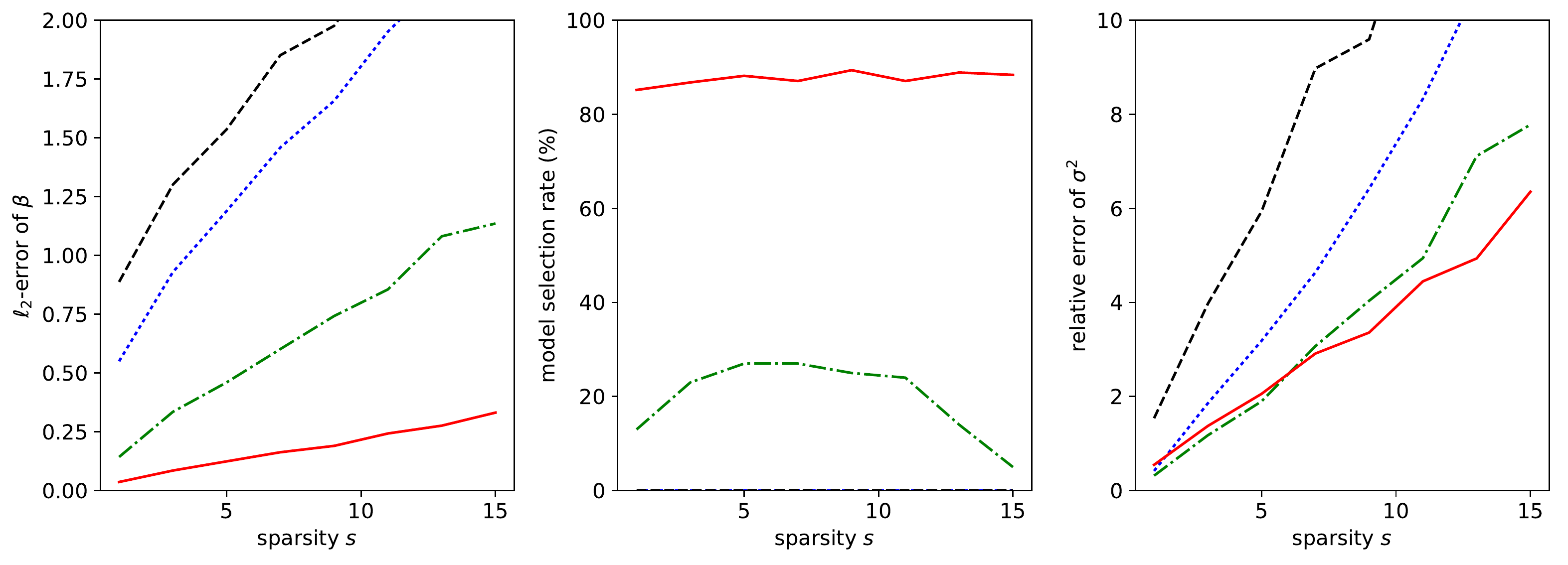}
    	\caption{}
    \end{subfigure}
    \caption{$\bm{\beta}$ estimation error (left), model selection rate (middle) and $\sigma^2$ estimation error (right) versus (a) sample size $n$, (b) dimensionality $p$ and (c) sparsity $s$. Factor-adjusted methods use $\widehat{k} = k = 3$.}
    \label{fig:1}
\end{figure}

\subsection{Estimating the standard regression model}

Recall that, when $\mbf{X}$ admits a factor structure \eqref{model3}, the standard regression model \eqref{model1} is a special case of the factor-adjusted regression model \eqref{model4} with the parameter constraint $\bm{\alpha} = \mbf{B}^\T\bm{\beta}$. We expect that the factor-adjusted Bayesian method solves model \eqref{model1} as well. To verify this expectation, we set $\bm{\alpha}^\star = \mbf{B}^\T\bm{\beta}^\star$ (or equivalently $\mbf{Y} = \mbf{X}\bm{\beta}^\star + \sigma^\star\bm{\varepsilon}$), and test four methods on simulation datasets.

\begin{table}[h!]
\center
\begin{tabular}{c | c |c c c| c}
\hline
Method & \thead{$\bm{\beta}$ estimation\\($\ell_2$ error)} & \thead{model \\ selection \\rate} & \thead{sure \\ screening \\ rate} & \thead{average \\ model \\ size} & \thead{$\sigma^2$ estimation\\ (relative error)}\\
\hline
generic Bayes, $\widehat{k} = 0$ & 0.070 & 91.5\%  & 100.0\%       & 5.09 & 0.913\\
Factor-adjusted Bayes, $\widehat{k} = 3$ & 0.090  & 91.1\% & 100.0\% & 5.09 & 1.690\\
Factor-adjusted Bayes, $\widehat{k} = 6$ & 0.091  & 90.7\% & 100.0\% & 5.10 & 1.715\\
Factor-adjusted Bayes, $\widehat{k} = 9$ & 0.093 & 90.4\% & 100.0\% & 5.10 & 1.733\\
Factor-adjusted Bayes, $\widehat{k} = 12$ & 0.095 & 89.8\% & 100.0\% & 5.11 & 1.763\\
\hline
generic lasso, $\widehat{k} = 0$ & 0.734 &  13\% & 100\% & 10.18 & 3.266\\
Factor-adjusted lasso, $\widehat{k} = 3$ & 0.454  & 53\%  & 100\% & 15.17 & 1.125\\
Factor-adjusted lasso, $\widehat{k} = 6$ & 0.471  & 57\% &  100\% & 14.96 & 1.193\\
Factor-adjusted lasso, $\widehat{k} = 9$ & 0.465  & 48\% & 100\% & 16.65 & 1.139\\
Factor-adjusted lasso, $\widehat{k} = 12$ & 0.492 & 55\% & 100\% & 18.06 & 1.213\\
\hline
\end{tabular}
\caption{experimental results on model \eqref{model1}.}
\label{tab:3}
\end{table}

We see that the factor-adjusted Bayesian method does solve model \eqref{model1}. Interestingly, while the factor adjustment added to the lasso method significantly increases the model selection rate from 13\% to roughly 50\%, the generic Bayesian method works comparably well or even better than the factor-adjusted Bayesian method. We will discuss this phenomenon in the discussion section.

\section{Predicting U.S. Bond Risk Premia} \label{sec:6}
This section applies our method to predict U.S. bond risk premia with a large panel of macroeconomic variables. The response variables are monthly U.S. bond risk premia with maturity of $m = 2, 3, 4, 5$ years spanning the period from January, 1964 to December, 2003 \citep{ludvigson2009macro}. The $m$-year bond risk premium at period $i+1$ is defined as the (log) holding return from buying an $m$-year bond at period $i$ and selling it as an $(m-1)$-year boud at period $i+1$, excessing the (log) return on one-year bond bought at period $i$. The covariates are $p = 131$ macroeconomic variables collected in the FRED-MD database \citep{mccracken2016fred} during the same period.

The $p = 131$ covariates over 480 months are strongly correlated. The scree plot of PCA of these covariates (Figure \ref{fig:2}) shows that the first principal component accounts for 55.9\% of the total variance, and that the first 5 principal components account for 89.7\% of the total variance.

\begin{figure}[ht!]
    \centering
    \includegraphics[width=0.8\textwidth]{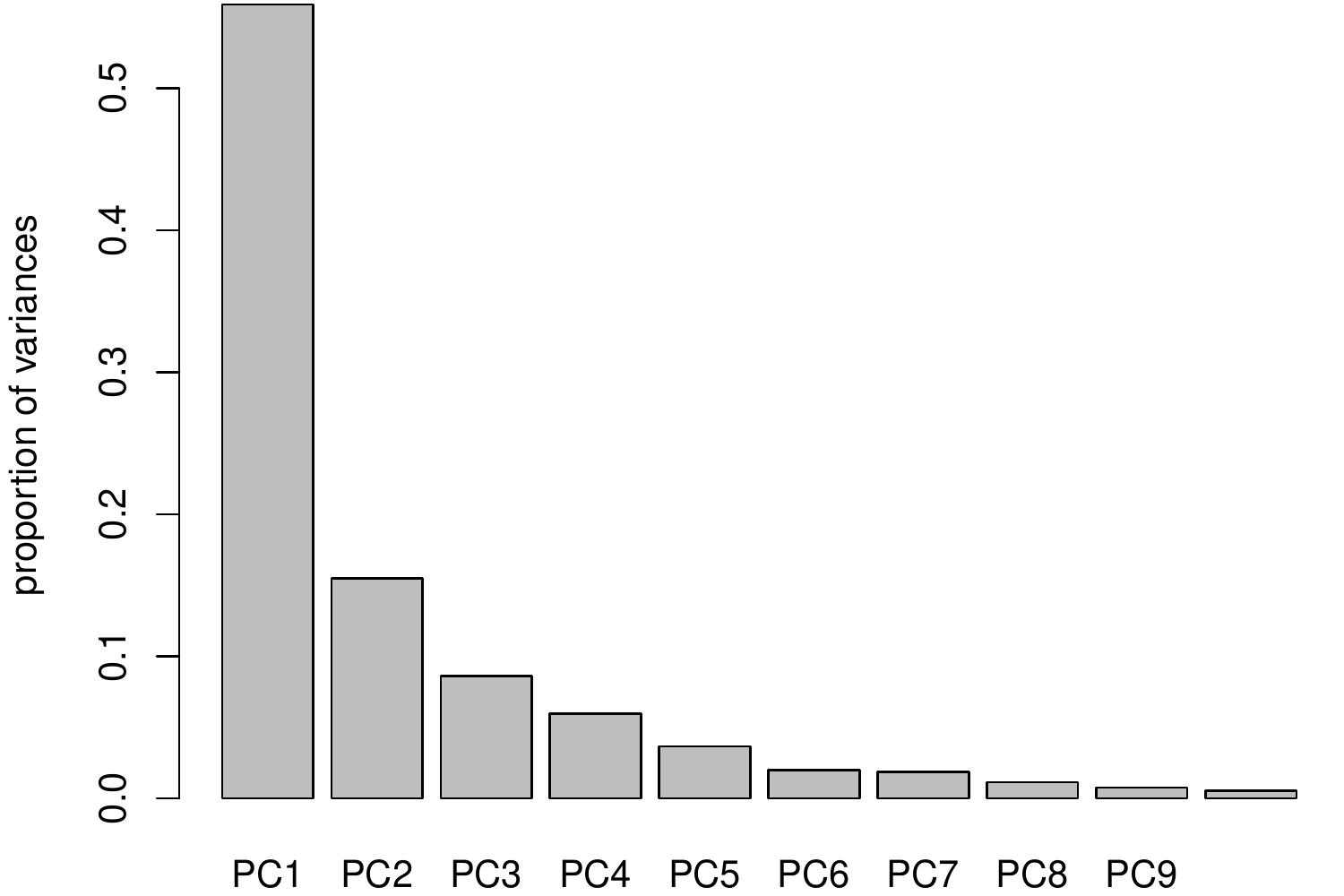}
    \caption{Proportion of variances explained by the first 10 principal components.}
    \label{fig:2}
\end{figure}

We consider the rolling window regression and next value prediction. Specifically, we regress a U.S. bond risk premium on the macroeconomic variables in the last month. For each time window of size $n=100$ ahead of month $t = n+2,\dots,480$, we fit
$$y_i = f(\bm{x}_{i-1}) + \sigma \varepsilon_i,~~~i = t-n, \dots, t-1,$$
and do out-of-sample prediction
$$\widehat{y}_t = \widehat{f}(\bm{x}_{t-1}).$$
The regression function $f$ is fitted as $\widehat{f}$ by one of the generic lasso method, the factor-adjusted lasso method, the generic Bayesian method, the factor-adjusted Bayesian method and the principal component regression (PCR) method \citep{ludvigson2009macro}. For the factor-adjusted methods, the number of common factors $k$ is estimated by \eqref{khat}. For the Bayesian methods, we set $s_0 = 10$ in prior \eqref{prior}. For PCR, the top eight principal components are included in the regression model in a similar vein to \citep{ludvigson2009macro}. The R package \textit{pls} \citep{wehrens2007pls} is used for implementation of PCR.

The prediction performance is evaluated by the out-of-sample $R^2$, which is computed as follows.
$$R^2 = 1 - \frac{\sum_{t=n+2}^{480} (\widehat{y}_t - y_t)^2}{\sum_{t=n+2}^{480} (\bar{y}_t - y_t)^2},$$
where $y_t$ is one of two-year, three-year, four-year and five-year U.S. bond risk premia, $\widehat{y}_t$ is the prediction of $y_t$ by one of five methods in comparison, and $\bar{y}_t$ is the average of $\{y_{t-n},\dots,y_{t-1}\}$.

Table \ref{tab:4} summarizes the out-of-sample $R^2$ five methods achieve on this task. Table \ref{tab:5} reports the average size of the sparse models they select. We observe that the factor-adjusted Bayesian method together with the factor-adjusted Bayesian method achieve higher out-of-sample $R^2$ than other methods. But the factor-adjusted Bayesian method select much sparser models than the factor-adjusted lasso method.

\begin{table}[h!]
\center
\begin{tabular}{c | c c c c }
\hline
Method & 2-yr bond & 3-yr bond & 4-yr bond & 5-yr bond\\
\hline
PCR & 0.646 & 0.603 & 0.568 & 0.540\\
generic Bayes & 0.765 & 0.734 & 0.722 & 0.696\\
factor-adjusted Bayes & 0.775 & 0.753 & 0.747 & 0.726\\
generic lasso & 0.719 & 0.717 & 0.701 & 0.688\\ 
factor-adjusted lasso & 0.766 & 0.764 & 0.746 & 0.719\\
\hline
\end{tabular}
\caption{Out-of-sample $R^2$ of five methods predicting U.S. bond risk premia.}
\label{tab:4}
\end{table}

\begin{table}[h!]
\center
\begin{tabular}{c | c c c c }
\hline
Method & 2-yr bond & 3-yr bond & 4-yr bond & 5-yr bond\\
\hline
generic Bayes & 12.97 & 12.97 & 13.13 & 13.05\\
factor-adjusted Bayes & 11.04 & 11.39 & 11.63 & 11.41\\
generic lasso & 24.06 & 24.25 & 25.62 & 25.71\\
factor-adjusted lasso & 34.46 & 35.12 & 36.91 & 36.57\\
\hline
\end{tabular}
\caption{The average size of sparse models four methods select.}
\label{tab:5}
\end{table}

\section{Discussion} \label{sec:7}
We propose a factor-adjusted regression model to handle the linear relationship between the response variable and possibly highly correlated covariates. We decompose the predictors into common factors and idiosyncratic components, where the common factors explain most of the variations, and assume all common factors but a small number of idiosyncratic components contribute to the response. The corresponding Bayesian methodology is then developed for estimating such a model. Theoretical results suggest that the proposed methodology can consistently estimate the factor-adjusted model and thus obtain consistent predictions, under an easily-to-hold sparse eigenvalue condition on the idiosyncratic components instead of the original covariates.

Our factor-adjusted model covers the standard linear model as a sub-model with the side constraint. Thus, our proposed methodology can easily handle the case when the standard linear regression model is assumed to be the underlying model. In simulation studies on the sub-model, we find that the factor adjustment greatly improves the performance of lasso, while the generic Bayesian sparse regression is comparable to the factor-adjusted Bayesian sparse regression in terms of estimation error and model selection rate (Table \ref{tab:3}). This suggests a fundamental difference between the frequentist sparse regression method and the Bayesian sparse regression method. Indeed, one can prove under Assumptions \ref{asm:1}-\ref{asm:2}, \ref{asm:5}-\ref{asm:7} that
\begin{align*}
\min_{\xi:|\xi \setminus \xi^\star| \le M_0s} \lambda_{\min}\left(\mbf{X}^\T_\xi \mbf{X}_\xi/n\right) &\ge \kappa_0 - \bigOp(s^2\log p/n)\\
\lambda_{\max}\left(\mbf{X}^\T_{\xi^\star} \mbf{X}_{\xi^\star}/n\right) &= \bigOp(s).
\end{align*}
If $s = \bigO(1)$, these two terms are of constant order, and then a similar argument to the proof of Theorem \ref{thm:1} would establish the convergence and model selection consistency of the generic Bayesian regression on standard regression model \eqref{model1}.

Nonetheless, we recommend the factor-adjusted Bayesian regression on model \eqref{model4} over the generic Bayesian regression on model \eqref{model1} for three reasons. First, the theoretical analyses of the former allow $s$ to grow with $n$, in contrast the latter requires fixed $s$. Second, model \eqref{model4} provides more flexibility than its sub-model \eqref{model1} in the regression analyses and would potentially explore more explanatory power from the data. On the real dataset of U.S. bond risk premia, the factor adjusted Bayesian regression achieves 1.0\%-3.0\% more out-of-sample $R^2$ with one or two less variables (Tables \ref{tab:4}-\ref{tab:5}). Third, in the no correlation case (although it is unlike the case in practice), the factor-adjusted Bayesian regression pays a negligible price for model misspecification (Table \ref{tab:2}).

\section*{Acknowledgement}
We would like to thank Yun Yang for helpful discussions. 
\bibliographystyle{ims}
\bibliography{ref}

\appendix
\appendixpage
\renewcommand{\thelemma}{\Alph{section}\arabic{lemma}}

\noindent \textbf{Notation}. We summarize all notation used in the appendices. Some of them may have been defined in the main body of the paper.

Let $\mbf{I}_{m \times m}$ be the identity matrix of dimension $m \times m$, $\mbf{0}_m$ the column vector of $m$ zeros, $\mbf{0}_{m_1 \times m_2}$ the matrix of $m_1 \times m_2$ zeros. If the dimension of an identity or zero matrix is clear in the context, we omit the subscripts. We write $\diag(a_1,a_2,\dots,a_m)$ for a diagonal matrix of elements $a_1,a_2,\dots,a_m$. For a symmetric matrix $\mbf{A}$, we write its trace as $\trace(\mbf{A})$. For a positive semi-definite matrix $\mbf{A}$, we write its largest eigenvalue as $\lambda_{\max}(\mbf{A})$ and its smallest eigenvalue as $\lambda_{\min}(\mbf{A})$. For two squared matrices $\mbf{A},\mbf{B}$ of the same dimension, we write $\mbf{A} \ge \mbf{B}$ (or $\mbf{B} \le \mbf{A}$) if $\mbf{A} - \mbf{B}$ is positive semidefinite. For a matrix $\mbf{A}_{m_1 \times m_2} = [a_{ij}]_{1 \le i \le m_1, 1 \le j \le m_2}$, we write $\mbf{A}_j$ to denote its $j$-th column. For a index set $\xi \subseteq \{1,\dots,m_2\}$, $\mbf{A}_\xi = [\mbf{A}_j: j \in \xi]$ is the sub-matrix of $\mbf{A}$ assembling the columns indexed by $\xi$. For a matrix $\mbf{A}$ of full column rank, write $\mbf{A}^\dagger = (\mbf{A}^\T\mbf{A})^{-1}\mbf{A}^\T$ as its left pseudo-inverse.

For a vector $\bm{v}$, let $\Vert \bm{v} \Vert_q$ denote its $\ell_q$ norm. If $q = 2$, we omit the subscript and write $\Vert \bm{v} \Vert$ for simplicity. For a matrix $\mbf{A}_{m_1 \times m_2} = [a_{ij}]_{1 \le i \le m_1, 1 \le j \le m_2}$, let
$$\Vert \mbf{A} \Vert_q = \sup\{ \Vert \mbf{A} \bm{v} \Vert_q: \bm{v} \in \mbb{R}^{m_2}, \Vert \bm{v} \Vert_q = 1\}$$
be the $\ell_q$ operator norm induced by $\ell_q$ vector norm. If $q = 2$, we omit the subscript and write $\Vert \mbf{A} \Vert$ for simplicity. $\Vert \mbf{A} \Vert$ is called the operator norm of $\mbf{A}$ in short. Evidently, it is equal to the largest singular value of $\mbf{A}$. Let $\Vert \mbf{A}\Vert_{\max} = \max_{i,j} |a_{ij}|$ be the element-wise maximum norm of $\mbf{A}$, and let $\Vert \mbf{A}\Vert_\text{F} = \sqrt{\sum_{i,j} a_{ij}^2} = \sqrt{\trace(\mbf{A}^\T\mbf{A})}$ be the Frobenius norm of $\mbf{A}$.

For some constant $c$, we write $c + \smallo(1)$ to denote a constant arbitrarily large than $c$, and $c - \smallo(1)$ to represent a constant arbitrarily less than $c$. For two positive sequences $a_n, b_n$, $a_n \asymp b_n$ means $\lim_{n \to \infty} a_n/b_n = c$ for some constant $c > 0$; $a_n \succcurlyeq b_n$ (or $b_n \preccurlyeq a_n$) means $b_n = \bigO(a_n)$; $a_n \succ b_n$ (or $b_n \prec a_n$) means $b_n = \smallo(a_n)$, and $a_n \gtrsim b_n$ (or $b_n \lesssim a_n$) means that $a_n \ge b_n$ for sufficiently large $n$. For a sequence of non-negative random variables $Z_n$ and a sequence of positive numbers $a_n$, $Z_n = \bigOp(a_n)$ means $Z_n/a_n \le c$ for some constant $c$ with probability approaching $1$ as $n \to \infty$, $Z_n = \smallop(a_n)$ means $|Z_n/a_n| \le c$ for any constant $c$ with probability approaching $1$ (i.e. $Z_n/a_n \to 0$ in probability) as $n \to \infty$.

\section{Technical Proofs for Bayesian Sparse Regression}
\setcounter{lemma}{0}
This appendix collects technical proofs for Theorem \ref{thm:1}.

\subsection{Proof of Theorem \ref{thm:1}}
The proofs of three parts use the same techniques and have a similar structure. First, we observe under Assumptions \ref{asm:1}-\ref{asm:3} that, for some constant $C_4$ and any constant $M_0$,
\begin{equation} \label{conditioning}
\begin{split}
\max_{j=1}^k \Vert (\widehat{\mbf{F}}\mbf{H})_j - \mbf{F}_j \Vert &\le C_4\sqrt{\log p},\\
\max_{j=1}^p \Vert \widehat{\mbf{U}}_j - \mbf{U}_j \Vert &\le C_4\sqrt{\log p},\\
\Vert \mbf{H}^\T \mbf{H} - \mbf{I}\Vert &\le C_4\sqrt{\log p/n}, ~~~ \Vert \mbf{H}\mbf{H}^\T - \mbf{I}\Vert \le C_4\sqrt{\log p/n}\\
\min_{\xi:~|\xi| \le (M_0+1)s} \lambda_{\min}(\widehat{\mbf{U}}_\xi^\T\widehat{\mbf{U}}_\xi/n) &\ge \widehat{\kappa}_0 := \kappa_0/2,\\
\lambda_{\max}(\widehat{\mbf{U}}_{\xi^\star}^\T\widehat{\mbf{U}}_{\xi^\star}/n) &\le \widehat{\kappa}_1 := 2\kappa_1
\end{split}
\end{equation}
hold with probability approaching $1$. The first three claimed bounds are directly taken from Assumption \ref{asm:3}. The last two claimed bound follow from Weyl's inequality. For any model $\xi$ of size at most $(M_0+1)s$, the singular values of $\widehat{\mbf{U}}_\xi$ differ from those of $\mbf{U}_\xi$ by at most
\begin{align*}
&~~~\max_{\xi: ~|\xi| \le (M_0+1)s} \Vert \widehat{\mbf{U}}_\xi - \mbf{U}_\xi\Vert \le \max_{\xi: ~|\xi| \le (M_0+1)s} \Vert \widehat{\mbf{U}}_\xi - \mbf{U}_\xi\Vert_\text{F}\\
&\le \sqrt{(M_0+1)s \max_{j=1}^p \Vert \widehat{\mbf{U}}_j - \mbf{U}_j\Vert^2} \le C_4\sqrt{(M_0+1)s\log p}.
\end{align*}
This implies that
\begin{align*}
\min_{\xi:~|\xi| \le (M_0+1)s} & \lambda_{\min}(\widehat{\mbf{U}}_\xi^\T\widehat{\mbf{U}}_\xi/n) \ge \left(\sqrt{\kappa_0} - C_4\sqrt{(M_0+1)s\log p/n}\right)^2 \gtrsim \kappa_0/2,\\
& \lambda_{\max}(\widehat{\mbf{U}}_{\xi^\star}^\T\widehat{\mbf{U}}_{\xi^\star}/n) \le \left(\sqrt{\kappa_1} + C_4\sqrt{(M_0+1)s\log p/n}\right)^2 \lesssim 2\kappa_1.
\end{align*}

We thereafter need to show that the conditional probabilities of
\begin{align*}
\text{Event (a):}~& \widehat{\pi}\left( A^c(\sigma^\star, \mbf{H}\bm{\alpha}^\star, \bm{\beta}^\star, M_0, M_1, M_2,\epsilon_n) | \mbf{X}, \mbf{Y}\right) \ge e^{-C_1s\log p}\\
\text{Event (b):}~& \widehat{\pi}\left( \Vert (\widehat{\mbf{F}}\bm{\alpha} +\widehat{\mbf{U}}\bm{\beta}) - (\mbf{F}\bm{\alpha}^\star + \mbf{U}\bm{\beta}^\star) \Vert \!\ge\! \sigma^\star M_3 \sqrt{n} \epsilon_n| \mbf{X}, \mbf{Y}\right) \ge e^{-C_2s\log p}\\
\text{Event (c):}~& \widehat{\pi}\left( A^c(\sigma^\star, \mbf{H}\bm{\alpha}^\star, \bm{\beta}^\star, M_0, M_1, M_2, \epsilon_n) \cup \{\xi \not \supseteq \xi^\star\} | \mbf{X}, \mbf{Y}\right) \ge e^{-C_3s\log p},
\end{align*}
given any realization of $(\mbf{F}, \mbf{U},  \mbf{X}, \widehat{\mbf{F}}, \widehat{\mbf{U}}, \mbf{H})$ satisfying \eqref{conditioning}, vanish $n \to \infty$. Recall that
$$A(\sigma', \bm{\alpha}', \bm{\beta}', M_0, M_1, M_2, \epsilon_n) = \left\{ (\sigma^2, \bm{\alpha}, \bm{\beta}):
\begin{split}
& |\xi \setminus \xi'| \le M_0s,\\
& \frac{\sigma^2}{\sigma'^2} \in \left(\frac{1-M_1\epsilon_n}{1+M_1\epsilon_n}, \frac{1+M_1\epsilon_n}{1-M_1\epsilon_n}\right),\\
& \left\Vert {\bm{\alpha} \choose \bm{\beta}} - {\bm{\alpha}' \choose \bm{\beta}'} \right\Vert \le \sigma M_2 \epsilon_n
\end{split} \right\},$$
where $\xi$ and $\xi'$ are supports of $\bm{\beta}$ and $\bm{\beta}'$, respectively, and $\epsilon_n = \sqrt{s\log p/n}$.

Consider the conditional probability of event (a). Since $\widehat{\pi}(\sigma, \bm{\alpha}, \bm{\beta}|\mbf{X}, \mbf{Y})$ depends on $\mbf{X}$ through $\widehat{\mbf{F}}$ and $\widehat{\mbf{U}}$, we have
\begin{align*}
&~~~\mbb{P}_{(\mbf{B}, \sigma^\star, \bm{\alpha}^\star, \bm{\beta}^\star)} \left( \left. \widehat{\pi}\left( A^c(\sigma^\star, \mbf{H}\bm{\alpha}^\star, \bm{\beta}^\star, M_0, M_1, M_2, \epsilon_n) | \mbf{X}, \mbf{Y}\right) \ge e^{-C_1s\log p} \right|\mbf{F}, \mbf{U},  \mbf{X}, \widehat{\mbf{F}}, \widehat{\mbf{U}}, \mbf{H}\right)\\
&= \mbb{P}_{(\mbf{B}, \sigma^\star, \bm{\alpha}^\star, \bm{\beta}^\star)} \left( \left. \widehat{\pi}( A^c(\sigma^\star, \mbf{H}\bm{\alpha}^\star, \bm{\beta}^\star, M_0, M_1, M_2, \epsilon_n) | \widehat{\mbf{F}}, \widehat{\mbf{U}}, \mbf{Y}) \ge e^{-C_1s\log p} \right|\mbf{F}, \mbf{U},  \mbf{X}, \widehat{\mbf{F}}, \widehat{\mbf{U}}, \mbf{H}\right).
\end{align*}
Next, by a change of measure trick and Cauchy-Schwarz inequality,
\begin{align*}
&= \int 1\left\{ \widehat{\pi}(A^c(\sigma^\star, \mbf{H}\bm{\alpha}^\star, \bm{\beta}^\star, M_0, M_1, M_2, \epsilon_n) |\widehat{\mbf{F}}, \widehat{\mbf{U}}, \mbf{y}) \ge e^{-C_1s\log p} \right\} \mc{N}(\mbf{y}| \mbf{F}\bm{\alpha}^\star + \mbf{U}\bm{\beta}^\star, \sigma^{\star 2}\mbf{I})d\mbf{y}\\
&= \int 1\left\{ \widehat{\pi}(A^c(\sigma^\star, \mbf{H}\bm{\alpha}^\star, \bm{\beta}^\star, M_0, M_1, M_2, \epsilon_n)|\widehat{\mbf{F}}, \widehat{\mbf{U}}, \mbf{y}) \ge e^{-C_1s\log p} \right\} \frac{\mc{N}(\mbf{y}| \mbf{F}\bm{\alpha}^\star + \mbf{U}\bm{\beta}^\star, \sigma^{\star 2}\mbf{I})}{\mc{N}(\mbf{y}| \widehat{\mbf{F}}\mbf{H}\bm{\alpha}^\star + \widehat{\mbf{U}}\bm{\beta}^\star, \sigma^{\star 2}\mbf{I})}\\
&~~~\times \mc{N}(\mbf{y}| \widehat{\mbf{F}}\mbf{H}\bm{\alpha}^\star + \widehat{\mbf{U}}\bm{\beta}^\star, \sigma^{\star 2}\mbf{I})d\mbf{y}\\
&\le \left[\int 1^2\left\{ \widehat{\pi}(A^c(\sigma^\star, \mbf{H}\bm{\alpha}^\star, \bm{\beta}^\star, M_0, M_1, M_2, \epsilon_n)|\widehat{\mbf{F}}, \widehat{\mbf{U}}, \mbf{y}) \ge e^{-C_1s\log p} \right\} \mc{N}(\mbf{y}| \widehat{\mbf{F}}\mbf{H}\bm{\alpha}^\star + \widehat{\mbf{U}}\bm{\beta}^\star, \sigma^{\star 2}\mbf{I}) d\mbf{y}\right]^{1/2}\\
&~~~\times \left[\int \left(\frac{\mc{N}(\mbf{y}| \mbf{F}\bm{\alpha}^\star + \mbf{U}\bm{\beta}^\star, \sigma^{\star 2}\mbf{I})}{\mc{N}(\mbf{y}| \widehat{\mbf{F}}\mbf{H}\bm{\alpha}^\star + \widehat{\mbf{U}}\bm{\beta}^\star, \sigma^{\star 2}\mbf{I})}\right)^2 \mc{N}(\mbf{y}| \widehat{\mbf{F}}\mbf{H}\bm{\alpha}^\star + \widehat{\mbf{U}}\bm{\beta}^\star, \sigma^{\star 2}\mbf{I})d\mbf{y}\right]^{1/2}.
\end{align*}

Proceed to bound two integrals separately. The logarithm of the second integral is the R\'{e}nyi Divergence of order 2 from $\mc{N}(\widehat{\mbf{F}}\mbf{H}\bm{\alpha}^\star + \widehat{\mbf{U}}\bm{\beta}^\star, \sigma^{\star 2}\mbf{I})$ to $\mc{N}(\mbf{F}\bm{\alpha}^\star + \mbf{U}\bm{\beta}^\star, \sigma^{\star 2}\mbf{I})$. It follows that
\begin{align*}
&~~~\log(\text{Second Integral}) = \Vert \widehat{\mbf{F}}\mbf{H}\bm{\alpha}^\star + \widehat{\mbf{U}}\bm{\beta}^\star - \mbf{F} \bm{\alpha}^\star - \mbf{U} \bm{\beta}^\star\Vert^2/\sigma^{\star 2}\\
&\le \left(\max_{j=1}^k \Vert (\widehat{\mbf{F}}\mbf{H})_j - \mbf{F}_j \Vert |\alpha_j| + \Vert \widehat{\mbf{U}}_{\xi^\star} - \mbf{U}_{\xi^\star}\Vert \Vert \bm{\beta}^\star_{\xi^\star} \Vert\right)^2/\sigma^{\star 2} \le C_4's\log p
\end{align*}
for some constant $C_4'$, where \eqref{conditioning} derives $\Vert (\widehat{\mbf{F}}\mbf{H})_j - \mbf{F}_j \Vert \le C_4\sqrt{\log p}$ and $\Vert \widehat{\mbf{U}}_{\xi^\star} - \mbf{U}_{\xi^\star}\Vert \le C_4\sqrt{s\log p}$, and Assumption \ref{asm:4} controls $\Vert \bm{\alpha}^\star\Vert = \bigO(1)$ and $\Vert \bm{\beta}^\star\Vert = \bigO(1)$.

On the other hand, let $\widehat{\mbb{P}}_{(\sigma, \bm{\alpha}, \bm{\beta})}$ denote the probability measure under which $\mbf{Y} \sim \mc{N}(\widehat{\mbf{F}}\bm{\alpha} + \widehat{\mbf{U}}\bm{\beta}, \sigma^2\mbf{I})$, then
$$\text{First Integral} = \widehat{\mbb{P}}_{(\sigma^\star, \mbf{H} \bm{\alpha}^\star, \bm{\beta}^\star)}\left( \widehat{\pi}(A^c(\sigma^\star, \mbf{H}\bm{\alpha}^\star, \bm{\beta}^\star, M_0, M_1, M_2, \epsilon_n)|\widehat{\mbf{F}}, \widehat{\mbf{U}}, \mbf{Y}) \ge e^{-C_1s\log p}\right),$$
which is concerning the posterior convergence rate of Bayesian sparse regression in model $\mbf{Y} \sim \mc{N}(\widehat{\mbf{F}}\bm{\alpha} + \widehat{\mbf{U}}\bm{\beta}, \sigma^2\mbf{I})$ with fixed design matrix $[\widehat{\mbf{F}}, \widehat{\mbf{U}}]$ to identify the true parameter $(\sigma^\star, \mbf{H}\bm{\alpha}^\star, \bm{\beta}^\star)$.

This fixe-design regression is analyzed by Theorem \ref{thm:3}. By part (a) of Theorem \ref{thm:3}, we can find $M_0, M_1, M_2, C_1, C_1'$ such that $C_1' > C_4'$ and the first Integral $\le e^{-C_1's\log p}$. Combining bounds of two integrals completes the proof of part (a) of Theorem \ref{thm:1}. Using similar arguments, parts (b) and (c) of Theorem \ref{thm:3} derive parts (b) and (c) of Theorem \ref{thm:1}, respectively.

\subsection{Bayesian Sparse Regression with Fixed Design}
\begin{theorem} \label{thm:3}
Recall that $\widehat{\mbb{P}}_{(\sigma, \bm{\alpha}, \bm{\beta})}$ denote the probability measure under the model $\mbf{Y} = \widehat{\mbf{F}} \bm{\alpha} + \widehat{\mbf{U}}\bm{\beta} + \sigma \bm{\varepsilon}$ with fixed design $[\widehat{\mbf{F}}, \widehat{\mbf{U}}]$. Suppose $[\widehat{\mbf{F}}, \widehat{\mbf{U}}]$ and true parameters satisfy
\begin{equation} \label{fixed design}
\begin{split}
& \widehat{\mbf{F}}^\T \widehat{\mbf{F}}/n = \mbf{I},~~~\widehat{\mbf{F}}^\T \widehat{\mbf{U}}/n = \mbf{0}\\
\min_{\xi:~|\xi| \le (M_0+1)s} & \lambda_{\min}(\widehat{\mbf{U}}_{\xi}^\T\widehat{\mbf{U}}_\xi/n) \ge \widehat{\kappa}_0\\
& \lambda_{\max}(\widehat{\mbf{U}}_{\xi^\star}^\T \widehat{\mbf{U}}_{\xi^\star}/n) \le \widehat{\kappa}_1\\
& \Vert \bm{\alpha}^\star \Vert = \bigO(1),~~~\Vert \bm{\beta}^\star \Vert = \bigO(1),
\end{split}
\end{equation}
then the following statements hold.
\begin{enumerate}[label=(\alph*)]
\item (estimation error rate) For any constants $C_1, C_1'$, there exist sufficiently large $M_0, M_1, M_2$ such that
\begin{align*}
\widehat{\mbb{P}}_{(\sigma^\star, \bm{\alpha}^\star, \bm{\beta}^\star)} & \left(\widehat{\pi}(A^c(\sigma^\star, \bm{\alpha}^\star, \bm{\beta}^\star, M_0, M_1, M_2, \epsilon_n)|\widehat{\mbf{F}}, \widehat{\mbf{U}}, \mbf{Y})\right.\\
&~~~~~~~~~~\left. \ge e^{-C_1s\log p}\right) \lesssim e^{-C_1's\log p}.
\end{align*}
\item (prediction error rate) For any constants $C_2, C_2'$, there exist sufficiently large $M_3$ such that
\begin{align*}
\widehat{\mbb{P}}_{(\sigma^\star, \bm{\alpha}^\star, \bm{\beta}^\star)} & \left( \widehat{\pi}( \Vert (\widehat{\mbf{F}}\bm{\alpha} + \widehat{\mbf{U}}\bm{\beta}) - (\widehat{\mbf{F}}\bm{\alpha}^\star + \widehat{\mbf{U}}\bm{\beta}^\star) \Vert  \ge \sigma^\star M_3 \sqrt{n} \epsilon_n| \widehat{\mbf{F}}, \widehat{\mbf{U}}, \mbf{Y}) \right.\\
&~~~~~~~~~~ \left. \ge e^{-C_2s\log p}\right)\lesssim e^{-C_2's\log p}.
\end{align*}
\item (model selection consistency) Suppose $\min_{j \in \xi^\star} |\beta^\star_j| \succ \epsilon_n$ in addition. For any constants $C_3, C_3'$, there exist sufficiently large $M_0, M_1, M_2$ such that
\begin{align*}
\widehat{\mbb{P}}_{(\sigma^\star, \bm{\alpha}^\star, \bm{\beta}^\star)} & \left(\widehat{\pi}( A^c(\sigma^\star, \bm{\alpha}^\star, \bm{\beta}^\star, M_0, M_1, M_2, \epsilon_n) \cup \{\xi \not \supseteq \xi^\star\} | \widehat{\mbf{F}}, \widehat{\mbf{U}}, \mbf{Y}) \right.\\
&~~~~~~~~~~ \left. \ge e^{-C_3s\log p}\right) \lesssim e^{-C_3's\log p}.
\end{align*}
\end{enumerate}
\end{theorem}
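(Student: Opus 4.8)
The plan is to run the classical three-step argument for posterior contraction under spike-and-slab priors \citep{castillo2015bayesian,song2017nearly,narisetty2014bayesian}, specialized to the fixed-design model of Theorem \ref{thm:3}, in which the coefficient vector decomposes into a dense block $\bm\alpha$ of fixed dimension $k$ and a sparse block $\bm\beta$ of dimension $p$. Because \eqref{fixed design} forces $\widehat{\mbf F}^\T\widehat{\mbf F}/n=\mbf I$ and $\widehat{\mbf F}^\T\widehat{\mbf U}/n=\mbf 0$, the design is block-orthogonal: for every model $\xi$ the least-squares fit over $[\widehat{\mbf F},\widehat{\mbf U}_\xi]$ splits into independent $\bm\alpha$- and $\bm\beta_\xi$-parts, so the $\bm\alpha$-direction contributes only a fixed number $k=\bigO(1)$ of extra dimensions and the genuine work lives on the idiosyncratic design $\widehat{\mbf U}$, where the sparse-eigenvalue bounds $\widehat\kappa_0\le\lambda_{\min}(\widehat{\mbf U}_\xi^\T\widehat{\mbf U}_\xi/n)$ for $|\xi|\le(M_0+1)s$ and $\lambda_{\max}(\widehat{\mbf U}_{\xi^\star}^\T\widehat{\mbf U}_{\xi^\star}/n)\le\widehat\kappa_1$ supply all the identifiability that is needed.

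\emph{Step 1: evidence lower bound.} Write every posterior probability as $N(\cdot)/D$ with $D=\int [p_{(\sigma,\bm\alpha,\bm\beta)}(\mbf Y)/p_{(\sigma^\star,\bm\alpha^\star,\bm\beta^\star)}(\mbf Y)]\,d\pi$. Restricting the integral to $\{\xi=\xi^\star\}$ and to a box of radius a constant multiple of $\sigma^\star\epsilon_n$ around $(\sigma^{\star2},\bm\alpha^\star,\bm\beta^\star_{\xi^\star})$, the log-likelihood ratio on that box is $\bigOp(s\log p)$ (its quadratic form in the Gaussian density is controlled by $\widehat\kappa_1$ on the mean and by a $\chi^2$ tail on $\bm\varepsilon$), while the prior charges the box with mass at least $(s_0/p)^s(1-s_0/p)^{p-s}\cdot e^{-\bigO(s\log p)}$ — the exponential factor absorbing the $s$ slab factors for $\bm\beta^\star_{\xi^\star}$ and the $k$ slab factors for $\bm\alpha^\star$ via $-\log\inf_{|z|\le t}h(z)=\bigO(t^2)$ with $t=\bigO(1)$ (true coefficients bounded by \eqref{fixed design}, the $\tau_j$'s of constant order by $\widehat\kappa_0,\widehat\kappa_1$), and the factor $g(\sigma^{\star2})\,\sigma^\star\epsilon_n$ from the continuous positive density $g$. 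This gives $D\ge e^{-C_0 s\log p}$ with $\widehat{\mbb P}_{(\sigma^\star,\bm\alpha^\star,\bm\beta^\star)}$-probability at least $1-e^{-c s\log p}$.

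\emph{Step 2: tests and numerator.} Decompose $A^c=A^c(\sigma^\star,\bm\alpha^\star,\bm\beta^\star,M_0,M_1,M_2,\epsilon_n)$. First peel off the oversized models: $\pi(|\xi\setminus\xi^\star|>M_0 s\mid\mbf Y)$ is bounded, with no test, by the prior/marginal-likelihood odds — each extra coordinate costs a factor $\le(s_0/p)\cdot p$ in the prior against a factor $\le e^{\log p}$ in the profiled likelihood (a standard complexity-penalty estimate, again using $\widehat\kappa_0$) — so this term is $\le e^{-C'(M_0)\,s\log p}$ once $M_0$ is large. On the event $\{|\xi\setminus\xi^\star|\le M_0 s\}$ build a likelihood-ratio / residual-sum-of-squares test $\phi$ against the alternatives $\{\|(\bm\alpha,\bm\beta)-(\bm\alpha^\star,\bm\beta^\star)\|\ge\sigma^\star M_2\epsilon_n\}$ and $\{\sigma^2/\sigma^{\star2}\notin(\tfrac{1-M_1\epsilon_n}{1+M_1\epsilon_n},\tfrac{1+M_1\epsilon_n}{1-M_1\epsilon_n})\}$: projecting onto the span of a worst-case $(M_0+1)s$-dimensional model, $\widehat\kappa_0$ turns the $\ell_2$-gap into a mean separation of order $\sqrt n\,\sigma^\star M_2\epsilon_n$, so Gaussian and $\chi^2$ tails give type-I error $\le e^{-c s\log p}$ and, after a union bound over the $\le e^{(M_0+1)s\log p}$ models, type-II error $\le e^{-c M_2^2 s\log p}$ uniformly over the alternative. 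Then $\widehat{\mbb E}\,N(A^c)\le\widehat{\mbb E}\phi+\int_{A^c}\widehat{\mbb E}_{(\sigma,\bm\alpha,\bm\beta)}(1-\phi)\,d\pi+\widehat{\mbb E}\,\pi(|\xi\setminus\xi^\star|>M_0 s\mid\mbf Y)$, and choosing $M_0,M_1,M_2$ large makes this $\le e^{-(C_0+C_1+C_1')s\log p}$; combining with Step 1 and Markov's inequality yields part (a), while part (b) is the same computation with the prediction seminorm $\|\widehat{\mbf F}\bm\alpha+\widehat{\mbf U}\bm\beta-\widehat{\mbf F}\bm\alpha^\star-\widehat{\mbf U}\bm\beta^\star\|$ in place of the $\ell_2$-error, the two being two-sidedly comparable on the models that carry posterior mass via $\widehat\kappa_0,\widehat\kappa_1$.

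\emph{Step 3: part (c) and the main obstacle.} For model-selection consistency adjoin the event $\{\xi\not\supseteq\xi^\star\}$: when some $j\in\xi^\star\setminus\xi$, block-orthogonality removes $\bm\alpha$ harmlessly while $\widehat\kappa_0$ together with the beta-min hypothesis $\min_{j\in\xi^\star}|\beta^\star_j|\succ\epsilon_n$ forces the residual mean $\widehat{\mbf F}\bm\alpha+\widehat{\mbf U}\bm\beta-\widehat{\mbf F}\bm\alpha^\star-\widehat{\mbf U}\bm\beta^\star$ to have norm $\succ\sqrt n\,\epsilon_n$; hence such $\xi$ already sit inside the alternative handled in Step 2, and the same test/evidence bookkeeping gives the displayed bound of part (c). The first listed corollary is then immediate, since $A\cap\{\xi\supseteq\xi^\star\}$ implies $\xi\supseteq\xi^\star$ and $|\xi\setminus\xi^\star|\le M_0 s$; the thresholding corollary follows by combining it with beta-min — on $A\cap\{\xi\supseteq\xi^\star\}$ every $j\in\xi^\star$ has $|\beta_j|\ge\min_{i\in\xi^\star}|\beta^\star_i|-\sigma M_2\epsilon_n\succ\sigma\sqrt{|\xi|\log p/n}$ because $|\xi|\le(1+M_0)s$ — together with a complementary estimate showing the spurious coordinates $j\in\xi\setminus\xi^\star$, being pure noise, obey $|\beta_j|=\bigOp(\sqrt{\log p/n})$ and thus fall below the threshold. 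The main obstacle is the testing step: one must secure simultaneously exponentially small type-I and type-II errors over the $e^{\bigO(s\log p)}$ candidate models \emph{and} uniformly in the unknown scale $\sigma$, so the likelihood ratio has to be controlled without knowing $\sigma^\star$; this is exactly where the interplay of $\widehat\kappa_0,\widehat\kappa_1$ with the slab-tail condition purchases adaptivity to both $\sigma^\star$ and $s$, and it is the part demanding the most care.
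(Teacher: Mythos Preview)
Your three-step programme is exactly the paper's: Proposition~\ref{prop: barron} combines (i) a prior-mass bound on $\{|\xi\setminus\xi^\star|>M_0s\}$, (ii) exponentially consistent tests on the remaining alternatives (Lemma~\ref{lem:test}), and (iii) an evidence lower bound (Lemma~\ref{lem:merging}); you have identified all three ingredients and the block-orthogonality that makes the $\bm\alpha$-part harmless.

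Two small slips to fix. First, your accounting for the oversized models --- ``a factor $\le(s_0/p)\cdot p$ in the prior against $\le e^{\log p}$ in the profiled likelihood'' --- does not produce exponential decay as written (those factors multiply to $\bigO(p)$ per coordinate, which diverges). The paper instead uses a \emph{pure prior} tail bound, Lemma~\ref{lem:Binomial}, to get $\pi(|\xi|>M_0s)\lesssim e^{-M_0s\log p/2}$, and lets the evidence lower bound convert this into a posterior bound; no likelihood comparison is needed for $\Theta_{0n}$. Second, the $\ell_2$ and prediction norms are \emph{not} two-sidedly comparable on general models of size $(M_0+1)s$: the hypotheses of Theorem~\ref{thm:3} give no upper sparse-eigenvalue bound outside $\xi^\star$, so $\|\widehat{\mbf U}_\xi\bm v\|$ cannot be controlled by $\|\bm v\|$ for $\xi\ne\xi^\star$. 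Part (b) therefore cannot be read off part (a) by norm equivalence; it must be tested directly on the prediction seminorm, which is what the paper's $\phi_{3n}$ does and what your sentence ``the same computation with the prediction seminorm in place of the $\ell_2$-error'' in fact proposes --- just drop the comparability remark.

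Your handling of part (c) is a genuine simplification over the paper. You observe that under the beta-min condition $\min_{j\in\xi^\star}|\beta^\star_j|\succ\epsilon_n$, eventually $\min_{j\in\xi^\star}|\beta^\star_j|>\sigma^\star M_2\epsilon_n$, so any $(\sigma,\bm\alpha,\bm\beta)$ with $\xi\not\supseteq\xi^\star$ already violates the coefficient constraint in $A$ and hence $\{\xi\not\supseteq\xi^\star\}\subseteq A^c$; part (c) then follows from part (a) with no extra test. The paper instead constructs a dedicated test $\phi_{4n}$ for the missing-signal alternative and re-derives the coefficient test as $\phi_{5n}$ restricted to $\xi\supseteq\xi^\star$; this is correct but not necessary once one makes your observation.
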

\noindent \textbf{Remark.} To apply Theorem \ref{thm:3} in the proof of Theorem \ref{thm:1}, we replace $(\sigma^\star, \bm{\alpha}^\star, \bm{\beta}^\star)$ with $(\sigma^\star, \mbf{H}\bm{\alpha}^\star, \bm{\beta}^\star)$, and check that $\Vert \mbf{H}\bm{\alpha}^\star \Vert = \bigO(1)$ in Theorem \ref{thm:1}.

The following proposition, which is also \citep[Lemma 6]{barron1998information} and \cite[Lemma A4]{song2017nearly}, is the central technique to prove Theorem \ref{thm:3}. 

\begin{proposition} \label{prop: barron}
Consider a parametric model $\{P_{\bm{\theta}}\}_{\bm{\theta} \in \Theta}$. Let $\Theta_{0n}$ and $\Theta_{n}$ be two subsets of the parameter space $\Theta$. Let $\{\mc{D}_n\}_{n \ge 1}$ be a sequence of data generations according to true parameter $\bm{\theta}^\star$. Let $\pi(\bm{\theta})$ be a prior distribution over $\Theta$. If
\begin{enumerate}[label=(\arabic*)]
\item $\pi(\Theta_{0n}) \le \delta_{0n}$,
\item there exists a test function $\phi_{n}(\mc{D}_n)$ such that
$$\sup_{\bm{\theta} \in \Theta_{n}} \mbb{E}_{\bm{\theta}} (1 - \phi_{n}) \le \delta_{1n}, \quad \mbb{E}_{\bm{\theta}^\star} \phi_{n} \le \delta_{1n}',$$
\item and
$$\mbb{P}_{\bm{\theta}^\star}\left(\frac{\int_{\Theta} \pi(\bm{\theta})P_{\bm{\theta}}(\mc{D}_n)d\bm{\theta}}{P_{\bm{\theta}^\star}(\mc{D}_n)} \le \delta_{2n}\right) \le \delta_{2n}',$$
\end{enumerate}
then for any $\delta_{3n}$,
$$\mbb{P}_{\bm{\theta}^\star}\left(\pi(\Theta_{0n} \cup \Theta_{n} |\mc{D}_n) \ge \frac{\delta_{0n}+\delta_{1n}}{\delta_{2n}\delta_{3n}}\right) \le \delta_{1n}' + \delta_{2n}' + \delta_{3n}.$$
\end{proposition}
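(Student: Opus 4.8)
The plan is to express the posterior mass of the bad set $\Theta_{0n}\cup\Theta_n$ as an explicit likelihood ratio, control its numerator and denominator separately, and then glue the pieces together with one elementary pointwise inequality so that the three error terms add up with no stray constants. Write $\Theta_{bn}\coloneqq\Theta_{0n}\cup\Theta_n$ and
$$\pi(\Theta_{bn}\mid\mc{D}_n)=\frac{N_n}{D_n},\qquad N_n\coloneqq\int_{\Theta_{bn}}\pi(\bm{\theta})P_{\bm{\theta}}(\mc{D}_n)\,d\bm{\theta},\qquad D_n\coloneqq\int_{\Theta}\pi(\bm{\theta})P_{\bm{\theta}}(\mc{D}_n)\,d\bm{\theta},$$
and set $t\coloneqq(\delta_{0n}+\delta_{1n})/(\delta_{2n}\delta_{3n})$. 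Introduce the ``good evidence'' event $B\coloneqq\{D_n>\delta_{2n}P_{\bm{\theta}^\star}(\mc{D}_n)\}$, so hypothesis~(3) gives $\mbb{P}_{\bm{\theta}^\star}(B^c)\le\delta_{2n}'$, and on $B$ we may replace $D_n$ in the denominator by the lower bound $\delta_{2n}P_{\bm{\theta}^\star}(\mc{D}_n)$.

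The core estimate is an upper bound on $\mbb{E}_{\bm{\theta}^\star}\big[\mathbf{1}_B(1-\phi_n)\pi(\Theta_{bn}\mid\mc{D}_n)\big]$. On $B$ this quantity is at most $(1-\phi_n)N_n/(\delta_{2n}P_{\bm{\theta}^\star}(\mc{D}_n))$; taking $\mbb{E}_{\bm{\theta}^\star}$ cancels the likelihood $P_{\bm{\theta}^\star}(\mc{D}_n)$ against the change of measure, and Tonelli's theorem (all integrands are nonnegative) rewrites the result as $\delta_{2n}^{-1}\int_{\Theta_{bn}}\pi(\bm{\theta})\,\mbb{E}_{\bm{\theta}}[1-\phi_n]\,d\bm{\theta}$. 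I would then split the domain: on $\Theta_{0n}$ bound $\mbb{E}_{\bm{\theta}}[1-\phi_n]\le 1$ and invoke $\pi(\Theta_{0n})\le\delta_{0n}$ from~(1); on $\Theta_n$ use the type-II error bound $\mbb{E}_{\bm{\theta}}[1-\phi_n]\le\delta_{1n}$ from~(2) together with $\pi(\Theta_n)\le 1$. This yields $\mbb{E}_{\bm{\theta}^\star}\big[\mathbf{1}_B(1-\phi_n)\pi(\Theta_{bn}\mid\mc{D}_n)\big]\le(\delta_{0n}+\delta_{1n})/\delta_{2n}$.

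To finish, I would verify the pointwise inequality
$$\mathbf{1}\{\pi(\Theta_{bn}\mid\mc{D}_n)\ge t\}\;\le\;\phi_n+\mathbf{1}_{B^c}+t^{-1}\,\mathbf{1}_B(1-\phi_n)\,\pi(\Theta_{bn}\mid\mc{D}_n),$$
valid for any measurable $\phi_n\in[0,1]$: if the left-hand indicator is $0$ there is nothing to prove, and if it equals $1$ then on $B^c$ the middle term is already $1$, while on $B$ the right-hand side is at least $\phi_n+(1-\phi_n)=1$ since $\pi(\Theta_{bn}\mid\mc{D}_n)/t\ge 1$. Taking $\mbb{E}_{\bm{\theta}^\star}$ of both sides, the first term is $\le\delta_{1n}'$ by the type-I bound in~(2), the second is $\le\delta_{2n}'$ by~(3), and the third is at most $t^{-1}(\delta_{0n}+\delta_{1n})/\delta_{2n}=\delta_{3n}$ by the core estimate and the definition of $t$; summing gives $\mbb{P}_{\bm{\theta}^\star}\big(\pi(\Theta_{bn}\mid\mc{D}_n)\ge t\big)\le\delta_{1n}'+\delta_{2n}'+\delta_{3n}$, which is the claim.

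This argument has no genuinely hard step --- it is essentially Barron's evidence-lower-bound device --- so the only points requiring care are (i) the legitimacy of the change-of-measure and Fubini manipulations, which is immediate because every integrand is nonnegative, and (ii) arranging the final combination so that the three error probabilities appear additively rather than with factors picked up from a crude union bound or a single global Markov step; the displayed pointwise inequality is precisely what delivers the clean constant even when $\phi_n$ is an arbitrary $[0,1]$-valued test rather than an indicator.
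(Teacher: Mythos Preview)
Your proof is correct. The paper does not actually prove this proposition: it is stated with a citation to \cite{barron1998information} (Lemma~6) and \cite{song2017nearly} (Lemma~A4) and then used as a black box, so there is no ``paper's own proof'' to compare against. Your argument is the standard one for this evidence-lower-bound device: lower-bound the marginal likelihood on the good event $B$, control the integrated type-II error over $\Theta_{0n}\cup\Theta_n$ by splitting into the small-prior piece and the testable piece, and combine via the pointwise inequality with a Markov step to pick up the $\delta_{3n}$ term. All manipulations (Tonelli, change of measure, the split $\int_{\Theta_{0n}\cup\Theta_n}\le\int_{\Theta_{0n}}+\int_{\Theta_n}$) are justified by nonnegativity, and the pointwise inequality is verified correctly for $[0,1]$-valued $\phi_n$.
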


The intuition of this proposition is that any less preferred $\bm{\theta} \in \Theta_{0n} \cup \Theta_{n}$ should either excluded by the prior (for $\bm{\theta} \in \Theta_{0n}$) or distinguished from $\bm{\theta}^\star$ by a uniformly powerful test $\phi_{n}$ (for $\bm{\theta} \in \Theta_{n}$).

Lemmas \ref{lem:Binomial}-\ref{lem:merging} are useful to verify the three conditions in Proposition \ref{prop: barron}, respectively. Their proofs are collected in the next subsection.

\begin{lemma} \label{lem:Binomial}
(Theorem 1.1 in \citep*{pelekis2016lower}) For a Binomial distributed random variable $\texttt{Binomial}(p, \mu)$, if $p \mu < m \le p -1$ then
$$\mbb{P}\left(\texttt{Binomial}(p, \mu) \ge m\right) \le \frac{\mu^{2(\tilde{m}+1)}}{2} \left. {p \choose \tilde{m}+1}  \right/ {m \choose \tilde{m}+1},$$
where $\tilde{m} = \lfloor (m - p\mu)/(1-\mu) \rfloor < m$.
\end{lemma}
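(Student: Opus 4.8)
This statement is quoted verbatim from Theorem~1.1 of \citet{pelekis2016lower}, so strictly speaking its ``proof'' is a pointer to that reference; nonetheless let me describe the route a self-contained argument would take and isolate the one delicate point.

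The workhorse would be a factorial-moment form of Markov's inequality. Write $X\sim\texttt{Binomial}(p,\mu)$ and fix an integer $r$ with $1\le r\le m$. Since $n\mapsto\binom{n}{r}$ is nondecreasing on $n\ge r$, the event $\{X\ge m\}$ is contained in $\{\binom{X}{r}\ge\binom{m}{r}\}$, and since $\binom{X}{r}$ counts the fully successful $r$-subsets among the $p$ trials one has the exact identity $\mathbb{E}\binom{X}{r}=\binom{p}{r}\mu^{r}$. Markov's inequality then gives
\[
\mathbb{P}(X\ge m)\ \le\ \frac{\binom{p}{r}\,\mu^{r}}{\binom{m}{r}}\qquad\text{for every }1\le r\le m .
\]
The appearance of $\tilde m$ comes from minimizing the right-hand side over integers: the ratio of the bound at $r+1$ to that at $r$ equals $\mu(p-r)/(m-r)$, which is strictly increasing in $r$ (because $p>m$) and equals $1$ at $r=(m-p\mu)/(1-\mu)$; hence the bound is minimized at $r=\tilde m+1$ with $\tilde m=\lfloor(m-p\mu)/(1-\mu)\rfloor$. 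The hypotheses $p\mu<m\le p-1$ are exactly what makes this legitimate: $p\mu<m$ forces $(m-p\mu)/(1-\mu)>0$, and $m\le p-1$ forces it to be $<m$, so $\tilde m<m$ and $r=\tilde m+1\in\{1,\dots,m\}$.

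The plan above delivers the bound $\mu^{\tilde m+1}\binom{p}{\tilde m+1}/\binom{m}{\tilde m+1}$, which is larger than the displayed inequality by a factor of $2\mu^{-(\tilde m+1)}$. Recovering the extra factor $\tfrac12\mu^{\tilde m+1}$ is the only nontrivial ingredient, and it is where I would expect the real work to lie: it requires sharpening the crude Markov step into an estimate that also exploits the trials lying outside the chosen $r$-subset (a second-order/overlap correction to the factorial-moment comparison, rather than using only the first factorial moment). This refinement is carried out in \citet{pelekis2016lower}, so I would simply invoke that theorem rather than reproduce its bookkeeping.
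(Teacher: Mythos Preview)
Your reading is right: the paper does not prove this lemma at all, it merely imports Theorem~1.1 of \citet{pelekis2016lower} as a black box, so your ``proof by citation'' matches the paper exactly. The factorial-moment Markov sketch you add is correct as far as it goes and pinpoints accurately that the only nontrivial gap is the extra factor $\tfrac12\mu^{\tilde m+1}$ beyond the plain bound $\mu^{\tilde m+1}\binom{p}{\tilde m+1}/\binom{m}{\tilde m+1}$; deferring that sharpening to the cited reference is entirely appropriate here.
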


\begin{lemma} \label{lem:test}
Under the same assumption of Theorem \ref{thm:3},
\begin{enumerate}[label=(\alph*)]
\item For
\begin{equation*}
\begin{split}
\Theta_{1n}
&= \left\{(\sigma^2, \bm{\alpha}, \bm{\beta}): \begin{split} 
& |\xi \setminus \xi^\star| \le M_0s,\\
& \frac{\sigma^2}{\sigma^{\star 2}} \not \in \left(\frac{1-M_1\epsilon_n}{1+M_1\epsilon_n}, \frac{1+M_1\epsilon_n}{1-M_1\epsilon_n}\right)
\end{split} \right\},\\
\phi_{1n} &=  1\left\{\max_{\xi:~|\xi\setminus\xi^\star| \le M_0s} \left|\mbf{Y}^\T\left[\mbf{I} - \widehat{\mbf{F}}\widehat{\mbf{F}}^\dagger - \widehat{\mbf{U}}_{\xi \cup \xi^\star}\widehat{\mbf{U}}_{\xi \cup \xi^\star}^\dagger\right]\mbf{Y}/ n \sigma^{\star 2}-1 \right| \ge M_1\epsilon_n\right\},
\end{split}
\end{equation*}
we have
\begin{align*}
\En \phi_{1n} &\le \exp(- (M_1^2/8 - M_0 - \smallo(1))s\log p),\\
\sup_{(\sigma, \alpha, \beta) \in \Theta_{1n}}\Ea (1-\phi_{1n}) &\le \exp(- (M_1^2/8 - \smallo(1))s\log p).
\end{align*}
\item For
\begin{equation*}
\begin{split}
\Theta_{2n}
&= \left\{(\sigma^2, \bm{\alpha}, \bm{\beta}): \begin{split} 
& |\xi \setminus \xi^\star| \le M_0s,\\
& \frac{\sigma^2}{\sigma^{\star 2}} \in \left(\frac{1-M_1\epsilon_n}{1+M_1\epsilon_n}, \frac{1+M_1\epsilon_n}{1-M_1\epsilon_n}\right)\\
& \left\Vert {\bm{\alpha} \choose \bm{\beta}} - {\bm{\alpha}^\star \choose \bm{\beta}^\star} \right\Vert > \sigma^\star M_2 \epsilon_n
\end{split} \right\},\\
\phi_{2n} &=  1\left\{\max_{\xi:~|\xi\setminus\xi^\star| \le M_0s} \left\Vert {\widehat{\mbf{F}}^\dagger\mbf{Y} \choose \widehat{\mbf{U}}_{\xi \cup \xi^\star}^\dagger \mbf{Y}} - {\bm{\alpha}^\star \choose \bm{\beta}^\star_{\xi \cup \xi^\star}} \right\Vert  \ge \sigma^\star M_2 \epsilon_n/2\right\}.
\end{split}
\end{equation*}
we have
\begin{align*}
\En \phi_{2n} &\le \exp(-(\min\{\widehat{\kappa}_0,1\}M_2^2/8 - M_0- \smallo(1))s\log p),\\
\sup_{(\sigma, \alpha, \beta) \in \Theta_{2n}}\Ea (1-\phi_{2n}) &\le \exp(-(\min\{\widehat{\kappa}_0,1\}M_2^2/8 - \smallo(1))s\log p).
\end{align*}
\item For
\begin{equation*}
\begin{split}
\Theta_{3n}
&= \left\{(\sigma^2, \bm{\alpha}, \bm{\beta}): \begin{split} 
& |\xi \setminus \xi^\star| \le M_0s,\\
& \frac{\sigma^2}{\sigma^{\star 2}} \in \left(\frac{1-M_1\epsilon_n}{1+M_1\epsilon_n}, \frac{1+M_1\epsilon_n}{1-M_1\epsilon_n}\right)\\
& \Vert (\widehat{\mbf{F}} \bm{\alpha} + \widehat{\mbf{U}}\bm{\beta}) - (\widehat{\mbf{F}} \bm{\alpha}^\star + \widehat{\mbf{U}}\bm{\beta}^\star)  \Vert > \sigma^\star M_3 \sqrt{n}\epsilon_n
\end{split} \right\},\\
\phi_{3n} &=  1\left\{\max_{\xi:~|\xi\setminus\xi^\star| \le M_0s} \left\Vert \left[ \widehat{\mbf{F}}\widehat{\mbf{F}}^\dagger + \widehat{\mbf{U}}_{\xi \cup \xi^\star}\widehat{\mbf{U}}_{\xi \cup \xi^\star}^\dagger\right]\mbf{Y} - \left(\widehat{\mbf{F}}\bm{\alpha}^\star+\widehat{\mbf{U}}\bm{\beta}^\star\right) \right\Vert  \ge \sigma^\star M_3 \sqrt{n}\epsilon_n/2\right\},
\end{split}
\end{equation*}
we have
\begin{align*}
\En \phi_{3n} &\le \exp(-(M_3^2/8 - M_0- \smallo(1))s\log p),\\
\sup_{(\sigma, \alpha, \beta) \in \Theta_{3n}}\Ea (1-\phi_{3n}) &\le \exp(-(M_3^2/8 - \smallo(1))s\log p).
\end{align*}
\item Suppose $\min_{j \in \xi^\star} |\beta_j^\star| \ge M_4 \sigma^\star \epsilon_n$ in addition. For
\begin{equation*}
\begin{split}
\Theta_{4n}
&= \left\{(\sigma^2, \bm{\alpha}, \bm{\beta}): \begin{split} 
& |\xi \setminus \xi^\star| \le M_0s,\\
& \frac{\sigma^2}{\sigma^{\star 2}} \in \left(\frac{1-M_1\epsilon_n}{1+M_1\epsilon_n}, \frac{1+M_1\epsilon_n}{1-M_1\epsilon_n}\right)\\
& \xi \not \supseteq \xi^\star
\end{split} \right\},\\
\phi_{4n} &=  1\left\{\min_{\xi \not \supseteq \xi^\star:~|\xi\setminus\xi^\star| \le M_0s} \left\Vert \left(\widehat{\mbf{U}}_{\xi \cup \xi^\star} \widehat{\mbf{U}}_{\xi \cup \xi^\star}^\dagger - \widehat{\mbf{U}}_{\xi} \widehat{\mbf{U}}_{\xi}^\dagger\right) \mbf{Y} \right\Vert  \le \sigma^\star \sqrt{\widehat{\kappa}_0}M_4 \sqrt{n}\epsilon_n/2\right\},
\end{split}
\end{equation*}
we have
\begin{align*}
\En \phi_{4n} &\le \exp(-(\widehat{\kappa}_0 M_4^2/8 - \smallo(1))s\log p),\\
\sup_{(\sigma, \alpha, \beta) \in \Theta_{4n}}\Ea (1-\phi_{4n}) &\le \exp(-(\widehat{\kappa}_0 M_4^2/8 - \smallo(1))s\log p).
\end{align*}
\item For
\begin{equation*}
\begin{split}
\Theta_{5n}
&= \left\{(\sigma^2, \bm{\alpha}, \bm{\beta}): \begin{split} 
& |\xi \setminus \xi^\star| \le M_0s,\\
& \frac{\sigma^2}{\sigma^{\star 2}} \in \left(\frac{1-M_1\epsilon_n}{1+M_1\epsilon_n}, \frac{1+M_1\epsilon_n}{1-M_1\epsilon_n}\right)\\
& \xi \supseteq \xi^\star,\\
& \left\Vert {\bm{\alpha} \choose \bm{\beta}} - {\bm{\alpha}^\star \choose \bm{\beta}^\star} \right\Vert > \sigma^\star M_2 \epsilon_n
\end{split} \right\},\\
\phi_{5n} &=  1\left\{\max_{\xi \supseteq \xi^\star:~|\xi\setminus\xi^\star| \le M_0s} \left\Vert { \widehat{\mbf{F}}^\dagger\mbf{Y} \choose \widehat{\mbf{U}}_{\xi}^\dagger \mbf{Y}} - {\bm{\alpha}^\star \choose \bm{\beta}^\star_{\xi}} \right\Vert  \ge \sigma^\star M_2 \epsilon_n/2\right\},
\end{split}
\end{equation*}
we have
\begin{align*}
\En \phi_{5n} &\le \exp(-(\min\{\widehat{\kappa}_0,1\}M_2^2/8 - M_0- \smallo(1))s\log p),\\
\sup_{(\sigma, \alpha, \beta) \in \Theta_{5n}}\Ea (1-\phi_{5n}) &\le \exp(-(\min\{\widehat{\kappa}_0,1\}M_2^2/8 - \smallo(1))s\log p).
\end{align*}
\end{enumerate}
\end{lemma}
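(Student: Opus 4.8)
The plan is to reduce every one of the five test statistics to a quadratic or linear functional of the Gaussian noise $\bm{\varepsilon}$ and then invoke standard Gaussian concentration. The first step is the algebraic reduction. From $\widehat{\mbf{F}}^\T\widehat{\mbf{F}}/n=\mbf{I}$ and $\widehat{\mbf{F}}^\T\widehat{\mbf{U}}=\mbf{0}$ one reads off $\widehat{\mbf{F}}^\dagger=\widehat{\mbf{F}}^\T/n$, $\widehat{\mbf{F}}^\dagger\widehat{\mbf{U}}=\mbf{0}$, and $\widehat{\mbf{U}}_\zeta^\dagger\widehat{\mbf{F}}=\mbf{0}$ for every $\zeta$; together with $\widehat{\mbf{U}}_\zeta^\dagger\widehat{\mbf{U}}_\zeta=\mbf{I}$ this shows that whenever $\zeta$ contains the support of the coefficient vector in play, the relevant projection annihilates the $\widehat{\mbf{F}}$-part of the mean and either annihilates the $\widehat{\mbf{U}}$-part (parts (a), (c), (d)) or returns exactly the least-squares coefficient restricted to $\zeta$ (parts (b), (e)). So, writing $\mbf{Y}=\widehat{\mbf{F}}\bm{\alpha}^\star+\widehat{\mbf{U}}\bm{\beta}^\star+\sigma^\star\bm{\varepsilon}$ under $\En$, I would rewrite each statistic (up to $\sigma^\star$) as a function of $\bm{\varepsilon}$ alone: in (a) a $\chi^2_{\,n-k-|\xi\cup\xi^\star|}$ divided by $n$; in (b) and (e) a Gaussian quadratic form $\|(\widehat{\mbf{F}}^\dagger\bm{\varepsilon},\widehat{\mbf{U}}_\zeta^\dagger\bm{\varepsilon})\|^2$ whose matrix has operator norm $\le 1/(n\min\{\widehat{\kappa}_0,1\})$ and trace $\bigOp(s/n)$; in (c) the squared norm of a rank-$\bigOp(s)$ projection of $\bm{\varepsilon}$; and in (d) a $\chi^2_{\,|\xi^\star\setminus\xi|}$ with at most $s$ degrees of freedom.

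Second, I would bound the Type-I errors $\En\phi_{jn}$ one model at a time via the Laurent--Massart $\chi^2$ tail bound (or its Hanson--Wright quadratic-form version for (b), (e)), then union-bound over admissible models. The decisive observation is that in each case the ``center'' of the statistic --- of order $n$ in (a), $s$ in (d), $s/n$ in (b), (c), (e) --- is of strictly smaller order than the deviation the threshold asks for, which after normalization lives on the scale $M_\bullet^2\,s\log p$ (using $n\epsilon_n^2=s\log p$, and, in (a), that $M_1 n\epsilon_n=M_1\sqrt{s n\log p}\succ n\epsilon_n^2$). Hence the single-model tail is $\exp(-(cM_\bullet^2-\smallo(1))s\log p)$, with $c$ equal to $1/8$, $\min\{\widehat{\kappa}_0,1\}/8$, $1/8$, $\widehat{\kappa}_0/8$ in (a)--(d) (and $\min\{\widehat{\kappa}_0,1\}/8$ in (e)), and a union bound over the $\exp((M_0+\smallo(1))s\log p)$ models $\xi$ with $|\xi\setminus\xi^\star|\le M_0 s$ produces the stated exponents; in (d) this model-count factor is absorbed into the $\smallo(1)$, since $M_4$ is eventually sent to infinity when the beta-min hypothesis $\min_{j\in\xi^\star}|\beta^\star_j|\succ\epsilon_n$ of Theorem~\ref{thm:1} is invoked.

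Third, for the Type-II errors $\sup_{\Theta_{jn}}\Ea(1-\phi_{jn})$ no union bound is needed: given a running $(\sigma,\bm{\alpha},\bm{\beta})\in\Theta_{jn}$ with support $\xi$, which by definition of $\Theta_{jn}$ satisfies $|\xi\setminus\xi^\star|\le M_0 s$, I would examine only the single term of the max (or of the min, for (d)) indexed by $\xi'=\xi$, for which the step-one reductions again apply. Writing $\mbf{Y}=\widehat{\mbf{F}}\bm{\alpha}+\widehat{\mbf{U}}\bm{\beta}+\sigma\bm{\varepsilon}$ under $\Ea$, that term splits as an exact ``separation'' plus a lower-order noise term: $(\sigma^2/\sigma^{\star2})(1+\smallo(1))$ against $1$ in (a); $\|(\bm{\alpha},\bm{\beta})-(\bm{\alpha}^\star,\bm{\beta}^\star)\|>\sigma^\star M_2\epsilon_n$ plus noise $\bigOp(\sigma\sqrt{s/n})$ in (b) and (e); $\|(\widehat{\mbf{F}}\bm{\alpha}+\widehat{\mbf{U}}\bm{\beta})-(\widehat{\mbf{F}}\bm{\alpha}^\star+\widehat{\mbf{U}}\bm{\beta}^\star)\|>\sigma^\star M_3\sqrt{n}\epsilon_n$ plus noise $\bigOp(\sigma\sqrt{s})$ in (c); and in (d) the separation is now zero under $\Ea$ while the separation under $\En$ is bounded below (see the next paragraph), so there the roles of $\En$ and $\Ea$ are interchanged but the mechanism is identical. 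Because $\sigma\asymp\sigma^\star$ on every $\Theta_{jn}$ with $j\ge 2$ and the noise terms are $\smallo$ of the separation (as $\sqrt{s/n}=\epsilon_n/\sqrt{\log p}$ and $\sqrt{s}=\sqrt{n}\epsilon_n/\sqrt{\log p}$), the triangle inequality pushes the statistic past half the threshold outside a $\chi^2$-tail event of probability $\exp(-(cM_\bullet^2-\smallo(1))s\log p)$ with the same constants $c$.

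The step I expect to be the main obstacle is the algebraic heart of part (d), namely the lower bound
$$\bnorm{\bigl(\mbf{I}-\widehat{\mbf{U}}_\xi\widehat{\mbf{U}}_\xi^\dagger\bigr)\widehat{\mbf{U}}_{\xi^\star}\bm{\beta}^\star_{\xi^\star}}^2 \;\ge\; n\widehat{\kappa}_0\,\bnorm{\bm{\beta}^\star_{\xi^\star\setminus\xi}}^2 \;\ge\; n\widehat{\kappa}_0 M_4^2\sigma^{\star2}\epsilon_n^2 \qquad\text{for every }\xi\not\supseteq\xi^\star,$$
which is the model-selection analogue of the usual compatibility/incoherence bound and is the equation referred to as ``equation (10)'' in the discussion of Theorem~\ref{thm:1}(c). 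To establish it I would write $\widehat{\mbf{U}}_{\xi^\star}\bm{\beta}^\star_{\xi^\star}=\widehat{\mbf{U}}_{\xi\cap\xi^\star}\bm{\beta}^\star_{\xi\cap\xi^\star}+\widehat{\mbf{U}}_{\xi^\star\setminus\xi}\bm{\beta}^\star_{\xi^\star\setminus\xi}$, annihilate the first summand with the projection (since its column space lies in that of $\widehat{\mbf{U}}_\xi$), and then bound the residual below by $\min_{\bm{w}}\|\widehat{\mbf{U}}_\xi\bm{w}+\widehat{\mbf{U}}_{\xi^\star\setminus\xi}\bm{\beta}^\star_{\xi^\star\setminus\xi}\|^2\ge n\,\lambda_{\min}\bigl(\widehat{\mbf{U}}_{\xi\cup\xi^\star}^\T\widehat{\mbf{U}}_{\xi\cup\xi^\star}/n\bigr)\bnorm{\bm{\beta}^\star_{\xi^\star\setminus\xi}}^2\ge n\widehat{\kappa}_0\bnorm{\bm{\beta}^\star_{\xi^\star\setminus\xi}}^2$, the last step using $|\xi\cup\xi^\star|\le(M_0+1)s$ so that the sparse-eigenvalue bound in \eqref{fixed design} is in force. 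Everything else is bookkeeping: calibrating the $\chi^2$ tail constants precisely enough that the signal scale $s\log p$ dominates both the concentration centers and the $\exp(M_0 s\log p)$ model-count inflation, which is exactly what pins down the relations among $M_0,M_1,M_2,M_3,M_4$ and the exponents $1/8$, $\widehat{\kappa}_0/8$, and $\min\{\widehat{\kappa}_0,1\}/8$.
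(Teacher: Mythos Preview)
Your proposal is correct and follows essentially the same architecture as the paper: reduce each test statistic to a $\chi^2$ or Gaussian quadratic form in $\bm{\varepsilon}$, apply the Laurent--Massart $\chi^2$ tail (the paper packages this as Lemma~\ref{lem:chi2}), union-bound over admissible $\xi$ for the Type-I error, and for the Type-II error specialize to the single index $\xi'=\xi$ carried by the running parameter (the paper isolates this last step as Lemma~\ref{lem:split}).

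Two technical points in part~(d) differ slightly from the paper and are worth flagging. For the signal lower bound you argue directly that $\|(\mbf I-\widehat{\mbf U}_\xi\widehat{\mbf U}_\xi^\dagger)\widehat{\mbf U}_{\xi^\star\setminus\xi}\bm{\beta}^\star_{\xi^\star\setminus\xi}\|^2=\min_{\bm w}\|\widehat{\mbf U}_{\xi\cup\xi^\star}(\bm w,\bm{\beta}^\star_{\xi^\star\setminus\xi})\|^2\ge n\widehat\kappa_0\|\bm{\beta}^\star_{\xi^\star\setminus\xi}\|^2$ from the sparse-eigenvalue assumption; the paper instead recognizes $\widehat{\mbf U}_{\xi^\star\setminus\xi}^\T(\mbf I-\widehat{\mbf U}_\xi\widehat{\mbf U}_\xi^\dagger)\widehat{\mbf U}_{\xi^\star\setminus\xi}$ as the Schur complement of the block $\widehat{\mbf U}_\xi^\T\widehat{\mbf U}_\xi$ in $\widehat{\mbf U}_{\xi\cup\xi^\star}^\T\widehat{\mbf U}_{\xi\cup\xi^\star}$ and invokes Lemma~\ref{lem:schur}. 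The two arguments are equivalent; yours is more elementary. For the Type-I bound $\En\phi_{4n}$ you union-bound over $\xi$ and absorb the resulting $M_0 s\log p$ cost via $M_4\to\infty$, whereas the paper avoids the union bound by asserting the matrix inequality $\widehat{\mbf U}_{\xi\cup\xi^\star}\widehat{\mbf U}_{\xi\cup\xi^\star}^\dagger-\widehat{\mbf U}_\xi\widehat{\mbf U}_\xi^\dagger\le\widehat{\mbf U}_{\xi^\star}\widehat{\mbf U}_{\xi^\star}^\dagger$ to reduce to a single $\chi^2_s$ tail. Your route is the safer one here: that matrix inequality is not valid in general (it fails whenever the column spaces of $\widehat{\mbf U}_\xi$ and $\widehat{\mbf U}_{\xi^\star}$ are non-orthogonal), and the extra $M_0$ term your union bound incurs is in any case harmless for the downstream use in Theorem~\ref{thm:1}(c).
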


\begin{lemma} \label{lem:merging}
Under the same assumption of Theorem \ref{thm:3},
$$\Pn \left(\int \frac{\mc{N}(\mbf{Y}|\widehat{\mbf{F}}\bm{\alpha} + \widehat{\mbf{U}}\bm{\beta}, \sigma^2\mbf{I})}{\mc{N}(\mbf{Y}|\widehat{\mbf{F}}\bm{\alpha}^\star + \widehat{\mbf{U}}\bm{\beta}^\star, \sigma^{\star 2}\mbf{I})} d\pi(\sigma, \bm{\alpha}, \bm{\beta}) \le e^{-C_5s\log p} \right) \lesssim e^{-C_5's\log p}$$
for sufficiently large $C_5$ and $C_5'$.
\end{lemma}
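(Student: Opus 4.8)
The plan is to establish Lemma~\ref{lem:merging} --- which is precisely condition~(3) of Proposition~\ref{prop: barron} for the fixed-design Gaussian model --- by the classical prior-mass (``evidence lower bound'') argument. Write $\bm{\mu}^\star = \widehat{\mbf{F}}\bm{\alpha}^\star + \widehat{\mbf{U}}\bm{\beta}^\star$ and $\bm{\mu} = \widehat{\mbf{F}}\bm{\alpha} + \widehat{\mbf{U}}\bm{\beta}$, so that under $\Pn$ we have $\mbf{Y} = \bm{\mu}^\star + \sigma^\star\bm{\varepsilon}$ with $\bm{\varepsilon}\sim\mc{N}(\mbf{0},\mbf{I})$. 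I would lower-bound the evidence integral by restricting it to a small neighbourhood of the truth,
$$B = \Bigl\{(\sigma^2,\bm{\alpha},\bm{\beta}):\ \xi = \xi^\star,\ \Vert\bm{\alpha}-\bm{\alpha}^\star\Vert \le c\epsilon_n,\ \Vert\bm{\beta}_{\xi^\star}-\bm{\beta}^\star_{\xi^\star}\Vert \le c\epsilon_n,\ \bigl|\sigma^2/\sigma^{\star 2}-1\bigr|\le c\epsilon_n^2\Bigr\}$$
for a small constant $c$, which gives $\int\frac{\mc{N}(\mbf{Y}|\bm{\mu},\sigma^2\mbf{I})}{\mc{N}(\mbf{Y}|\bm{\mu}^\star,\sigma^{\star 2}\mbf{I})}\,d\pi \ge \pi(B)\exp\bigl(\inf_{B}\ell\bigr)$, where $\ell=\ell(\sigma^2,\bm{\alpha},\bm{\beta})$ denotes the log-likelihood ratio in the display below. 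It then suffices to prove (i) the deterministic bound $\log\pi(B)\ge -C''s\log p$, and (ii) that $\inf_B\ell\ge -C's\log p$ on a $\Pn$-event of probability at least $1-e^{-C_5's\log p}$.

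For (ii), I would expand the Gaussian log-likelihood ratio,
$$\ell = -\frac{n}{2}\log\frac{\sigma^2}{\sigma^{\star 2}} \;-\; \frac{\Vert\bm{\mu}^\star-\bm{\mu}\Vert^2}{2\sigma^2} \;-\; \frac{\sigma^\star}{\sigma^2}\,\bm{\varepsilon}^\T(\bm{\mu}^\star-\bm{\mu}) \;+\; \frac{\Vert\bm{\varepsilon}\Vert^2}{2}\Bigl(1-\frac{\sigma^{\star 2}}{\sigma^2}\Bigr),$$
and bound its four terms on $B$, where (since $\bm{\beta}$ is supported on $\xi^\star$ there) $\bm{\mu}^\star-\bm{\mu}=\widehat{\mbf{F}}(\bm{\alpha}^\star-\bm{\alpha})+\widehat{\mbf{U}}_{\xi^\star}(\bm{\beta}^\star_{\xi^\star}-\bm{\beta}_{\xi^\star})$. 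On $B$, $\sigma^2$ is within a constant factor of $\sigma^{\star 2}$, so on the event $\{\Vert\bm{\varepsilon}\Vert^2\le 2n\}$ (whose complement has probability $e^{-cn}=\smallo(e^{-C_5's\log p})$, using $s\log p\prec n$) the first and last terms are $\ge -\bigO(n\epsilon_n^2)=-\bigO(s\log p)$; this is exactly why the $\sigma^2$-neighbourhood must have radius of order $\epsilon_n^2$, not $\epsilon_n$. By \eqref{fixed design}, $\widehat{\mbf{F}}^\T\widehat{\mbf{F}}/n=\mbf{I}$ and $\lambda_{\max}(\widehat{\mbf{U}}_{\xi^\star}^\T\widehat{\mbf{U}}_{\xi^\star}/n)\le\widehat{\kappa}_1$, hence $\Vert\bm{\mu}^\star-\bm{\mu}\Vert^2\le 2n\Vert\bm{\alpha}-\bm{\alpha}^\star\Vert^2+2n\widehat{\kappa}_1\Vert\bm{\beta}_{\xi^\star}-\bm{\beta}^\star_{\xi^\star}\Vert^2=\bigO(n\epsilon_n^2)$ on $B$, so the second term is $-\bigO(s\log p)$ too. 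The cross term is the only delicate one: Cauchy--Schwarz on the split above, together with the radius constraints, gives $|\bm{\varepsilon}^\T(\bm{\mu}^\star-\bm{\mu})|\le c\epsilon_n\bigl(\Vert\widehat{\mbf{F}}^\T\bm{\varepsilon}\Vert+\Vert\widehat{\mbf{U}}_{\xi^\star}^\T\bm{\varepsilon}\Vert\bigr)$ \emph{uniformly over} $B$; since $\widehat{\mbf{F}}^\T\bm{\varepsilon}\sim\mc{N}(\mbf{0},n\mbf{I}_k)$ and $\widehat{\mbf{U}}_{\xi^\star}^\T\bm{\varepsilon}$ is a centred Gaussian vector in $\mbb{R}^s$ whose covariance $\widehat{\mbf{U}}_{\xi^\star}^\T\widehat{\mbf{U}}_{\xi^\star}$ has eigenvalues $\le n\widehat{\kappa}_1$, a $\chi^2$ tail bound yields $\Vert\widehat{\mbf{F}}^\T\bm{\varepsilon}\Vert^2=\bigO(ns\log p)$ and $\Vert\widehat{\mbf{U}}_{\xi^\star}^\T\bm{\varepsilon}\Vert^2=\bigO(ns\log p)$ off an event of probability $\le e^{-C_5's\log p}$, whence the cross term is $\bigO\bigl(\epsilon_n\sqrt{ns\log p}\bigr)=\bigO(s\log p)$. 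Summing the four bounds gives $\inf_B\ell\ge -C's\log p$ on the intersection of these events.

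For (i), I would factor the prior. The Bernoulli layer contributes $\pi(\xi=\xi^\star)=(s_0/p)^s(1-s_0/p)^{p-s}\ge e^{-s\log(p/s_0)-\bigO(s_0)}$, which is $e^{-(1+\smallo(1))s\log p}$ under the standing assumption $1\le s_0=\bigO(1)$. Conditionally on $\xi=\xi^\star$, restrict $\bm{\alpha}$ to the $\ell_\infty$-box of half-width $c\epsilon_n/\sqrt k$ around $\bm{\alpha}^\star$, $\bm{\beta}_{\xi^\star}$ to the $\ell_\infty$-box of half-width $c\epsilon_n/\sqrt s$ around $\bm{\beta}^\star_{\xi^\star}$ (both contained in the respective $\ell_2$-balls), and $\sigma^2$ to the stated interval; on this product set $\sigma$ stays in a fixed compact subinterval of $(0,\infty)$, the scale factors satisfy $\tau_j^{-2}=\Vert\widehat{\mbf{U}}_j\Vert^2/n\in[\widehat{\kappa}_0,\widehat{\kappa}_1]$ by \eqref{fixed design}, and the slab arguments $\alpha_j/\sigma$, $\beta_j/(\tau_j\sigma)$ stay in a fixed compact set because $\Vert\bm{\alpha}^\star\Vert,\Vert\bm{\beta}^\star\Vert=\bigO(1)$. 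Hence $g$ is bounded below by a positive constant on the $\sigma^2$-interval, and each slab factor $h(\cdot)$ is bounded below by a positive constant --- here the slab property $-\log[\inf_{|z|\le t}h(z)]=\bigO(t^2)$ is used, which in particular forces $h>0$ on compacta. Therefore
$$\log\pi(B)\ge -s\log p+(k+s+2)\log\epsilon_n-\bigO(s\log s)-\bigO(s)\ge -C''s\log p,$$
using $\log(1/\epsilon_n)=\tfrac12\log\bigl(n/(s\log p)\bigr)=\bigO(\log p)$ in the high-dimensional regime $p\succ n$ and $\log s\le\log p$.

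Combining (i) and (ii), on the good event $\int\frac{\mc{N}(\mbf{Y}|\bm{\mu},\sigma^2\mbf{I})}{\mc{N}(\mbf{Y}|\bm{\mu}^\star,\sigma^{\star 2}\mbf{I})}\,d\pi\ge \pi(B)e^{-C's\log p}\ge e^{-(C'+C'')s\log p}$, so the lemma holds with $C_5=C'+C''$ (hence with any larger $C_5$), while $C_5'$ can be taken arbitrarily large by sharpening the $\chi^2$-tail bound at the price of enlarging $C'$ and thus $C_5$. The main obstacle will be the uniform-over-$B$ control of the random cross term $\bm{\varepsilon}^\T(\bm{\mu}^\star-\bm{\mu})$; the linearity of $\bm{\mu}$ in $(\bm{\alpha},\bm{\beta})$ reduces this to Gaussian tail bounds on the two fixed vectors $\widehat{\mbf{F}}^\T\bm{\varepsilon}$ and $\widehat{\mbf{U}}_{\xi^\star}^\T\bm{\varepsilon}$. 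Everything else is routine, as long as one remembers the $\epsilon_n^2$-scale of the $\sigma^2$-neighbourhood and that the boundedness of the $\tau_j$ and of $\Vert\bm{\alpha}^\star\Vert,\Vert\bm{\beta}^\star\Vert$ is precisely what keeps the slab factors bounded away from zero on $B$.
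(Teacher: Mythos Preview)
Your proposal is correct and follows essentially the same two-step ``prior-mass + likelihood-ratio'' argument as the paper: restrict to a neighbourhood $B$ of the truth with $\xi=\xi^\star$, $\sigma^2$-radius of order $\epsilon_n^2$, and $(\bm{\alpha},\bm{\beta}_{\xi^\star})$-radius of order $\epsilon_n$; bound $\inf_B\ell$ via the same four-term Gaussian expansion and $\chi^2$ tails on $\widehat{\mbf{F}}^\T\bm{\varepsilon}$, $\widehat{\mbf{U}}_{\xi^\star}^\T\bm{\varepsilon}$; and lower-bound $\pi(B)$ by factoring the prior and using that the slab arguments stay in a fixed compact set.

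The only noteworthy deviation is in how you handle the $\Vert\bm{\varepsilon}\Vert^2$ term. The paper takes a \emph{one-sided} interval $\sigma^2/\sigma^{\star 2}\in[1,1+\eta_1\epsilon_n^2]$, which makes the term $\tfrac{\Vert\bm{\varepsilon}\Vert^2}{2}(1-\sigma^{\star 2}/\sigma^2)$ nonnegative and thus harmless --- no conditioning on $\{\Vert\bm{\varepsilon}\Vert^2\le 2n\}$ is needed. Your two-sided interval forces you to add that auxiliary event; this is fine (its complement has probability $e^{-cn}=\smallo(e^{-C_5's\log p})$ since $s\log p\prec n$), but the paper's one-sided trick is slightly cleaner. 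Apart from this and the equivalent choice of controlling $\Vert\widehat{\mbf{F}}^\T\bm{\varepsilon}\Vert$, $\Vert\widehat{\mbf{U}}_{\xi^\star}^\T\bm{\varepsilon}\Vert$ directly rather than the projection $\bm{\varepsilon}^\T[\widehat{\mbf{F}}\widehat{\mbf{F}}^\dagger+\widehat{\mbf{U}}_{\xi^\star}\widehat{\mbf{U}}_{\xi^\star}^\dagger]\bm{\varepsilon}$ (related by the identity $\widehat{\mbf{F}}^\T\widehat{\mbf{F}}=n\mbf{I}$), the two proofs are the same.
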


\begin{proof}[Proof of Theorem \ref{thm:1}, part (a)]
We verify the three conditions in Proposition \ref{prop: barron} one by one. Let
$$\Theta_{0n} = \{(\sigma^2, \bm{\alpha}, \bm{\beta}): |\xi \setminus \xi^\star| > M_0s\},~~~\Theta_{n} = \Theta_{1n} \cup \Theta_{2n},~~~\phi_{n} = \max\{\phi_{1n}, \phi_{2n}\},$$
where $\Theta_{1n}, \Theta_{2n}, \phi_{1n}, \phi_{2n}$ are defined in Lemma \ref{lem:test}. Then $\Theta_{0n} \cup \Theta_{n} = \Theta_{0n} \cup \Theta_{1n} \cup \Theta_{2n}  = A^c(\sigma^\star, \bm{\alpha}^\star, \bm{\beta}^\star, M_0, M_1, M_2, \epsilon_n)$.
Applying Lemma \ref{lem:Binomial} yields that
\begin{align*}
\pi(\Theta_{0n}) &\le \pi(|\xi| > M_0s) = \mbb{P}(\texttt{Binomial}(p, s_0/p) > M_0s)\\
&\lesssim \frac{1}{2}\left(\frac{s_0}{p}\right)^{2(M_0s-s_0+1)} {p \choose M_0s - s_0 + 1} \le \frac{1}{2}\left(\frac{s_0}{p}\right)^{2(M_0s - s_0 + 1)} p^{M_0s}\\
&\lesssim \delta_{0n} := e^{-M_0s\log p/2},
\end{align*}
for sufficiently large $M_0$. From parts (a),(b) of Lemma \ref{lem:test} and Lemma \ref{lem:split}, it follows that
\begin{align*}
\sup_{\Theta_n} \Ea(1-\phi_n) &\le \max\left\{\sup_{\Theta_{1n}} \Ea(1-\phi_{1n}),~\sup_{\Theta_{2n}} \Ea(1-\phi_{2n})\right\}\\
&\le \delta_{1n} := \exp(- (\min\{M_1^2/8, M_2^2/8, \widehat{\kappa}_0M_2^2/8\} - \smallo(1))s\log p)\\
\En \phi_n &\le \En \phi_{1n} +\En \phi_{2n}\\
&\le \delta_{1n}' := \exp(- (\min\{M_1^2/8, M_2^2/8, \widehat{\kappa}_0M_2^2/8\} -M_0 - \smallo(1))s\log p)
\end{align*}
By Lemma \ref{lem:merging}, the third condition in Proposition \ref{prop: barron} hold asymptotically with
$$\delta_{2n} = \exp( -C_5s\log p),~~~\delta_{2n}' = \exp( -C_5's\log p)$$
for any sufficiently large $C_5, C_5'$. For any $C_1$, $C_1'$, We can find sufficiently large $M_0, M_1, M_2, C_5, C_5'$ and suitable $\delta_{3n}$ such that
$$\frac{\delta_{0n} + \delta_{1n}}{\delta_{2n}\delta_{3n}} \le \exp(-C_1s\log p), ~~~\delta_{1n}' + \delta_{2n}' + \delta_{3n} \le \exp(-C_1's\log p)$$
to complete the proof.
\end{proof}

\begin{proof}[Proof of Theorem \ref{thm:3}, part(b)]
We use a similar argument to the proof of Theorem \ref{thm:3}, part (a) but different $\Theta_n$ and $\phi_n$. Let
$$\Theta_{n} = \Theta_{1n} \cup \Theta_{3n},~~~\phi_{n} = \max\{\phi_{1n}, \phi_{3n}\},$$
where $\Theta_{1n}, \Theta_{3n}, \phi_{1n}, \phi_{3n}$ are defined in Lemma \ref{lem:test}. Then
\begin{align*}
\Theta_{0n} \cup \Theta_{n} &= \Theta_{0n} \cup \Theta_{1n} \cup \Theta_{3n}\\
&\supseteq \{(\sigma^2, \bm{\alpha}, \bm{\beta}): \Vert (\widehat{\mbf{F}} \bm{\alpha} + \widehat{\mbf{U}}\bm{\beta}) - (\widehat{\mbf{F}} \bm{\alpha}^\star + \widehat{\mbf{U}}\bm{\beta}^\star) \Vert > \sigma^\star M_3 \sqrt{n}\epsilon_n\}.
\end{align*}
The second condition in Proposition \ref{prop: barron} follows from parts (a),(c) of Lemma \ref{lem:test} and Lemma \ref{lem:split}.
\begin{align*}
\sup_{\Theta_n} \Ea(1-\phi_n) &\le \max\left\{\sup_{\Theta_{1n}} \Ea(1-\phi_{1n}),~\sup_{\Theta_{3n}} \Ea(1-\phi_{3n})\right\}\\
&\le \delta_{1n} := \exp(- (\min\{M_1^2/8, M_3^2/8\} - \smallo(1))s\log p)\\
\En \phi_n &\le \En \phi_{1n} +\En \phi_{3n}\\
&\le \delta_{1n}' := \exp(- (\min\{M_1^2/8, M_3^2/8\} -M_0 - \smallo(1))s\log p)
\end{align*}
\end{proof}

\begin{proof}[Proof of Theorem \ref{thm:3}, part(c)]
We use a similar argument to the proof of Theorem \ref{thm:3}, part (a) but different $\Theta_n$ and $\phi_n$. $$\Theta_{n} = \Theta_{1n} \cup \Theta_{4n} \cup \Theta_{5n},~~~\phi_{n} = \max\{\phi_{1n}, \phi_{4n}, \phi_{5n}\},$$
where $\Theta_{1n}, \Theta_{4n}, \Theta_{5n}, \phi_{1n}, \phi_{4n}, \phi_{5n}$ are defined in Lemma \ref{lem:test}. Then $\Theta_{0n} \cup \Theta_{n} = \Theta_{0n} \cup \Theta_{1n} \cup \Theta_{4n} \cup \Theta_{5n} = A^c(\sigma^\star, \bm{\alpha}^\star, \bm{\beta}^\star, M_0, M_1, M_2, \epsilon_n) \cup \{\xi \not \supseteq \xi^\star \}$. The second condition in Proposition \ref{prop: barron} follows from parts (a),(d),(e) of Lemma \ref{lem:test} and Lemma \ref{lem:split}.
\begin{align*}
\sup_{\Theta_n} \Ea(1-\phi_n) &\le \max\left\{\sup_{\Theta_{1n}} \Ea(1-\phi_{1n}),~\sup_{\Theta_{4n}} \Ea(1-\phi_{4n}), ~\sup_{\Theta_{5n}} \Ea(1-\phi_{5n})\right\}\\
&\le \delta_{1n} := \exp(- (\min\{M_1^2/8, \widehat{\kappa}_0 M_4^2/8, M_2^2/8, \widehat{\kappa}_0M_2^2/8\} - \smallo(1))s\log p)\\
\En \phi_n &\le \En \phi_{1n} + \En \phi_{4n} + \En \phi_{5n}\\
&\le \delta_{1n}' := \exp(- (\min\{M_1^2/8, \widehat{\kappa}_0 M_4^2/8 + M_0, M_2^2/8, \widehat{\kappa}_0M_2^2/8\} - M_0 - \smallo(1))s\log p).
\end{align*}
\end{proof}

\subsection{Technical Proofs of Lemmas}

The proofs of Lemmas \ref{lem:test}-\ref{lem:merging} invoke a few preliminary results. We list them as follows.

\begin{lemma}[Probability bounds of chi-squared random variables] \label{lem:chi2}
Let $\chi^2_d$ be a chi-squared random variable of degree $d$.
\begin{enumerate}[label=(\alph*)]
\item For any $\epsilon_n$ such that $n\epsilon_n > d_n$,
\begin{align*}
\mbb{P}(\chi^2_{n - d_n} / n  \ge 1 + \epsilon_n) &\le e^{- \min \left\{\frac{(n\epsilon_n+d_n)^2}{8(n - d_n)}, \frac{n\epsilon_n+d_n}{8}\right\}},\\
\mbb{P}(\chi^2_{n - d_n} / n  \le 1 - \epsilon_n) &\le e^{- \min \left\{\frac{(n\epsilon_n-d_n)^2}{8(n - d_n)}, \frac{n\epsilon_n-d_n}{8}\right\}},
\end{align*}
In addition, if $\epsilon_n \to 0$ but $n\epsilon_n \succ d_n$,
\begin{align*}
\mbb{P}(\chi^2_{n - d_n} / n  \ge 1 + \epsilon_n) &\lesssim e^{-(1/8 - \smallo(1))n\epsilon_n^2}\\
\mbb{P}(\chi^2_{n - d_n} / n  \ge 1 + \epsilon_n) &\lesssim e^{-(1/8 - \smallo(1))n\epsilon_n^2}.
\end{align*}
\item
$$\mbb{P}(\chi^2_{d_n} \ge t_n) \le e^{-\left(\sqrt{2t_n - d_n} - \sqrt{d_n}\right)^2/4}.$$
In addition, if $t_n \succ d_n$ then for any $\tilde{t}_n$ such that $\tilde{t}_n/t_n \to 1$
$$\mbb{P}(\chi^2_{d_n} \ge t_n) \lesssim e^{-(1/2 - \smallo(1))\tilde{t}_n}.$$
\end{enumerate}
\end{lemma}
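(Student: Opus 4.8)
Both parts are instances of the standard Chernoff/moment-generating-function tail bounds for chi-squared random variables (equivalently, the Laurent--Massart inequalities), and the asymptotic refinements follow by elementary algebra once the growth hypotheses on $d_n$ are used. So the plan is: (i) record the two-sided sub-exponential bound $\mbb{P}(\chi^2_d - d \ge t) \le \exp(-\min\{t^2/(8d), t/8\})$ and $\mbb{P}(\chi^2_d - d \le -t) \le \exp(-\min\{t^2/(8d), t/8\})$, which come from optimizing the Cram\'er transform $\lambda \mapsto -\lambda t - (d/2)\log(1-2\lambda)$ over $\lambda \in (0,1/2)$ and splitting into the regimes $t \le d$ and $t > d$; (ii) substitute $d = n - d_n$ and the appropriate deviation $t$ for part (a); (iii) substitute the Laurent--Massart form $\mbb{P}(\chi^2_d \ge d + 2\sqrt{dx} + 2x) \le e^{-x}$ and invert a quadratic for part (b); (iv) simplify the exponents under $\epsilon_n \to 0$, $n\epsilon_n \succ d_n$ (resp. $t_n \succ d_n$).

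\textbf{Part (a).} The event $\chi^2_{n-d_n}/n \ge 1 + \epsilon_n$ is exactly $\chi^2_{n-d_n} - (n-d_n) \ge n\epsilon_n + d_n$, so applying the upper-tail bound with $d = n - d_n$ and $t = n\epsilon_n + d_n$ produces the first displayed inequality. Similarly $\chi^2_{n-d_n}/n \le 1 - \epsilon_n$ is $\chi^2_{n-d_n} - (n-d_n) \le -(n\epsilon_n - d_n)$, which is nonvacuous precisely because the hypothesis gives $n\epsilon_n > d_n$; the lower-tail bound with $t = n\epsilon_n - d_n$ gives the second inequality. For the refinement, suppose $\epsilon_n \to 0$ and $n\epsilon_n \succ d_n$. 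Then $d_n = \smallo(n\epsilon_n) = \smallo(n)$, so $n\epsilon_n + d_n = (1+\smallo(1))n\epsilon_n$ and $n - d_n = (1+\smallo(1))n$, whence $(n\epsilon_n+d_n)^2/(8(n-d_n)) = (1/8)(1+\smallo(1))\,n\epsilon_n^2 \ge (1/8 - \smallo(1))n\epsilon_n^2$; and since $\epsilon_n \to 0$ we eventually have $n\epsilon_n^2 \le n\epsilon_n < n\epsilon_n + d_n$, so the minimum is realized by the quadratic term. This yields $\mbb{P}(\chi^2_{n-d_n}/n \ge 1 + \epsilon_n) \lesssim e^{-(1/8-\smallo(1))n\epsilon_n^2}$, and the lower-tail case is identical (the second asymptotic display should read $\chi^2_{n-d_n}/n \le 1 - \epsilon_n$).

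\textbf{Part (b).} Start from $\mbb{P}(\chi^2_d \ge d + 2\sqrt{dx} + 2x) \le e^{-x}$, valid for all $x \ge 0$. Given $t_n > d_n$, solve $2x + 2\sqrt{d_n}\,\sqrt{x} + (d_n - t_n) = 0$ for $\sqrt{x}$: the positive root is $\sqrt{x} = (\sqrt{2t_n - d_n} - \sqrt{d_n})/2$, hence $x = (\sqrt{2t_n - d_n} - \sqrt{d_n})^2/4$, which is exactly the claimed exponent. For the refinement, if $t_n \succ d_n$ then $\sqrt{2t_n - d_n}/\sqrt{2t_n} \to 1$ and $\sqrt{d_n}/\sqrt{2t_n} \to 0$, so $(\sqrt{2t_n - d_n} - \sqrt{d_n})^2/(2t_n) \to 1$, i.e.\ the exponent is $(1/2 + \smallo(1))t_n$, which is $\ge (1/2 - \smallo(1))\tilde t_n$ for any $\tilde t_n$ with $\tilde t_n / t_n \to 1$; this gives $\mbb{P}(\chi^2_{d_n} \ge t_n) \lesssim e^{-(1/2 - \smallo(1))\tilde t_n}$.

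\textbf{Main obstacle.} There is no substantive difficulty here; the only care required is bookkeeping — tracking which branch of the minimum in part (a) is active in the asymptotic regime, and performing the quadratic inversion in part (b) correctly — together with flagging the apparent typo in the second asymptotic line of part (a).
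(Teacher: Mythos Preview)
Your proposal is correct and follows essentially the same approach as the paper: the paper invokes the sub-exponential tail of chi-squared for part (a) and Laurent--Massart for part (b), then simplifies the exponents under the stated asymptotic regimes exactly as you do. Your write-up is more explicit (spelling out the Chernoff bound, the quadratic inversion, and which branch of the minimum is active), and your observation about the typo in the second asymptotic line of part (a) is correct.
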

\begin{proof}
For part (a), the first assertion follows from the sub-exponential tail of chi-squared distribution, and the second assertion is due to
\begin{align*}
(1/8 - \smallo(1))n\epsilon_n^2 &\lesssim \frac{(n\epsilon_n+d_n)^2}{8(n - d_n)} \lesssim \frac{n\epsilon_n+d_n}{8}\\
(1/8 - \smallo(1))n\epsilon_n^2 &\lesssim \frac{(n\epsilon_n-d_n)^2}{8(n - d_n)} \lesssim \frac{n\epsilon_n-d_n}{8}
\end{align*}
For part (b), the first assertion is a corollary of \cite[Lemma 1]{laurent2000adaptive}, and the second assertion follows from
$$(1/2 - \smallo(1))\tilde{t}_n \lesssim \left(\sqrt{2t_n - d_n} - \sqrt{d_n}\right)^2/4.$$
\end{proof}

\begin{lemma}\label{lem:split}
For a collection of subspace $\{\Theta_j\}_{j=1}^m$ and a collection of test functions $\{\varphi_j\}_{j=1}^m$
$$\sup_{\theta \in \cup_{j=1}^m \Theta_j} \mbb{E}_\theta \left(1 - \max_{j=1}^m \varphi_j\right) \le \max_{j=1}^m \left\{\sup_{\theta \in \Theta_j} \mbb{E}_\theta(1 - \varphi_j) \right\}.$$
\end{lemma}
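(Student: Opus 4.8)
\textbf{Proof proposal for Lemma \ref{lem:split}.}
The plan is to reduce the statement to the pointwise observation that $\max_{j=1}^m \varphi_j(\mc{D}_n) \ge \varphi_k(\mc{D}_n)$ for every index $k$ and every data realization, since each $\varphi_j$ takes values in $[0,1]$ (being a test function). Consequently $1 - \max_{j=1}^m \varphi_j \le 1 - \varphi_k$ pointwise, and integrating under $\mbb{P}_\theta$ gives $\mbb{E}_\theta(1 - \max_{j=1}^m \varphi_j) \le \mbb{E}_\theta(1 - \varphi_k)$ for every $k$ and every $\theta$.

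Next I would use the membership structure of the union. Fix an arbitrary $\theta \in \cup_{j=1}^m \Theta_j$; then $\theta \in \Theta_k$ for at least one index $k = k(\theta)$. Applying the inequality from the previous step with this particular $k$, and then enlarging the right-hand side first to the supremum over $\Theta_k$ and then to the maximum over $j$, yields
$$\mbb{E}_\theta\Bigl(1 - \max_{j=1}^m \varphi_j\Bigr) \le \mbb{E}_\theta(1 - \varphi_{k}) \le \sup_{\theta' \in \Theta_{k}} \mbb{E}_{\theta'}(1 - \varphi_{k}) \le \max_{j=1}^m \Bigl\{\sup_{\theta' \in \Theta_j} \mbb{E}_{\theta'}(1 - \varphi_j)\Bigr\}.$$
Since the final bound does not depend on $\theta$, taking the supremum over $\theta \in \cup_{j=1}^m \Theta_j$ on the left-hand side gives exactly the claimed inequality.

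This lemma is elementary and I do not anticipate any genuine obstacle; the only point requiring a word of care is noting that $\varphi_j \in [0,1]$ so that the pointwise domination $\max_j \varphi_j \ge \varphi_k$ is legitimate and the expectations are well defined (indeed all quantities lie in $[0,1]$). The role of the lemma in the paper is purely organizational: it lets us assemble the global test $\phi_n = \max\{\phi_{1n}, \phi_{2n}\}$ (or $\max$ over three tests in parts (b), (c)) from the component tests in Lemma \ref{lem:test} while controlling the type-II error over the union $\Theta_n = \cup \Theta_{jn}$, as used in the verification of condition (2) of Proposition \ref{prop: barron}.
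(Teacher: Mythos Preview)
Your proof is correct and follows essentially the same approach as the paper: both rely on the pointwise inequality $1-\max_{k}\varphi_k \le 1-\varphi_j$ and then pass to the supremum over $\Theta_j$; the paper just organizes it by first writing $\sup_{\theta\in\cup_j\Theta_j}=\max_j\sup_{\theta\in\Theta_j}$ before applying the pointwise bound, whereas you pick $k=k(\theta)$ directly. (Minor note: the inequality $\max_j\varphi_j\ge\varphi_k$ holds for any real-valued functions, so the $[0,1]$ assumption is not actually needed for that step.)
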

\begin{proof}
\begin{align*}
\sup_{\theta \in \cup_{j=1}^m \Theta_j} \mbb{E}_\theta \left(1 - \max_{j=1}^m \varphi_j\right)
&= \max_{j=1}^m \left\{\sup_{\theta \in \Theta_j} \mbb{E}_\theta \left(1 - \max_{k=1}^m \varphi_k\right) \right\}\\
&= \max_{j=1}^m \left\{\sup_{\theta \in \Theta_j} \mbb{E}_\theta \left( \min_{k=1}^m (1-\varphi_k)\right) \right\}\\
&\le \max_{j=1}^m \left\{\sup_{\theta \in \Theta_j} \mbb{E}_\theta \left( 1-\varphi_j \right) \right\}.
\end{align*}
\end{proof}

\begin{lemma}[Part of Corollary 2.4 in \citep{liu2005eigenvalue}] \label{lem:schur}
Let $$\mbf{G} = \left[\begin{array}{c c} \mbf{G}_{11} & \mbf{G}_{12}\\ 
\mbf{G}_{21} & \mbf{G}_{22}\end{array}\right]$$ be a $p \times p$ positive semi-definite matrix with $q \times q$ non-singular principal sub-matrix $\mbf{G}_{11}$ then
$$\lambda_{\min}(\mbf{G}_{22} - \mbf{G}_{21}\mbf{G}_{11}^{-1}\mbf{G}_{12}) \ge \lambda_{\min}(\mbf{G}).$$
\end{lemma}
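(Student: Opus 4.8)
The plan is to prove the inequality through the variational (Rayleigh-quotient) characterization of the smallest eigenvalue, using the elementary fact that the Schur complement $\mbf{S} := \mbf{G}_{22} - \mbf{G}_{21}\mbf{G}_{11}^{-1}\mbf{G}_{12}$ arises by partially minimizing the quadratic form $\bm{z}^\T\mbf{G}\bm{z}$ over the first block of coordinates. Note first that, since $\mbf{G}\ge 0$, its principal submatrix $\mbf{G}_{11}$ is positive semidefinite, and being non-singular it is positive definite, so $\mbf{G}_{11}^{-1}$ exists.

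First I would fix an arbitrary $\bm{y}\in\mathbb{R}^{p-q}$ and minimize $\bm{z}^\T\mbf{G}\bm{z}$ over all $\bm{z}=(\bm{x}^\T,\bm{y}^\T)^\T$ sharing the lower block $\bm{y}$. Writing out $\bm{z}^\T\mbf{G}\bm{z}=\bm{x}^\T\mbf{G}_{11}\bm{x}+2\bm{x}^\T\mbf{G}_{12}\bm{y}+\bm{y}^\T\mbf{G}_{22}\bm{y}$, a strictly convex quadratic in $\bm{x}$, the minimizer is $\bm{x}^\star=-\mbf{G}_{11}^{-1}\mbf{G}_{12}\bm{y}$; substituting back and using $\mbf{G}_{21}=\mbf{G}_{12}^\T$ yields, for $\bm{z}^\star:=((\bm{x}^\star)^\T,\bm{y}^\T)^\T$,
\begin{equation*}
\bm{z}^{\star\T}\mbf{G}\bm{z}^\star=\bm{y}^\T\big(\mbf{G}_{22}-\mbf{G}_{21}\mbf{G}_{11}^{-1}\mbf{G}_{12}\big)\bm{y}=\bm{y}^\T\mbf{S}\bm{y}.
\end{equation*}
Since $\mbf{G}\ge 0$, this identity already shows $\bm{y}^\T\mbf{S}\bm{y}\ge 0$ for all $\bm{y}$, i.e. $\mbf{S}$ is itself positive semidefinite.

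Next I would combine two elementary bounds. By the definition of the smallest eigenvalue, $\bm{z}^{\star\T}\mbf{G}\bm{z}^\star\ge\lambda_{\min}(\mbf{G})\,\Vert\bm{z}^\star\Vert^2$; and by construction $\Vert\bm{z}^\star\Vert^2=\Vert\bm{x}^\star\Vert^2+\Vert\bm{y}\Vert^2\ge\Vert\bm{y}\Vert^2$. Hence for every $\bm{y}\ne\bm{0}$,
\begin{equation*}
\frac{\bm{y}^\T\mbf{S}\bm{y}}{\Vert\bm{y}\Vert^2}=\frac{\bm{z}^{\star\T}\mbf{G}\bm{z}^\star}{\Vert\bm{y}\Vert^2}\ge\frac{\bm{z}^{\star\T}\mbf{G}\bm{z}^\star}{\Vert\bm{z}^\star\Vert^2}\ge\lambda_{\min}(\mbf{G}),
\end{equation*}
where the middle inequality is legitimate because the numerator $\bm{z}^{\star\T}\mbf{G}\bm{z}^\star=\bm{y}^\T\mbf{S}\bm{y}$ is nonnegative. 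Taking the infimum over $\bm{y}\ne\bm{0}$ of the left-hand side gives $\lambda_{\min}(\mbf{S})\ge\lambda_{\min}(\mbf{G})$, as claimed.

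I do not expect any genuine obstacle; the only step needing care is that middle inequality, where enlarging the denominator from $\Vert\bm{y}\Vert^2$ to $\Vert\bm{z}^\star\Vert^2$ is valid precisely because positive semidefiniteness of the full matrix $\mbf{G}$ (not merely of the block $\mbf{G}_{11}$) forces the numerator to be nonnegative. An alternative argument, when $\mbf{G}$ is in fact positive definite, is to invoke the block-inverse identity $\mbf{S}^{-1}=(\mbf{G}^{-1})_{22}$ together with Cauchy interlacing applied to the principal submatrix $(\mbf{G}^{-1})_{22}$ of $\mbf{G}^{-1}$, extending to the semidefinite case by continuity; but the Rayleigh-quotient route above handles both cases at once and is what I would write up.
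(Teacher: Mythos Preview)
Your proof is correct. The paper does not actually supply its own proof of this lemma; it simply cites it as part of Corollary~2.4 in \citet{liu2005eigenvalue} and uses the result as a black box in the proof of Lemma~\ref{lem:test}(d). Your Rayleigh-quotient argument---partially minimizing $\bm z^\T\mbf G\bm z$ over the first block to produce $\bm y^\T\mbf S\bm y$, then comparing Rayleigh quotients using $\Vert\bm z^\star\Vert^2\ge\Vert\bm y\Vert^2$ and the nonnegativity of the numerator---is a clean, self-contained derivation that would serve as a drop-in replacement for the citation. The care you took with the middle inequality (valid because $\mbf G\ge 0$ forces $\bm z^{\star\T}\mbf G\bm z^\star\ge 0$) is exactly the point that needs to be made explicit.
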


\begin{proof}[Proof of Lemma \ref{lem:test}, part (a)]
Under the null hypothesis, write
\begin{align*}
&~~~\En \phi_{1n}\\
&\overset{(1)}{=} \Pn \left(\max_{\xi:~|\xi\setminus\xi^\star| \le M_0s} \left|\bm{\varepsilon}^\T\left[\mbf{I} - \widehat{\mbf{F}}\widehat{\mbf{F}}^\dagger - \widehat{\mbf{U}}_{\xi \cup \xi^\star}\widehat{\mbf{U}}_{\xi \cup \xi^\star}^\dagger\right]\bm{\varepsilon}/ n -1 \right| \ge M_1\epsilon_n \right)\\
&\overset{(2)}{\le} \Pn \left( \bm{\varepsilon}^\T\left[\mbf{I} - \widehat{\mbf{F}}\widehat{\mbf{F}}^\dagger - \widehat{\mbf{U}}_{\xi^\star}\widehat{\mbf{U}}_{\xi^\star}^\dagger\right]\bm{\varepsilon}/ n \ge 1 + M_1\epsilon_n\right)\\
&~~~+ \sum_{\xi: ~|\xi| = M_0s, ~\xi \setminus\xi^\star = \emptyset} \Pn \left( \bm{\varepsilon}^\T\left[\mbf{I} - \widehat{\mbf{F}}\widehat{\mbf{F}}^\dagger - \widehat{\mbf{U}}_{\xi \cup \xi^\star}\widehat{\mbf{U}}_{\xi \cup \xi^\star}^\dagger\right]\bm{\varepsilon}/ n \le 1 - M_1\epsilon_n\right)\\
&\overset{(3)}{=} \mbb{P}\left( \chi^2_{n - k - s}/n \ge 1+M_1\epsilon_n \right) +  {p-s \choose M_0 s} \mbb{P}\left(\chi^2_{n - k - (M_0+1)s}/n \le 1 - M_1 \epsilon_n\right).
\end{align*}
(1) follows from the facts that $\mbf{Y} = \widehat{\mbf{F}} \bm{\alpha}^\star + \widehat{\mbf{U}}\bm{\beta}^\star + \sigma^\star \bm{\varepsilon}$ with $\bm{\beta}^\star_{\xi^{\star c}} = \mbf{0}$ under $\Pn$ and that $\widehat{\mbf{F}}^\T \widehat{\mbf{U}} = \mbf{0}$. For (2), we observe that projection matrices
$$\widehat{\mbf{U}}_{\xi' \cup \xi^\star}\widehat{\mbf{U}}_{\xi' \cup \xi^\star}^\dagger \le \widehat{\mbf{U}}_{\xi'' \cup \xi^\star}\widehat{\mbf{U}}_{\xi'' \cup \xi^\star}^\dagger$$
for nested models $\xi' \subseteq \xi''$, and thus the term $\bm{\varepsilon}^\T\left[\mbf{I} - \widehat{\mbf{F}}\widehat{\mbf{F}}^\dagger - \widehat{\mbf{U}}_{\xi \cup \xi^\star}\widehat{\mbf{U}}_{\xi \cup \xi^\star}^\dagger\right]\bm{\varepsilon}$ achieves its maximum value at any $\xi \subseteq \xi^\star$ and its minimum value at some $\xi$ s.t. $|\xi| = M_0 s$ and $\xi \setminus \xi^\star = \emptyset$. (3) uses the fact that 
$$\bm{\varepsilon}^\T\left[\mbf{I} - \widehat{\mbf{F}}\widehat{\mbf{F}}^\dagger - \widehat{\mbf{U}}_{\xi \cup \xi^\star}\widehat{\mbf{U}}_{\xi \cup \xi^\star}^\dagger\right]\bm{\varepsilon} \sim \chi^2_{n - k - |\xi \cup \xi^\star|}.$$
Applying Lemma \ref{lem:chi2}, part (a) yields
$$\En \phi_{1n} \lesssim \left(1 + p^{M_0s}\right) e^{-(M_1^2/8 - \smallo(1)) s\log p} \lesssim e^{-(M_1^2/8 - M_0 - \smallo(1)) s\log p}.$$

Under the alternative hypothesis, observe that $\phi_{1n} = \max_{\xi':~|\xi' \setminus \xi^\star| \le M_0s} \phi_{1n}^{\xi'}$, where
$$\phi_{1n}^{\xi'} = 1\left\{ \left|\mbf{Y}^\T\left[\mbf{I} - \widehat{\mbf{F}}\widehat{\mbf{F}}^\dagger - \widehat{\mbf{U}}_{\xi \cup \xi^\star}\widehat{\mbf{U}}_{\xi \cup \xi^\star}^\dagger\right]\mbf{Y}/ n \sigma^{\star 2}-1 \right| \ge M_1\epsilon_n \right\}.$$
Using Lemma \ref{lem:split},
$$\sup_{\Theta_{1n}}\Ea(1 - \phi_{1n}) \le \max_{\xi':~|\xi' \setminus \xi^\star| \le M_0s} ~\sup_{\Theta_{1n} \cap \{\xi = \xi'\}} \Ea (1-\phi_{1n}^{\xi'}).$$
For any $\xi'$ such that $|\xi' \setminus \xi^\star| \le M_0s$ and any $(\sigma, \bm{\alpha}, \bm{\beta}) \in \Theta_{1n} \cap \{\xi = \xi'\}$, write
\begin{align*}
&~~~\Ea (1-\phi_{1n}^{\xi'})\\
&\overset{(1)}{=} \Pa \left(\left|\bm{\varepsilon}^\T\left[\mbf{I} - \widehat{\mbf{F}}\widehat{\mbf{F}}^\dagger - \widehat{\mbf{U}}_{\xi \cup \xi^\star}\widehat{\mbf{U}}_{\xi \cup \xi^\star}^\dagger\right]\bm{\varepsilon}/ n \times (\sigma^2 /\sigma^{\star 2})-1 \right| < M_1\epsilon_n\right)\\
&\overset{(2)}{\le} \Pa \left(\bm{\varepsilon}^\T\left[\mbf{I} - \widehat{\mbf{F}}\widehat{\mbf{F}}^\dagger - \widehat{\mbf{U}}_{\xi \cup \xi^\star}\widehat{\mbf{U}}_{\xi \cup \xi^\star}^\dagger\right]\bm{\varepsilon}/ n \not \in (1-M_1\epsilon_n, 1+M_1\epsilon_n)\right)\\
&\overset{(3)}{=} \mbb{P} \left(\chi^2_{n - k - |\xi \cup \xi^\star|}/ n \not \in (1-M_1\epsilon_n, 1+M_1\epsilon_n)\right)\\
&\le \mbb{P}\left( \chi^2_{n - k - (M_0+1)s}/n \le 1 - M_1 \epsilon_n\right) + \mbb{P}\left( \chi^2_{n - k - s}/n \ge 1 + M_1 \epsilon_n\right)
\end{align*}
(1) follows from the facts that $\mbf{Y} = \widehat{\mbf{F}} \bm{\alpha}+ \widehat{\mbf{U}}\bm{\beta} + \sigma \bm{\varepsilon}$ with $\bm{\beta}_{\xi^c} = \mbf{0}$ under $\Pa$ and that $\widehat{\mbf{F}}^\T \widehat{\mbf{U}} = \mbf{0}$. (2) plugs in the restriction
$$\frac{\sigma^2}{\sigma^{\star 2}} \not \in \left(\frac{1-M_1\epsilon_n}{1+M_1\epsilon_n}, \frac{1+M_1\epsilon_n}{1-M_1\epsilon_n}\right)$$
from the definition of $\Theta_{1n}$. (3) uses the fact that $$\bm{\varepsilon}^\T\left[\mbf{I} - \widehat{\mbf{F}}\widehat{\mbf{F}}^\dagger - \widehat{\mbf{U}}_{\xi \cup \xi^\star}\widehat{\mbf{U}}_{\xi \cup \xi^\star}^\dagger\right]\bm{\varepsilon} \sim \chi^2_{n - k - |\xi \cup \xi^\star|}$$
again. Since the final bound in the last display is uniform for any $\xi'$ such that $|\xi' \setminus \xi^\star| \le M_0s$ and any $(\sigma, \bm{\alpha}, \bm{\beta}) \in \Theta_{1n} \cap \{\xi = \xi'\}$, we apply Lemma \ref{lem:chi2}, part (a) and yield
$$\sup_{\Theta_{2n}} \Ea (1-\phi_{1n}) \lesssim e^{-(M_1^2/8 - \smallo(1))n\epsilon_n^2} = e^{-(M_1^2/8 - \smallo(1))s\log p}.$$
\end{proof}

\begin{proof}[Proof of Lemma \ref{lem:test}, part (b)]
Under the null hypothesis, write
\begin{align*}
&~~~\En \phi_{2n}\\
&\overset{(1)}{=} \Pn \left(\max_{\xi:~|\xi\setminus\xi^\star| \le M_0s} \left\Vert {\widehat{\mbf{F}}^\dagger\bm{\varepsilon} \choose \widehat{\mbf{U}}_{\xi \cup \xi^\star}^\dagger \bm{\varepsilon}} \right\Vert  \ge M_2 \epsilon_n/2\right)\\
&= \Pn \left(\max_{\xi:~|\xi\setminus\xi^\star| \le M_0s} \bm{\varepsilon}^\T\left[ \widehat{\mbf{F}}^{\dagger \T}\widehat{\mbf{F}}^\dagger + \widehat{\mbf{U}}_{\xi \cup \xi^\star}^{\dagger \T} \widehat{\mbf{U}}_{\xi \cup \xi^\star}^\dagger\right] \bm{\varepsilon} \ge M_2^2 \epsilon_n^2/4\right)\\
&\overset{(2)}{\le} \Pn \left(\max_{\xi:~|\xi\setminus\xi^\star| \le M_0s} \bm{\varepsilon}^\T \left[ \widehat{\mbf{F}}\widehat{\mbf{F}}^\dagger + \widehat{\mbf{U}}_{\xi \cup \xi^\star} \widehat{\mbf{U}}_{\xi \cup \xi^\star}^\dagger\right] \bm{\varepsilon} \ge \min\{\widehat{\kappa}_0, 1\} M_2^2 n\epsilon_n^2/4\right)\\
&\overset{(3)}{\le} \sum_{\xi: ~|\xi| = M_0s, ~\xi \setminus\xi^\star = \emptyset} \Pn \left( \bm{\varepsilon}^\T \left[ \widehat{\mbf{F}}\widehat{\mbf{F}}^\dagger + \widehat{\mbf{U}}_{\xi \cup \xi^\star} \widehat{\mbf{U}}_{\xi \cup \xi^\star}^\dagger\right] \bm{\varepsilon} \ge \min\{\widehat{\kappa}_0, 1\} M_2^2 n\epsilon_n^2/4\right)\\
&\overset{(4)}{=} {p - s \choose M_0 s} \mbb{P}\left( \chi^2_{k+(M_0+1)s} \ge \min\{\widehat{\kappa}_0, 1\} M_2^2 n\epsilon_n^2/4\right)
\end{align*}
(1) follows from the facts that $\mbf{Y} = \widehat{\mbf{F}} \bm{\alpha}^\star + \widehat{\mbf{U}}\bm{\beta}^\star + \sigma^\star \bm{\varepsilon}$ with $\bm{\beta}^\star_{\xi^{\star c}} = \mbf{0}$ under $\Pn$ and that $\widehat{\mbf{F}}^\T \widehat{\mbf{U}} = \mbf{0}$. (2) is due to
\begin{align*}
\widehat{\mbf{F}}^{\dagger \T}\widehat{\mbf{F}}^\dagger
&\le \lambda_{\min}\left( \widehat{\mbf{F}}^\T\widehat{\mbf{F}} \right)^{-1} \widehat{\mbf{F}} \widehat{\mbf{F}}^\dagger = n^{-1}\widehat{\mbf{F}} \widehat{\mbf{F}}^\dagger\\
\widehat{\mbf{U}}_{\xi \cup \xi^\star}^{\dagger \T}\widehat{\mbf{U}}_{\xi \cup \xi^\star}^\dagger
&\le \lambda_{\min}\left( \widehat{\mbf{U}}_{\xi \cup \xi^\star}^\T\widehat{\mbf{U}}_{\xi \cup \xi^\star} \right)^{-1} \widehat{\mbf{U}}_{\xi \cup \xi^\star} \widehat{\mbf{U}}_{\xi \cup \xi^\star}^\dagger \le \left(n{\widehat{\kappa}}_0\right)^{-1}\widehat{\mbf{U}}_{\xi \cup \xi^\star} \widehat{\mbf{U}}_{\xi \cup \xi^\star}^\dagger
\end{align*}
For (3), we observe that projection matrices
$$\widehat{\mbf{U}}_{\xi' \cup \xi^\star}\widehat{\mbf{U}}_{\xi' \cup \xi^\star}^\dagger \le \widehat{\mbf{U}}_{\xi'' \cup \xi^\star}\widehat{\mbf{U}}_{\xi'' \cup \xi^\star}^\dagger$$
for nested models $\xi' \subseteq \xi''$, and thus the term $\bm{\varepsilon}^\T \widehat{\mbf{U}}_{\xi \cup \xi^\star}\widehat{\mbf{U}}_{\xi \cup \xi^\star}^\dagger \bm{\varepsilon}$ achieves its maximum value at some $\xi$ s.t. $|\xi| = M_0 s$ and $\xi \setminus \xi^\star = \emptyset$. (4) uses the fact that $$\bm{\varepsilon}^\T\left[ \widehat{\mbf{F}}\widehat{\mbf{F}}^\dagger + \widehat{\mbf{U}}_{\xi \cup \xi^\star}\widehat{\mbf{U}}_{\xi \cup \xi^\star}^\dagger\right]\bm{\varepsilon} \sim \chi^2_{ k + |\xi \cup \xi^\star|}.$$
Applying Lemma \ref{lem:chi2}, part (b) yields
$$\En \phi_{2n} \lesssim p^{M_0s}e^{-(\min\{\widehat{\kappa}_0,1\} M_2^2/8 - \smallo(1)) n\epsilon_n^2} = e^{-(\min\{\widehat{\kappa}_0,1\} M_2^2/8 -M_0 - \smallo(1)) s \log p}.$$

Under the alternative hypothesis, observe that $\phi_{2n} = \max_{\xi':~|\xi' \setminus \xi^\star| \le M_0s} \phi_{2n}^{\xi'}$, where
$$\phi_{2n}^{\xi'} = 1\left\{ \left\Vert {\widehat{\mbf{F}}^\dagger\mbf{Y} \choose \widehat{\mbf{U}}_{\xi' \cup \xi^\star}^\dagger \mbf{Y}} - {\bm{\alpha}^\star \choose \bm{\beta}^\star_{\xi' \cup \xi^\star}} \right\Vert  \ge \sigma^\star M_2 \epsilon_n/2\right\}.$$
Using Lemma \ref{lem:split},
$$\sup_{\Theta_{2n}}\Ea(1 - \phi_{2n}) \le \max_{\xi':~|\xi' \setminus \xi^\star| \le M_0s} ~\sup_{\Theta_{2n} \cap \{\xi = \xi'\}} \Ea (1-\phi_{2n}^{\xi'}).$$
For any $\xi'$ such that $|\xi' \setminus \xi^\star| \le M_0s$ and any $(\sigma, \bm{\alpha}, \bm{\beta}) \in \Theta_{2n} \cap \{\xi = \xi'\}$, write
\begin{align*}
&~~~\Ea (1-\phi_{2n}^{\xi'})\\
&\overset{(1)}{=} \Pa \left(\left\Vert {\bm{\alpha} \choose \bm{\beta}_{\xi \cup \xi^\star}} - {\bm{\alpha}^\star \choose \bm{\beta}^\star_{\xi \cup \xi^\star}}  + \sigma {\widehat{\mbf{F}}^\dagger\bm{\varepsilon} \choose \widehat{\mbf{U}}_{\xi \cup \xi^\star}^\dagger \bm{\varepsilon}}  \right\Vert  < \sigma^\star M_2 \epsilon_n/2\right)\\
&\overset{(2)}{\le} \Pa \left( \left\Vert {\widehat{\mbf{F}}^\dagger\bm{\varepsilon} \choose \widehat{\mbf{U}}_{\xi \cup \xi^\star}^\dagger \bm{\varepsilon}} \right\Vert  \ge \sqrt{\frac{1-M_1\epsilon_n}{1+M_1\epsilon_n}} \times M_2 \epsilon_n/2\right)\\
&\overset{(3)}{\le} \mbb{P}\left( \chi^2_{k + (M_0+1)s} \ge \frac{1-M_1\epsilon_n}{1+M_1\epsilon_n} \times \min\{\widehat{\kappa}_0,1\} M_2^2 n\epsilon_n^2/4\right).
\end{align*}
(1) follows from the facts that $\mbf{Y} = \widehat{\mbf{F}} \bm{\alpha}+ \widehat{\mbf{U}}\bm{\beta} + \sigma \bm{\varepsilon}$ with $\bm{\beta}_{\xi^c} = \mbf{0}$ under $\Pa$ and that $\widehat{\mbf{F}}^\T \widehat{\mbf{U}} = \mbf{0}$. (2) plugs in the restrictions
$$\left\Vert {\bm{\alpha} \choose \bm{\beta}_{\xi \cup \xi^\star}} - {\bm{\alpha}^\star \choose \bm{\beta}^\star_{\xi \cup \xi^\star}} \right\Vert > \sigma^\star M_2\epsilon_n, ~~~\frac{\sigma^{\star 2}}{\sigma^2} > \frac{1-M_1\epsilon_n}{1+M_1\epsilon_n}$$
from the definition of $\Theta_{2n}$. (3) uses a similar argument to what we have used for the null hypothesis. Since the final bound in the last display is uniform for any $\xi'$ such that $|\xi' \setminus \xi^\star| \le M_0s$ and any $(\sigma, \bm{\alpha}, \bm{\beta}) \in \Theta_{2n} \cap \{\xi = \xi'\}$, we apply Lemma \ref{lem:chi2}, part (b) and yield
$$\sup_{\Theta_{2n}} \Ea (1-\phi_{2n}) \lesssim e^{-(\min\{\widehat{\kappa}_0,1\} M_2^2/8 - \smallo(1))n\epsilon_n^2} = e^{-(\min\{\widehat{\kappa}_0,1\}M_2^2/8 - \smallo(1))s\log p}.$$
\end{proof}

\begin{proof}[Proof of Lemma \ref{lem:test}, part (c)]
Under the null hypothesis, write
\begin{align*}
&~~~\En \phi_{3n}\\
&\overset{(1)}{=} \Pn \left(\max_{\xi:~|\xi\setminus\xi^\star| \le M_0s} \left\Vert \left( \widehat{\mbf{F}}\widehat{\mbf{F}}^\dagger + \widehat{\mbf{U}}_{\xi \cup \xi^\star}\widehat{\mbf{U}}_{\xi \cup \xi^\star}^\dagger\right) \bm{\varepsilon} \right\Vert  \ge M_3 \sqrt{n}\epsilon_n/2\right)\\
&= \Pn \left(\max_{\xi:~|\xi\setminus\xi^\star| \le M_0s} \bm{\varepsilon}^\T \left[ \widehat{\mbf{F}}\widehat{\mbf{F}}^\dagger + \widehat{\mbf{U}}_{\xi \cup \xi^\star}\widehat{\mbf{U}}_{\xi \cup \xi^\star}^\dagger\right] \bm{\varepsilon} \ge M_3^2 n\epsilon_n^2/4\right)\\
&\overset{(2)}{\le} \sum_{\xi: ~|\xi| = M_0s, ~\xi \setminus\xi^\star = \emptyset} \Pn \left( \bm{\varepsilon}^\T \left[ \widehat{\mbf{F}}\widehat{\mbf{F}}^\dagger + \widehat{\mbf{U}}_{\xi \cup \xi^\star}\widehat{\mbf{U}}_{\xi \cup \xi^\star}^\dagger\right] \bm{\varepsilon} \ge M_3^2 n\epsilon_n^2/4\right)\\
&\overset{(3)}{=} {p - s \choose M_0 s} \mbb{P}\left( \chi^2_{k+(M_0+1)s} \ge M_3^2 n\epsilon_n^2/4\right)
\end{align*}
(1) follows from the facts that $\mbf{Y} = \widehat{\mbf{F}} \bm{\alpha}^\star + \widehat{\mbf{U}}\bm{\beta}^\star + \sigma^\star \bm{\varepsilon}$ with $\bm{\beta}^\star_{\xi^{\star c}} = \mbf{0}$ under $\Pn$ and that $\widehat{\mbf{F}}^\T \widehat{\mbf{U}} = \mbf{0}$. For (2), we observe that projection matrices
$$\widehat{\mbf{U}}_{\xi' \cup \xi^\star}\widehat{\mbf{U}}_{\xi' \cup \xi^\star}^\dagger \le \widehat{\mbf{U}}_{\xi'' \cup \xi^\star}\widehat{\mbf{U}}_{\xi'' \cup \xi^\star}^\dagger$$
for nested models $\xi' \subseteq \xi''$, and thus the term $\bm{\varepsilon}^\T \widehat{\mbf{U}}_{\xi \cup \xi^\star}\widehat{\mbf{U}}_{\xi \cup \xi^\star}^\dagger \bm{\varepsilon}$ achieves its maximum value at some $\xi$ s.t. $|\xi| = M_0 s$ and $\xi \setminus \xi^\star = \emptyset$. (3) uses the fact that $$\bm{\varepsilon}^\T\left[ \widehat{\mbf{F}}\widehat{\mbf{F}}^\dagger + \widehat{\mbf{U}}_{\xi \cup \xi^\star}\widehat{\mbf{U}}_{\xi \cup \xi^\star}^\dagger\right]\bm{\varepsilon} \sim \chi^2_{ k + |\xi \cup \xi^\star|}.$$
Applying Lemma \ref{lem:chi2}, part (b) yields
$$\En \phi_{3n} \lesssim p^{M_0s}e^{-(M_3^2/8 - \smallo(1)) n\epsilon_n^2} = e^{-(M_3^2/8 -M_0 - \smallo(1)) s \log p}.$$

Under the alternative hypothesis, observe that $\phi_{3n} = \max_{\xi':~|\xi' \setminus \xi^\star| \le M_0s} \phi_{3n}^{\xi'}$, where
$$\phi_{3n}^{\xi'} = 1\left\{ \left\Vert \left[ \widehat{\mbf{F}}\widehat{\mbf{F}}^\dagger + \widehat{\mbf{U}}_{\xi \cup \xi^\star}\widehat{\mbf{U}}_{\xi \cup \xi^\star}^\dagger\right]\mbf{Y} - \left(\widehat{\mbf{F}}\bm{\alpha}^\star+\widehat{\mbf{U}}\bm{\beta}^\star\right) \right\Vert  \ge \sigma^\star M_3 \sqrt{n}\epsilon_n/2\right\}.$$
Using Lemma \ref{lem:split},
$$\sup_{\Theta_{3n}}\Ea(1 - \phi_{3n}) \le \max_{\xi':~|\xi' \setminus \xi^\star| \le M_0s} ~\sup_{\Theta_{3n} \cap \{\xi = \xi'\}} \Ea (1-\phi_{3n}^{\xi'}).$$
For any $\xi'$ such that $|\xi' \setminus \xi^\star| \le M_0s$ and any $(\sigma, \bm{\alpha}, \bm{\beta}) \in \Theta_{3n} \cap \{\xi = \xi'\}$, write
\begin{align*}
&~~~\Ea (1-\phi_{3n}^{\xi'})\\
&\overset{(1)}{=} \Pa \left(\left\Vert \left(\widehat{\mbf{F}} \bm{\alpha} + \widehat{\mbf{U}}\bm{\beta}\right) - \left(\widehat{\mbf{F}} \bm{\alpha}^\star + \widehat{\mbf{U}}\bm{\beta}^\star\right) + \sigma \left[ \widehat{\mbf{F}}\widehat{\mbf{F}}^\dagger + \widehat{\mbf{U}}_{\xi \cup \xi^\star}\widehat{\mbf{U}}_{\xi \cup \xi^\star}^\dagger\right] \bm{\varepsilon} \right\Vert  < \sigma^\star M_3 \sqrt{n}\epsilon_n/2\right)\\
&\overset{(2)}{\le} \Pa \left( \left\Vert \left[ \widehat{\mbf{F}}\widehat{\mbf{F}}^\dagger + \widehat{\mbf{U}}_{\xi \cup \xi^\star}\widehat{\mbf{U}}_{\xi \cup \xi^\star}^\dagger\right] \bm{\varepsilon} \right\Vert  \ge \sqrt{\frac{1-M_1\epsilon_n}{1+M_1\epsilon_n}} \times M_3 \sqrt{n}\epsilon_n/2\right)\\
&= \Pa \left( \bm{\varepsilon}^\T \left[ \widehat{\mbf{F}}\widehat{\mbf{F}}^\dagger + \widehat{\mbf{U}}_{\xi \cup \xi^\star}\widehat{\mbf{U}}_{\xi \cup \xi^\star}^\dagger\right] \bm{\varepsilon} \ge \frac{1-M_1\epsilon_n}{1+M_1\epsilon_n} \times M_3^2 n\epsilon_n^2/4\right)\\
&\overset{(3)}{=} \mbb{P}\left( \chi^2_{k + (M_0+1)s} \ge \frac{1-M_1\epsilon_n}{1+M_1\epsilon_n} \times M_3^2 n\epsilon_n^2/4\right).
\end{align*}
(1) follows from the facts that $\mbf{Y} = \widehat{\mbf{F}} \bm{\alpha}+ \widehat{\mbf{U}}\bm{\beta} + \sigma \bm{\varepsilon}$ with $\bm{\beta}_{\xi^c} = \mbf{0}$ under $\Pa$ and that $\widehat{\mbf{F}}^\T \widehat{\mbf{U}} = \mbf{0}$. (2) plugs in the restrictions
$$\left\Vert \left(\widehat{\mbf{F}} \bm{\alpha} + \widehat{\mbf{U}}\bm{\beta}\right) - \left(\widehat{\mbf{F}} \bm{\alpha}^\star + \widehat{\mbf{U}}\bm{\beta}^\star\right) \right\Vert > \sigma^\star M_3\sqrt{n}\epsilon_n, ~~~\frac{\sigma^{\star 2}}{\sigma^2} > \frac{1-M_1\epsilon_n}{1+M_1\epsilon_n}$$
from the definition of $\Theta_{3n}$. (3) uses the fact that $$\bm{\varepsilon}^\T\left[ \widehat{\mbf{F}}\widehat{\mbf{F}}^\dagger + \widehat{\mbf{U}}_{\xi \cup \xi^\star}\widehat{\mbf{U}}_{\xi \cup \xi^\star}^\dagger\right]\bm{\varepsilon} \sim \chi^2_{ k + |\xi \cup \xi^\star|}.$$
Since the final bound in the last display is uniform for any $\xi'$ such that $|\xi' \setminus \xi^\star| \le M_0s$ and any $(\sigma, \bm{\alpha}, \bm{\beta}) \in \Theta_{3n} \cap \{\xi = \xi'\}$, we apply Lemma \ref{lem:chi2}, part (b) and yield
$$\sup_{\Theta_{3n}} \Ea (1-\phi_{3n}) \lesssim e^{-(M_3^2/8 - \smallo(1))n\epsilon_n^2} = e^{-(M_3^2/8 - \smallo(1))s\log p}.$$
\end{proof}

\begin{proof}[Proof of Lemma \ref{lem:test}, part (d)]
We first show that
\begin{equation} \label{schur}
\min_{\xi \not \supseteq \xi^\star:~|\xi\setminus\xi^\star| \le M_0s} \left\Vert \left(\widehat{\mbf{U}}_{\xi \cup \xi^\star} \widehat{\mbf{U}}_{\xi \cup \xi^\star}^\dagger - \widehat{\mbf{U}}_{\xi} \widehat{\mbf{U}}_{\xi}^\dagger\right)\widehat{\mbf{U}}_{\xi^\star} \bm{\beta}^\star_{\xi^\star}\right\Vert \gtrsim \sigma^\star \sqrt{\widehat{\kappa}_0} M_4 \sqrt{n}\epsilon_n.   
\end{equation}
Indeed, for any $\xi \not \supseteq \xi^\star$ s.t. $|\xi\setminus\xi^\star| \le M_0s$,
\begin{align*}
\left\Vert \left(\widehat{\mbf{U}}_{\xi \cup \xi^\star} \widehat{\mbf{U}}_{\xi \cup \xi^\star}^\dagger - \widehat{\mbf{U}}_{\xi} \widehat{\mbf{U}}_{\xi}^\dagger\right)\widehat{\mbf{U}}_{\xi^\star} \bm{\beta}^\star_{\xi^\star}\right\Vert^2
&= \left\Vert \left(\mbf{I} - \widehat{\mbf{U}}_{\xi} \widehat{\mbf{U}}_{\xi}^\dagger\right)\widehat{\mbf{U}}_{\xi^\star} \bm{\beta}^\star_{\xi^\star}\right\Vert^2\\
&= \left\Vert \left(\mbf{I} - \widehat{\mbf{U}}_{\xi} \widehat{\mbf{U}}_{\xi}^\dagger\right)\widehat{\mbf{U}}_{\xi^\star \setminus \xi} \bm{\beta}^\star_{\xi^\star \setminus \xi}\right\Vert^2\\
&= \bm{\beta}_{\xi^\star \setminus \xi}^\T\widehat{\mbf{U}}_{\xi^\star \setminus \xi}^\T \left(\mbf{I} - \widehat{\mbf{U}}_{\xi} \widehat{\mbf{U}}_{\xi}^\dagger\right) \widehat{\mbf{U}}_{\xi^\star \setminus \xi} \bm{\beta}_{\xi^\star \setminus \xi}
\end{align*}
Note that $\widehat{\mbf{U}}_{\xi^\star \setminus \xi}^\T \left(\mbf{I} - \widehat{\mbf{U}}_{\xi} \widehat{\mbf{U}}_{\xi}^\dagger\right) \widehat{\mbf{U}}_{\xi^\star \setminus \xi}$ is the Schur complement of the principal submatrix $\widehat{\mbf{U}}_{\xi^\star \setminus \xi}^\T\widehat{\mbf{U}}_{\xi^\star \setminus \xi}$ in the matrix $\widehat{\mbf{U}}_{\xi \cup \xi^\star}^\T\widehat{\mbf{U}}_{\xi \cup \xi^\star}$. Thus, by Lemma \ref{lem:schur},
\begin{equation*}
\lambda_{\min}\left(\widehat{\mbf{U}}_{\xi^\star \setminus \xi}^\T \left(\mbf{I} - \widehat{\mbf{U}}_{\xi} \widehat{\mbf{U}}_{\xi}^\dagger\right) \widehat{\mbf{U}}_{\xi^\star \setminus \xi}\right) \ge \lambda_{\min}\left(\widehat{\mbf{U}}_{\xi \cup \xi^\star}^\T\widehat{\mbf{U}}_{\xi \cup \xi^\star}\right) \ge n\widehat{\kappa}_0.   
\end{equation*}
It further implies that
\begin{align*}
\left\Vert \left(\widehat{\mbf{U}}_{\xi \cup \xi^\star} \widehat{\mbf{U}}_{\xi \cup \xi^\star}^\dagger - \widehat{\mbf{U}}_{\xi} \widehat{\mbf{U}}_{\xi}^\dagger\right)\widehat{\mbf{U}}_{\xi^\star} \bm{\beta}^\star_{\xi^\star}\right\Vert^2
&\ge n\widehat{\kappa}_0 \Vert \bm{\beta}^\star_{\xi^\star \setminus \xi} \Vert^2\\
&\ge n\widehat{\kappa}_0 |\xi^\star \setminus \xi| \min_{j \in \xi^\star} |\beta^\star_j|^2\\
&\ge \sigma^{\star 2} \widehat{\kappa}_0 M_4^2 n \epsilon_n^2.
\end{align*}

Under the null hypothesis, write
\begin{align*}
&~~~\En \phi_{4n}\\
&\overset{(1)}{=} \Pn \left(\min_{\xi \not \supseteq \xi^\star:~|\xi\setminus\xi^\star| \le M_0s} \left\Vert \left(\widehat{\mbf{U}}_{\xi \cup \xi^\star} \widehat{\mbf{U}}_{\xi \cup \xi^\star}^\dagger - \widehat{\mbf{U}}_{\xi} \widehat{\mbf{U}}_{\xi}^\dagger\right)\left(\widehat{\mbf{U}}_{\xi^\star}\bm{\beta}^\star_{\xi^\star} + \sigma^\star \bm{\varepsilon}\right) \right\Vert  \le \sigma^\star \sqrt{\widehat{\kappa}_0}M_4 \sqrt{n}\epsilon_n/2\right)\\
&\le \Pn \left(\min_{\xi \not \supseteq \xi^\star:~|\xi\setminus\xi^\star| \le M_0s} \left\Vert \left(\widehat{\mbf{U}}_{\xi \cup \xi^\star} \widehat{\mbf{U}}_{\xi \cup \xi^\star}^\dagger - \widehat{\mbf{U}}_{\xi} \widehat{\mbf{U}}_{\xi}^\dagger\right) \widehat{\mbf{U}}_{\xi^\star} \bm{\beta}^\star_{\xi^\star} \right\Vert \right.\\
&~~~~~~~~~~~~~~~~~~~~~~~\left.- \max_{\xi \not \supseteq \xi^\star:~|\xi\setminus\xi^\star| \le M_0s} \left\Vert \left(\widehat{\mbf{U}}_{\xi \cup \xi^\star} \widehat{\mbf{U}}_{\xi \cup \xi^\star}^\dagger - \widehat{\mbf{U}}_{\xi} \widehat{\mbf{U}}_{\xi}^\dagger\right) \sigma^\star \bm{\varepsilon} \right\Vert \le \sigma^\star \sqrt{\widehat{\kappa}_0}M_4 \sqrt{n}\epsilon_n/2\right)\\
&\overset{(2)}{\le} \Pn \left(\max_{\xi \not \supseteq \xi^\star:~|\xi\setminus\xi^\star| \le M_0s} \left\Vert \left(\widehat{\mbf{U}}_{\xi \cup \xi^\star} \widehat{\mbf{U}}_{\xi \cup \xi^\star}^\dagger - \widehat{\mbf{U}}_{\xi} \widehat{\mbf{U}}_{\xi}^\dagger\right) \bm{\varepsilon} \right\Vert  \ge \sqrt{\widehat{\kappa}_0}M_4 \sqrt{n}\epsilon_n/2\right)\\
&\overset{(3)}{\le} \Pn \left( \left\Vert \widehat{\mbf{U}}_{\xi^\star}\widehat{\mbf{U}}_{\xi^\star}^\dagger \bm{\varepsilon} \right\Vert  \ge \sqrt{\widehat{\kappa}_0}M_4 \sqrt{n}\epsilon_n/2\right)\\
&= \Pn \left( \bm{\varepsilon}^\T \widehat{\mbf{U}}_{\xi^\star}\widehat{\mbf{U}}_{\xi^\star}^\dagger \bm{\varepsilon} \ge \widehat{\kappa}_0 M_4^2 n\epsilon_n^2/4\right)\\
&\overset{(4)}{=} \mbb{P}\left( \chi^2_s \ge \widehat{\kappa}_0 M_4^2 n\epsilon_n^2/4\right)
\end{align*}
(1) follows from the facts that $\mbf{Y} = \widehat{\mbf{F}} \bm{\alpha}^\star + \widehat{\mbf{U}}\bm{\beta}^\star + \sigma^\star \bm{\varepsilon}$ with $\bm{\beta}^\star_{\xi^{\star c}} = \mbf{0}$ under $\Pn$ and that $\widehat{\mbf{F}}^\T \widehat{\mbf{U}} = \mbf{0}$. (2) plugs in \eqref{schur}. (3) is due to the fact that
\begin{equation*}
\widehat{\mbf{U}}_{\xi \cup \xi^\star} \widehat{\mbf{U}}_{\xi \cup \xi^\star}^\dagger - \widehat{\mbf{U}}_{\xi} \widehat{\mbf{U}}_{\xi}^\dagger \le \widehat{\mbf{U}}_{\xi^\star} \widehat{\mbf{U}}_{\xi^\star}^\dagger.
\end{equation*}
(4) uses the fact that
$$\bm{\varepsilon}^\T \widehat{\mbf{U}}_{\xi^\star}\widehat{\mbf{U}}_{\xi^\star}^\dagger \bm{\varepsilon} \sim \chi^2_{|\xi^\star|}.$$
Applying Lemma \ref{lem:chi2}, part (b) yields
\begin{equation*}
\En \phi_{4n} \lesssim e^{-(\widehat{\kappa}_0 M_4^2/8 - \smallo(1)) n\epsilon_n^2} = e^{-(\widehat{\kappa}_0 M_4^2/8 -M_0 - \smallo(1)) s \log p}.
\end{equation*}

Under the alternative hypothesis, observe that $\phi_{4n} = \max_{\xi' \not \supseteq \xi^\star :~|\xi' \setminus \xi^\star| \le M_0s} \phi_{4n}^{\xi'}$, where
\begin{equation*}
\phi_{4n}^{\xi'} = 1\left\{ \left\Vert \left(\widehat{\mbf{U}}_{\xi' \cup \xi^\star} \widehat{\mbf{U}}_{\xi' \cup \xi^\star}^\dagger - \widehat{\mbf{U}}_{\xi'} \widehat{\mbf{U}}_{\xi'}^\dagger\right) \mbf{Y} \right\Vert  \le \sigma^\star \sqrt{\widehat{\kappa}_0}M_4 \sqrt{n}\epsilon_n/2\right\}.
\end{equation*}
Using Lemma \ref{lem:split},
$$\sup_{\Theta_{4n}}\Ea(1 - \phi_{4n}) \le \max_{\xi' \not \supseteq \xi^\star:~|\xi' \setminus \xi^\star| \le M_0s} ~\sup_{\Theta_{4n} \cap \{\xi = \xi'\}} \Ea (1-\phi_{4n}^{\xi'}).$$
For any $\xi' \not \supseteq \xi^\star$ such that $|\xi' \setminus \xi^\star| \le M_0s$ and any $(\sigma, \bm{\alpha}, \bm{\beta}) \in \Theta_{4n} \cap \{\xi = \xi'\}$, write
\begin{align*}
&~~~\Ea (1-\phi_{4n}^{\xi'})\\
&= \Pa \left(\left\Vert \left(\widehat{\mbf{U}}_{\xi \cup \xi^\star} \widehat{\mbf{U}}_{\xi \cup \xi^\star}^\dagger - \widehat{\mbf{U}}_{\xi} \widehat{\mbf{U}}_{\xi}^\dagger\right) \mbf{Y} \right\Vert  > \sigma^\star \sqrt{\widehat{\kappa}_0}M_4 \sqrt{n}\epsilon_n/2 \right)\\
&\overset{(1)}{=} \Pa \left(\left\Vert \left(\widehat{\mbf{U}}_{\xi \cup \xi^\star} \widehat{\mbf{U}}_{\xi \cup \xi^\star}^\dagger - \widehat{\mbf{U}}_{\xi} \widehat{\mbf{U}}_{\xi}^\dagger\right) \sigma \bm{\varepsilon} \right\Vert > \sigma^\star \sqrt{\widehat{\kappa}_0}M_4 \sqrt{n}\epsilon_n/2 \right)\\
&\overset{(2)}{\le} \Pa \left(\left\Vert \left(\widehat{\mbf{U}}_{\xi \cup \xi^\star} \widehat{\mbf{U}}_{\xi \cup \xi^\star}^\dagger - \widehat{\mbf{U}}_{\xi} \widehat{\mbf{U}}_{\xi}^\dagger\right) \bm{\varepsilon} \right\Vert  \ge \sqrt{\frac{1-M_1\epsilon_n}{1+M_1\epsilon_n}}\sqrt{\widehat{\kappa}_0}M_4 \sqrt{n}\epsilon_n/2 \right)\\
&\overset{(3)}{\le} \mbb{P}\left( \chi^2_s \ge \frac{1-M_1\epsilon_n}{1+M_1\epsilon_n} \times \widehat{\kappa}_0 M_4^2 n\epsilon_n^2/4\right).
\end{align*}
(1) follows from the facts that $\mbf{Y} = \widehat{\mbf{F}} \bm{\alpha}+ \widehat{\mbf{U}}\bm{\beta} + \sigma \bm{\varepsilon}$ with $\bm{\beta}_{\xi^c} = \mbf{0}$ under $\Pa$ and that $\widehat{\mbf{F}}^\T \widehat{\mbf{U}} = \mbf{0}$. (2) plugs in the restriction
$$\frac{\sigma^{\star 2}}{\sigma^2} > \frac{1-M_1\epsilon_n}{1+M_1\epsilon_n}$$
from the definition of $\Theta_{4n}$. (3) uses a similar argument to what we have used for the null hypothesis. Since the final bound in the last display is uniform for any $\xi' \not \supseteq \xi^\star$ such that $|\xi' \setminus \xi^\star| \le M_0s$ and any $(\sigma, \bm{\alpha}, \bm{\beta}) \in \Theta_{4n} \cap \{\xi = \xi'\}$, we apply Lemma \ref{lem:chi2}, part (b) and yield
$$\sup_{\Theta_{4n}} \Ea (1-\phi_{4n}) \lesssim e^{-(\widehat{\kappa}_0 M_4^2/8 - \smallo(1))n\epsilon_n^2} = e^{-(\widehat{\kappa}_0 M_4^2/8 - \smallo(1))s\log p}.$$
\end{proof}

\begin{proof}[Proof of Lemma \ref{lem:test}, part (e)]
Under the null hypothesis, write
\begin{align*}
&~~~\En \phi_{5n}\\
&\overset{(1)}{=} \Pn \left(\max_{\xi \supseteq \xi^\star:~|\xi\setminus\xi^\star| \le M_0s} \left\Vert {\widehat{\mbf{F}}^\dagger\bm{\varepsilon} \choose \widehat{\mbf{U}}_{\xi}^\dagger \bm{\varepsilon}} \right\Vert  \ge M_2 \epsilon_n/2\right)\\
&= \Pn \left(\max_{\xi \supseteq \xi^\star:~|\xi\setminus\xi^\star| \le M_0s} \bm{\varepsilon}^\T\left[ \widehat{\mbf{F}}^{\dagger \T}\widehat{\mbf{F}}^\dagger + \widehat{\mbf{U}}_{\xi}^{\dagger \T} \widehat{\mbf{U}}_{\xi}^\dagger\right] \bm{\varepsilon} \ge M_2^2 \epsilon_n^2/4\right)\\
&\overset{(2)}{\le} \Pn \left(\max_{\xi \supseteq \xi^\star:~|\xi\setminus\xi^\star| \le M_0s} \bm{\varepsilon}^\T \left[ \widehat{\mbf{F}}\widehat{\mbf{F}}^\dagger + \widehat{\mbf{U}}_{\xi} \widehat{\mbf{U}}_{\xi}^\dagger\right] \bm{\varepsilon} \ge \min\{\widehat{\kappa}_0, 1\} M_2^2 n\epsilon_n^2/4\right)\\
&\overset{(3)}{\le} \sum_{\xi \supseteq \xi^\star: ~|\xi \setminus \xi^\star| = M_0s} \Pn \left( \bm{\varepsilon}^\T \left[ \widehat{\mbf{F}}\widehat{\mbf{F}}^\dagger + \widehat{\mbf{U}}_{\xi} \widehat{\mbf{U}}_{\xi}^\dagger\right] \bm{\varepsilon} \ge \min\{\widehat{\kappa}_0, 1\} M_2^2 n\epsilon_n^2/4\right)\\
&\overset{(4)}{=} {p - s \choose M_0 s} \mbb{P}\left( \chi^2_{k+(M_0+1)s} \ge \min\{\widehat{\kappa}_0, 1\} M_2^2 n\epsilon_n^2/4\right)
\end{align*}
(1) follows from the facts that $\mbf{Y} = \widehat{\mbf{F}} \bm{\alpha}^\star + \widehat{\mbf{U}}\bm{\beta}^\star + \sigma^\star \bm{\varepsilon}$ with $\bm{\beta}^\star_{\xi^{\star c}} = \mbf{0}$ under $\Pn$ and that $\widehat{\mbf{F}}^\T \widehat{\mbf{U}} = \mbf{0}$. (2) is due to
\begin{align*}
\widehat{\mbf{F}}^{\dagger \T}\widehat{\mbf{F}}^\dagger
&\le \lambda_{\min}\left( \widehat{\mbf{F}}^\T\widehat{\mbf{F}} \right)^{-1} \widehat{\mbf{F}} \widehat{\mbf{F}}^\dagger = \widehat{\mbf{F}} \widehat{\mbf{F}}^\dagger/n\\
\widehat{\mbf{U}}_{\xi}^{\dagger \T}\widehat{\mbf{U}}_{\xi}^\dagger
&\le \lambda_{\min}\left( \widehat{\mbf{U}}_{\xi}^\T\widehat{\mbf{U}}_{\xi } \right)^{-1} \widehat{\mbf{U}}_{\xi} \widehat{\mbf{U}}_{\xi }^\dagger \le \widehat{\mbf{U}}_{\xi} \widehat{\mbf{U}}_{\xi}^\dagger/n{\widehat{\kappa}}_0.
\end{align*}
For (3), we observe that projection matrices
$$\widehat{\mbf{U}}_{\xi'}\widehat{\mbf{U}}_{\xi'}^\dagger \le \widehat{\mbf{U}}_{\xi''}\widehat{\mbf{U}}_{\xi''}^\dagger$$
for nested models $\xi' \subseteq \xi''$, and thus the term $\bm{\varepsilon}^\T \widehat{\mbf{U}}_{\xi}\widehat{\mbf{U}}_{\xi}^\dagger \bm{\varepsilon}$ achieves its maximum value at some $\xi \supseteq \xi^\star$ s.t. $|\xi \setminus \xi^\star| = M_0 s$. (4) uses the fact that $$\bm{\varepsilon}^\T\left[ \widehat{\mbf{F}}\widehat{\mbf{F}}^\dagger + \widehat{\mbf{U}}_{\xi}\widehat{\mbf{U}}_{\xi}^\dagger\right]\bm{\varepsilon} \sim \chi^2_{ k + |\xi|}.$$
Applying Lemma \ref{lem:chi2}, part (b) yields
$$\En \phi_{5n} \lesssim p^{M_0s}e^{-(\min\{\widehat{\kappa}_0,1\} M_2^2/8 - \smallo(1)) n\epsilon_n^2} = e^{-(\min\{\widehat{\kappa}_0,1\} M_2^2/8 -M_0 - \smallo(1)) s \log p}.$$

Under the alternative hypothesis, observe that $\phi_{5n} = \max_{\xi' \supseteq \xi^\star:~|\xi' \setminus \xi^\star| \le M_0s} \phi_{5n}^{\xi'}$, where
$$\phi_{5n}^{\xi'} = 1\left\{ \left\Vert {\widehat{\mbf{F}}^\dagger\mbf{Y} \choose \widehat{\mbf{U}}_{\xi'}^\dagger \mbf{Y}} - {\bm{\alpha}^\star \choose \bm{\beta}^\star_{\xi'}} \right\Vert  \ge \sigma^\star M_2 \epsilon_n/2\right\}.$$
Using Lemma \ref{lem:split},
$$\sup_{\Theta_{5n}}\Ea(1 - \phi_{5n}) \le \max_{\xi':~|\xi' \setminus \xi^\star| \le M_0s} ~\sup_{\Theta_{5n} \cap \{\xi = \xi'\}} \Ea (1-\phi_{5n}^{\xi'}).$$
For any $\xi' \supseteq \xi^\star$ such that $|\xi' \setminus \xi^\star| \le M_0s$ and any $(\sigma, \bm{\alpha}, \bm{\beta}) \in \Theta_{5n} \cap \{\xi = \xi'\}$, write
\begin{align*}
&~~~\Ea (1-\phi_{5n}^{\xi'})\\
&\overset{(1)}{=} \Pa \left(\left\Vert {\bm{\alpha} \choose \bm{\beta}_{\xi}} - {\bm{\alpha}^\star \choose \bm{\beta}^\star_{\xi}}  + \sigma {\widehat{\mbf{F}}^\dagger\bm{\varepsilon} \choose \widehat{\mbf{U}}_{\xi}^\dagger \bm{\varepsilon}}  \right\Vert  < \sigma^\star M_2 \epsilon_n/2\right)\\
&\overset{(2)}{\le} \Pa \left( \left\Vert {\widehat{\mbf{F}}^\dagger\bm{\varepsilon} \choose \widehat{\mbf{U}}_{\xi}^\dagger \bm{\varepsilon}} \right\Vert  \ge \sqrt{\frac{1-M_1\epsilon_n}{1+M_1\epsilon_n}} \times M_2 \epsilon_n/2\right)\\
&\overset{(3)}{\le} \mbb{P}\left( \chi^2_{k + (M_0+1)s} \ge \frac{1-M_1\epsilon_n}{1+M_1\epsilon_n} \times \min\{\widehat{\kappa}_0,1\} M_2^2 n\epsilon_n^2/4\right).
\end{align*}
(1) follows from the facts that $\mbf{Y} = \widehat{\mbf{F}} \bm{\alpha}+ \widehat{\mbf{U}}\bm{\beta} + \sigma \bm{\varepsilon}$ with $\bm{\beta}_{\xi^c} = \mbf{0}$ under $\Pa$, that $\widehat{\mbf{F}}^\T \widehat{\mbf{U}} = \mbf{0}$ and that $\xi \supseteq \xi^\star$. (2) plugs in the restrictions
$$\left\Vert {\bm{\alpha} \choose \bm{\beta}_{\xi}} - {\bm{\alpha}^\star \choose \bm{\beta}^\star_{\xi}} \right\Vert = \left\Vert {\bm{\alpha} \choose \bm{\beta}} - {\bm{\alpha}^\star \choose \bm{\beta}} \right\Vert > \sigma^\star M_2\epsilon_n, ~~~\frac{\sigma^{\star 2}}{\sigma^2} > \frac{1-M_1\epsilon_n}{1+M_1\epsilon_n}$$
from the definition of $\Theta_{5n}$. (3) uses a similar argument to what we have used for the null hypothesis. Since the final bound in the last display is uniform for any $\xi' \supseteq \xi^\star$ such that $|\xi' \setminus \xi^\star| \le M_0s$ and any $(\sigma, \bm{\alpha}, \bm{\beta}) \in \Theta_{5n} \cap \{\xi = \xi'\}$, we apply Lemma \ref{lem:chi2}, part (b) and yield
$$\sup_{\Theta_{5n}} \Ea (1-\phi_{5n}) \lesssim e^{-(\min\{\widehat{\kappa}_0,1\} M_2^2/8 - \smallo(1))n\epsilon_n^2} = e^{-(\min\{\widehat{\kappa}_0,1\}M_2^2/8 - \smallo(1))s\log p}.$$
\end{proof}

\begin{proof}[Proof of Lemma \ref{lem:merging}]
Define
\begin{equation*}
A^\star_n(\eta_1, \eta_2) = \left\{(\sigma, \bm{\alpha}, \bm{\beta}):
\begin{split}
& \sigma^2/\sigma^{\star 2} \in [1, 1+\eta_1\epsilon_n^2],\\
&\xi = \xi^\star,\\
& |\alpha_j - \alpha^\star_j | \le  \sigma \eta_2 \epsilon_n/\sqrt{k}, j = 1,\dots,k\\
& |\beta_j - \beta^\star_j | \le \tau_j\sigma \eta_2 \epsilon_n/\sqrt{s}, j \in \xi^\star,
\end{split}
\right\}
\end{equation*}
\noindent \textbf{Step 1.} We first choose sufficiently small $\eta_1, \eta_2$ such that
$$\Pn \left( \inf_{ (\sigma, \bm{\alpha}, \bm{\beta}) \in A^\star_n(\eta_1,\eta_2)} \frac{\mc{N}(\mbf{Y}|\widehat{\mbf{F}}\bm{\alpha} + \widehat{\mbf{U}}\bm{\beta}, \sigma^2\mbf{I})}{\mc{N}(\mbf{Y}|\widehat{\mbf{F}}\bm{\alpha}^\star + \widehat{\mbf{U}}\bm{\beta}^\star, \sigma^{\star 2}\mbf{I})} \ge e^{-C_5 s\log p/2}\right) \lesssim e^{-C_5's\log p}.$$
Observing that
\begin{align*}
&~~~\Pn\left(\bm{\varepsilon}^\T \left[\widehat{\mbf{F}}\widehat{\mbf{F}}^\dagger + \widehat{\mbf{U}}_{\xi^\star}\widehat{\mbf{U}}_{\xi^\star}^\dagger\right] \bm{\varepsilon} > 3C_5'n \epsilon_n^2\right)\\
&= \mbb{P}\left( \chi^2_{k + s} > 3C_5' n \epsilon_n^2\right) \lesssim e^{-(3C_5'/2-\smallo(1))n \epsilon_n^2} \le e^{-C_5'n \epsilon_n^2} = e^{-C_5's\log p}
\end{align*}
We proceed to find sufficiently small $\eta_1, \eta_2$ such that
$$\inf_{ (\sigma, \bm{\alpha}, \bm{\beta}) \in A^\star_n(\eta_1,\eta_2)} \log \frac{\mc{N}(\mbf{Y}|\widehat{\mbf{F}}\bm{\alpha} + \widehat{\mbf{U}}\bm{\beta}, \sigma^2\mbf{I})}{\mc{N}(\mbf{Y}|\widehat{\mbf{F}}\bm{\alpha}^\star + \widehat{\mbf{U}}\bm{\beta}^\star, \sigma^{\star 2}\mbf{I})} \ge - C_5 s\log p/2 = - C_5 n\epsilon_n^2/2$$
with conditional probability $1$ given $\bm{\varepsilon}^\T \left[\widehat{\mbf{F}}\widehat{\mbf{F}}^\dagger + \widehat{\mbf{U}}_{\xi^\star}\widehat{\mbf{U}}_{\xi^\star}^\dagger\right] \bm{\varepsilon} \le 3C_5'n\epsilon_n^2$. To this end, write
\begin{align*}
&~~~ - \log \frac{\mc{N}(\mbf{Y}|\widehat{\mbf{F}}\bm{\alpha} + \widehat{\mbf{U}}\bm{\beta}, \sigma^2\mbf{I})}{\mc{N}(\mbf{Y}|\widehat{\mbf{F}}\bm{\alpha}^\star + \widehat{\mbf{U}}\bm{\beta}^\star, \sigma^{\star 2}\mbf{I})}\\
&= \Vert \mbf{Y} - \widehat{\mbf{F}}\bm{\alpha} - \widehat{\mbf{U}}\bm{\beta}\Vert^2/2\sigma^2 - \Vert \mbf{Y} - \widehat{\mbf{F}}\bm{\alpha}^\star - \widehat{\mbf{U}}\bm{\beta}^\star\Vert^2/2\sigma^{\star 2} + n\log(\sigma^2/\sigma^{\star 2})\\
&= \Vert \sigma^\star \bm{\varepsilon} + \widehat{\mbf{F}}(\bm{\alpha}^\star - \bm{\alpha}) + \widehat{\mbf{U}}_{\xi^\star}(\bm{\beta}^\star_{\xi^\star} - \bm{\beta}_{\xi^\star})\Vert^2/2\sigma^2 - \Vert \bm{\varepsilon} \Vert^2/2 + n\log(\sigma^2/\sigma^{\star 2})\\
&\le \Vert \widehat{\mbf{F}}(\bm{\alpha}^\star - \bm{\alpha}) \Vert^2/2\sigma^2 + \Vert \widehat{\mbf{U}}_{\xi^\star}(\bm{\beta}^\star_{\xi^\star} - \bm{\beta}_{\xi^\star}) \Vert^2/2\sigma^2\\
&~~~+ \bm{\varepsilon}^\T \widehat{\mbf{F}}( \bm{\alpha}^\star - \bm{\alpha})/\sigma \times \sigma^\star/\sigma + \bm{\varepsilon}^\T \widehat{\mbf{U}}_{\xi^\star} (\bm{\beta}^\star_{\xi^\star} - \bm{\beta}_{\xi^\star})/\sigma \times \sigma^\star/\sigma + \eta_1 n\epsilon_n^2,
\end{align*}
where
\begin{align*}
\Vert \widehat{\mbf{F}}(\bm{\alpha}^\star - \bm{\alpha}) \Vert^2/2\sigma^2
&\le \lambda_{\max}\left(\widehat{\mbf{F}}^\T \widehat{\mbf{F}}\right) \Vert \bm{\alpha}^\star - \bm{\alpha} \Vert^2/2\sigma^2 \le \eta_2^2 n\epsilon_n^2/2\\
\Vert \widehat{\mbf{U}}_{\xi^\star}(\bm{\beta}^\star_{\xi^\star} - \bm{\beta}_{\xi^\star}) \Vert^2/2\sigma^2
&\le \lambda_{\max}\left(\widehat{\mbf{U}}_{\xi^\star}^\T \widehat{\mbf{U}}_{\xi^\star}\right) \Vert \bm{\beta}^\star_{\xi^\star} - \bm{\beta}_{\xi^\star} \Vert^2/2\sigma^2 \le \eta_2^2 \widehat{\kappa}_1 n\epsilon_n^2/2\\
\bm{\varepsilon}^\T \widehat{\mbf{F}}( \bm{\alpha}^\star - \bm{\alpha})/\sigma \times \sigma^\star/\sigma
&= \bm{\varepsilon}^\T \widehat{\mbf{F}}\widehat{\mbf{F}}^\dagger \widehat{\mbf{F}}( \bm{\alpha}^\star - \bm{\alpha})/\sigma \times \sigma^\star/\sigma\\
&\le \Vert \widehat{\mbf{F}}^\dagger\widehat{\mbf{F}}^\T \bm{\varepsilon} \Vert \times \Vert \widehat{\mbf{F}}(\bm{\alpha}^\star - \bm{\alpha})/\sigma \Vert \times 1\\
&\le \sqrt{3C_5'}\sqrt{n}\epsilon_n \times \eta_2\sqrt{n}\epsilon_n = \sqrt{3C_5'}\eta_2 n\epsilon_n^2\\
\bm{\varepsilon}^\T \widehat{\mbf{U}}_{\xi^\star}( \bm{\beta}_{\xi^\star}^\star - \bm{\beta}_{\xi^\star})/\sigma \times \sigma^\star/\sigma
&= \bm{\varepsilon}^\T \widehat{\mbf{U}}_{\xi^\star}\widehat{\mbf{U}}_{\xi^\star}^\dagger \widehat{\mbf{U}}_{\xi^\star}(\bm{\beta}^\star - \bm{\beta})/\sigma \times \sigma^\star/\sigma\\
&\le \Vert \widehat{\mbf{U}}_{\xi^\star}^\dagger\widehat{\mbf{U}}_{\xi^\star}^\T \bm{\varepsilon} \Vert \times \Vert \widehat{\mbf{U}}_{\xi^\star}(\bm{\beta}_{\xi^\star}^\star - \bm{\beta}_{\xi^\star})/\sigma \Vert \times 1\\
&\le \sqrt{3C_5'}\sqrt{n}\epsilon_n \times \eta_2\sqrt{\widehat{\kappa}_1}\sqrt{n}\epsilon_n = \sqrt{3C_5'\widehat{\kappa}_1}\eta_2 n\epsilon_n^2
\end{align*}
We choose sufficiently small $\eta_1, \eta_2$ such that $(1+\widehat{\kappa}_1)\eta_2^2/2 + \sqrt{3C_5'} (1+\sqrt{\widehat{\kappa}_1}) \eta_2 + \eta_1 \le C_5/2$.

\noindent \textbf{Step 2.}
Since
\begin{align*}
&~~~\int \frac{\mc{N}(\mbf{Y}|\widehat{\mbf{F}}\bm{\alpha} + \widehat{\mbf{U}}\bm{\beta}, \sigma^2\mbf{I})}{\mc{N}(\mbf{Y}|\widehat{\mbf{F}}\bm{\alpha}^\star + \widehat{\mbf{U}}\bm{\beta}^\star, \sigma^{\star 2}\mbf{I})} d\pi(\sigma, \bm{\alpha}, \bm{\beta})\\
&\ge \pi(A^\star_n(\eta_1,\eta_2)) \inf_{ (\sigma, \bm{\alpha}, \bm{\beta}) \in A^\star_n(\eta_1,\eta_2)} \frac{\mc{N}(\mbf{Y}|\widehat{\mbf{F}}\bm{\alpha} + \widehat{\mbf{U}}\bm{\beta}, \sigma^2\mbf{I})}{\mc{N}(\mbf{Y}|\widehat{\mbf{F}}\bm{\alpha}^\star + \widehat{\mbf{U}}\bm{\beta}^\star, \sigma^{\star 2}\mbf{I})},
\end{align*}
it is left to show
$$\pi(A^\star_n(\eta_1,\eta_2)) \gtrsim e^{-C_5s\log p/2}.$$
Note that $\Vert \bm{\alpha}^\star \Vert = \bigO(1)$, $\Vert \bm{\beta}^\star \Vert = \bigO(1)$, and for $j \in \xi^\star$, $\tau_j^{-1} = \Vert \widehat{\mbf{U}}_j\Vert/\sqrt{n} \in [\sqrt{\widehat{\kappa}_0}, \sqrt{\widehat{\kappa}_1}]$. For all $(\sigma, \bm{\alpha}, \bm{\beta}) \in A^\star_n(\eta_1,\eta_2)$, we can find constant $C > 0$ such that
\begin{align*}
& |\alpha_j /\sigma| \le |\alpha^\star_j / \sigma| + \eta_2 \epsilon_n/\sqrt{k} \le C, j = 1,\dots,k\\
& |\beta_j / \tau_j\sigma| \le |\beta^\star_j / \tau_j\sigma | + \eta_2 \epsilon_n/\sqrt{s} \le C, j \in \xi^\star
\end{align*}
hold for sufficiently large $n$. Thus
\begin{align*}
\pi(A^\star_n(\eta_1, \eta_2))
&= \left(\frac{1}{p}\right)^s \int_{\sigma^{\star 2}}^{\sigma^{\star 2}(1+\eta_1 \epsilon_n^2)} g(\sigma^2)d\sigma^2 \times \prod_{j=1}^k \int_{\alpha^\star_j/ \sigma - \eta_2 \epsilon_n/\sqrt{k}}^{\alpha^\star_j / \sigma + \eta_2 \epsilon_n/\sqrt{k}} h\left(\frac{\alpha_j}{\sigma}\right)d\left(\frac{\alpha_j}{\sigma}\right)\\
&~~~\times \prod_{j \in \xi^\star} \int_{\beta^\star_j/ \tau_j\sigma - \eta_2 \epsilon_n/\sqrt{s}}^{\beta^\star_j / \tau_j \sigma + \eta_2 \epsilon_n/\sqrt{s}} h\left(\frac{\beta_j}{\tau_j\sigma}\right)d\left(\frac{\beta_j}{\tau_j\sigma}\right)\\
&\gtrsim \left(\frac{1}{p}\right)^s \times \sigma^{\star 2}\eta_1\epsilon_n^2 g(\sigma^{\star 2})/2 \times \left( \frac{2\eta_2 \epsilon_n}{\sqrt{k}} \inf_{|z| \le C} h(z)\right)^k \times \left( \frac{2\eta_2 \epsilon_n}{\sqrt{s}} \inf_{|z| \le C} h(z)\right)^s\\
&\gtrsim C' \left(\frac{1}{p}\right)^s \times \epsilon_n^2 \times \left( \frac{1}{\sqrt{p}}  \right)^s\\
&\gtrsim e^{-C_5s\log p/2},
\end{align*}
if $C_5 > 3$.
\end{proof}

\section{Technical Proofs for Factor Model Estimation}
This section is devoted to the proofs of Theorem \ref{thm:3}. Parts (a) and (b) of Theorem \ref{thm:3} are restated as Lemmas \ref{lem:eigenvalue}-\ref{lem:eigenspace}. The proof of Lemma \ref{lem:eigenvalue} is straightforward. To prove Lemma \ref{lem:eigenspace}, we generalize the Davis-Kahan theorem \citep{davis1970rotation,yu2014useful} as Proposition \ref{prop:davis-kahan} and apply it to bound the principal angles from the perturbed eigenspace to the target eigenspace. Two preliminary lemmas, required by the proof of Lemma \ref{lem:eigenspace}, are stated as Lemmas \ref{lem:UB}-\ref{lem:D}. Parts (c) and (d) of Theorem \ref{thm:3}, restated as Lemmas \ref{lem:F}-\ref{lem:U}, are immediate corollaries of Lemma \ref{lem:eigenspace}.

\begin{lemma}[Theorem \ref{thm:3}, part (a)] \label{lem:eigenvalue}
Suppose Assumptions \ref{asm:5}-\ref{asm:7}. Recall that $\widehat{\lambda}_1 \ge \dots \ge \widehat{\lambda}_n$ are the $n$ eigenvalues of $\mbf{X}\mbf{X}^\T/n$, and that $\lambda_1 \ge \dots \ge \lambda_k$ are $k$ eigenvalues of $\mbf{B}^\T\mbf{B}$. Then
\begin{align*}
\max_{j=1}^k |\widehat{\lambda}_j - \lambda_j | &= \bigOp(p\sqrt{\log p/n}),\\
\max_{j=k+1}^n |\widehat{\lambda}_j - 0| &= \bigOp(p\sqrt{\log p/n}).
\end{align*}
\end{lemma}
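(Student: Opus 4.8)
\textbf{Proof plan for Lemma \ref{lem:eigenvalue}.}
The plan is to work with $\mbf{X}^\T\mbf{X}/n$ instead of $\mbf{X}\mbf{X}^\T/n$, since the two matrices have the same nonzero eigenvalues and $\widehat{\lambda}_1,\dots,\widehat{\lambda}_n$ are exactly the top $n$ eigenvalues of $\mbf{X}^\T\mbf{X}/n$. Substituting $\mbf{X} = \mbf{F}\mbf{B}^\T + \mbf{U}$ gives $\mbf{X}^\T\mbf{X}/n = \mbf{B}(\mbf{F}^\T\mbf{F}/n)\mbf{B}^\T + \mbf{B}(\mbf{F}^\T\mbf{U}/n) + (\mbf{U}^\T\mbf{F}/n)\mbf{B}^\T + \mbf{U}^\T\mbf{U}/n$, where the first summand is the ``signal'' and the last three form a perturbation $\mbf{E}$. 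First I would compare the signal to $\mbf{B}\mbf{B}^\T$: the difference is $\mbf{B}(\mbf{F}^\T\mbf{F}/n - \mbf{I})\mbf{B}^\T$, whose operator norm is at most $\Vert\mbf{B}\Vert^2\,\Vert\mbf{F}^\T\mbf{F}/n - \mbf{I}\Vert$. By Assumption \ref{asm:6}, $\Vert\mbf{B}\Vert^2 = \lambda_{\max}(\mbf{B}^\T\mbf{B}) = \bigO(p)$, and since $\mbf{F}^\T\mbf{F}/n - \mbf{I}$ is $k\times k$ with $k$ fixed, Assumption \ref{asm:7} yields $\Vert\mbf{F}^\T\mbf{F}/n - \mbf{I}\Vert \le k\,\Vert\mbf{F}^\T\mbf{F}/n - \mbf{I}\Vert_{\max} = \bigOp(\sqrt{\log p/n})$. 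Hence the difference is $\bigOp(p\sqrt{\log p/n})$ in operator norm, so by Weyl's inequality the $k$ eigenvalues of $\mbf{B}(\mbf{F}^\T\mbf{F}/n)\mbf{B}^\T$ lie within $\bigOp(p\sqrt{\log p/n})$ of $\lambda_1,\dots,\lambda_k$ (the nonzero eigenvalues of $\mbf{B}^\T\mbf{B}$, equal to those of $\mbf{B}\mbf{B}^\T$), while all remaining eigenvalues of $\mbf{B}\mbf{B}^\T$ are zero.

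The next step is to bound $\Vert\mbf{E}\Vert$. For the cross terms, $\Vert\mbf{B}(\mbf{F}^\T\mbf{U}/n)\Vert \le \Vert\mbf{B}\Vert\,\Vert\mbf{F}^\T\mbf{U}/n\Vert$, and for the $k\times p$ matrix $\mbf{F}^\T\mbf{U}/n$ the crude bound $\Vert\mbf{M}\Vert\le\sqrt{\Vert\mbf{M}\Vert_1\Vert\mbf{M}\Vert_\infty}\le\sqrt{kp}\,\Vert\mbf{M}\Vert_{\max}$ together with Assumption \ref{asm:7} gives $\Vert\mbf{F}^\T\mbf{U}/n\Vert = \bigOp(\sqrt{p\log p/n})$; thus $\Vert\mbf{B}(\mbf{F}^\T\mbf{U}/n)\Vert = \bigOp(\sqrt{p})\cdot\bigOp(\sqrt{p\log p/n}) = \bigOp(p\sqrt{\log p/n})$, and likewise for its transpose. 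For the last term, $\Vert\mbf{U}^\T\mbf{U}/n\Vert \le \Vert\mbf{\Sigma}\Vert + \Vert\mbf{U}^\T\mbf{U}/n - \mbf{\Sigma}\Vert$; the sparsity structure $\mbf{\Sigma}\in\mathcal{S}_q^+$ with $m_q(p)=\smallo(\log p)$ gives $\lambda_{\max}(\mbf{\Sigma}) \le \max_j\sum_i|\mbf{\Sigma}_{ij}| = \smallo(\log p)$, while the trivial conversion $\Vert\mbf{U}^\T\mbf{U}/n - \mbf{\Sigma}\Vert \le p\,\Vert\mbf{U}^\T\mbf{U}/n - \mbf{\Sigma}\Vert_{\max} = \bigOp(p\sqrt{\log p/n})$ controls the remainder. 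Collecting the three pieces, $\Vert\mbf{E}\Vert = \bigOp(p\sqrt{\log p/n})$.

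Finally I would apply Weyl's inequality a second time: for every index $j$, $|\lambda_j(\mbf{X}^\T\mbf{X}/n) - \lambda_j(\mbf{B}(\mbf{F}^\T\mbf{F}/n)\mbf{B}^\T)| \le \Vert\mbf{E}\Vert = \bigOp(p\sqrt{\log p/n})$; combining with the first step yields $\max_{1\le j\le k}|\widehat{\lambda}_j - \lambda_j| = \bigOp(p\sqrt{\log p/n})$ and $\max_{k+1\le j\le n}|\widehat{\lambda}_j| = \bigOp(p\sqrt{\log p/n})$. The only mildly delicate point is ensuring that every perturbation piece — in particular $\Vert\mbf{U}^\T\mbf{U}/n - \mbf{\Sigma}\Vert$, for which Assumption \ref{asm:7} provides only an entrywise bound — still fits within the target rate once the crude operator-norm-to-max-norm conversion is applied. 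Because the target rate already carries the factor $p$, these conversions lose nothing here, so there is no real obstacle beyond careful bookkeeping of the norm inequalities.
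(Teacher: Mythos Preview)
Your proposal is correct and follows essentially the same route as the paper: switch to $\mbf{X}^\T\mbf{X}/n$, expand via $\mbf{X}=\mbf{F}\mbf{B}^\T+\mbf{U}$, bound each of the four pieces in operator norm (using $\Vert\mbf{B}\Vert^2=\bigO(p)$, the $\sqrt{kp}$ conversion for $\mbf{F}^\T\mbf{U}/n$, and $\Vert\mbf{\Sigma}\Vert_1\le m_q(p)C_0^{1-q}=\smallo(\log p)$ for the idiosyncratic part), and apply Weyl's inequality. The only cosmetic difference is that the paper bounds $\Vert\mbf{X}^\T\mbf{X}/n-\mbf{B}\mbf{B}^\T\Vert$ directly and invokes Weyl once, whereas you split into two Weyl steps through $\mbf{B}(\mbf{F}^\T\mbf{F}/n)\mbf{B}^\T$; the content is identical.
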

\begin{proof}
It suffices to show $\Vert \mbf{X}^\T \mbf{X}/n - \mbf{B}\mbf{B}^\T\Vert = \bigOp(p\sqrt{\log p/n})$ so that Weyl's inequality applies. To this end, write
\begin{align*}
&~~~\mbf{X}^\T \mbf{X}/n - \mbf{B}\mbf{B}^\T = (\mbf{F}\mbf{B}^\T+\mbf{U})^\T(\mbf{F}\mbf{B}^\T+\mbf{U})/n - \mbf{B}\mbf{B}^\T\\
&= \mbf{B}(\mbf{F}^\T\mbf{F}/n - \mbf{I})\mbf{B}^\T
+ \mbf{U}^\T\mbf{F}\mbf{B}^\T/n + \mbf{B}\mbf{F}^\T\mbf{U}/n + \mbf{U}^\T\mbf{U}/n,
\end{align*}
where
\begin{align*}
\Vert \mbf{B}(\mbf{F}^\T\mbf{F}/n - \mbf{I})\mbf{B}^\T \Vert
&\le \Vert \mbf{B} \Vert \Vert \mbf{F}^\T\mbf{F}/n - \mbf{I} \Vert \Vert \mbf{B}^\T \Vert \le \lambda_1 k \Vert \mbf{F}^\T\mbf{F}/n - \mbf{I} \Vert_{\max}\\
\Vert \mbf{U}^\T\mbf{F}\mbf{B}^\T/n \Vert  = \Vert \mbf{B}\mbf{F}^\T\mbf{U}/n\Vert
&\le \Vert \mbf{B} \Vert \Vert \mbf{F}^\T\mbf{U}/n\Vert \le \sqrt{\lambda_1} \sqrt{pk} \Vert \mbf{F}^\T\mbf{U}/n\Vert_{\max}\\
\Vert \mbf{U}^\T\mbf{U}/n\Vert &\le \Vert \mbf{U}^\T\mbf{U}/n - \mbf{\Sigma}\Vert + \Vert \mbf{\Sigma} \Vert \le p \Vert \mbf{U}^\T\mbf{U}/n - \mbf{\Sigma}\Vert_{\max} + \Vert \mbf{\Sigma}\Vert.
\end{align*}
Plugging into it rates in Assumptions \ref{asm:5}-\ref{asm:7} and $\Vert \mbf{\Sigma}\Vert \le \Vert \mbf{\Sigma}\Vert_1 \le m_q(p)C_0^{1-q} = o(\log p)$ completes the proof.
\end{proof}

\begin{proposition} \label{prop:davis-kahan} 
Let $\widehat{\mbf{A}}$ be an $n \times n$ symmetric matrix with eigenvalues $\widehat{\lambda}_1 \ge \widehat{\lambda}_2 \ge \dots \ge \widehat{\lambda}_n$ and corresponding eigenvectors $\widehat{\psi}_1, \dots, \widehat{\psi}_n$. Fix $1 \le l \le r \le n$ and assume that $\min \{\widehat{\lambda}_{l-1} - \widehat{\lambda}_{l}, \widehat{\lambda}_{r} - \widehat{\lambda}_{r+1}\} > 0$, where $\widehat{\lambda}_0 := +\infty$ and $\widehat{\lambda}_{n+1} := -\infty$. Let $k = l - r + 1$. Let $\widehat{\mbf{\Lambda}} = \diag(\widehat{\lambda}_l, \dots, \widehat{\lambda}_r)$ and $\widehat{\mbf{\Lambda}}_c$ consists of the other $n - k$ eigenvalues of $\widehat{\mbf{A}}$. Let  $\widehat{\mbf{\Psi}} = (\widehat{\psi}_l, \dots, \widehat{\psi}_r)$ and $\widehat{\mbf{\Psi}}_c$ consists of the other $n - k$ eigenvectors of $\widehat{\mbf{A}}$. Let $\mbf A$ be an $n \times n$ (not necessarily symmetric) matrix with ``$\mbf{\Delta}$-approximate'' eigenvalues $\lambda_l \ge \dots \ge \lambda_r$ in the sense that 
$$\mbf{A} \mbf{\Psi} = \mbf{\Psi} \mbf{\Lambda} + \mbf \Delta,$$
where $\mbf{\Lambda} = \diag(\lambda_l, \dots, \lambda_r)$ and $\mbf{\Psi} = (\psi_l, \dots, \psi_r)$ consists of $k$ (not necessarily orthonormal) vectors. Then
$$\Vert \widehat{\mbf{\Psi}}_c^\T \mbf{\Psi}\Vert_\text{F} \le \frac{\Vert \mbf \Delta \Vert_\text{F} + \sqrt{k}\Vert \mbf{\Psi} \Vert \left(\Vert \widehat{\mbf A} - \mbf A \Vert + \Vert \widehat{\mbf{\Lambda}} - \mbf{\Lambda} \Vert_{\max}\right)}{\min \{\widehat{\lambda}_{l-1} - \widehat{\lambda}_{l}, \widehat{\lambda}_{r} - \widehat{\lambda}_{r+1}\}}$$
\end{proposition}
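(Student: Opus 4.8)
The plan is to run the classical Davis--Kahan $\sin\Theta$ argument, reorganized so that the three sources of perturbation---the non-exactness $\mbf{\Delta}$ of the eigenrelation for $\mbf{A}$, the operator-norm gap $\Vert\widehat{\mbf{A}}-\mbf{A}\Vert$, and the mismatch $\Vert\widehat{\mbf{\Lambda}}-\mbf{\Lambda}\Vert_{\max}$ between the approximate eigenvalue block and the true one---are folded into a single residual matrix, after which the estimate drops out of a diagonal Sylvester equation. First I would fix a choice of orthonormal eigenvectors so that $[\widehat{\mbf{\Psi}},\widehat{\mbf{\Psi}}_c]$ is an $n\times n$ orthogonal matrix; then $\widehat{\mbf{\Psi}}_c$ has orthonormal columns, and symmetry of $\widehat{\mbf{A}}$ together with $\widehat{\mbf{A}}\widehat{\mbf{\Psi}}_c=\widehat{\mbf{\Psi}}_c\widehat{\mbf{\Lambda}}_c$ gives $\widehat{\mbf{\Psi}}_c^\T\widehat{\mbf{A}}=\widehat{\mbf{\Lambda}}_c\widehat{\mbf{\Psi}}_c^\T$.

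Next I would rewrite the hypothesis $\mbf{A}\mbf{\Psi}=\mbf{\Psi}\mbf{\Lambda}+\mbf{\Delta}$ in terms of $\widehat{\mbf{A}}$ and $\widehat{\mbf{\Lambda}}$:
\[
\widehat{\mbf{A}}\mbf{\Psi}=\mbf{\Psi}\widehat{\mbf{\Lambda}}+\widetilde{\mbf{\Delta}},\qquad\widetilde{\mbf{\Delta}}:=\mbf{\Delta}+\mbf{\Psi}\big(\mbf{\Lambda}-\widehat{\mbf{\Lambda}}\big)+\big(\widehat{\mbf{A}}-\mbf{A}\big)\mbf{\Psi},
\]
and bound $\Vert\widetilde{\mbf{\Delta}}\Vert_\text{F}$ termwise. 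Since $\widehat{\mbf{\Lambda}}$ and $\mbf{\Lambda}$ are $k\times k$ diagonal, $\Vert\mbf{\Psi}(\mbf{\Lambda}-\widehat{\mbf{\Lambda}})\Vert_\text{F}\le\Vert\mbf{\Psi}\Vert\,\Vert\widehat{\mbf{\Lambda}}-\mbf{\Lambda}\Vert_\text{F}\le\sqrt{k}\,\Vert\mbf{\Psi}\Vert\,\Vert\widehat{\mbf{\Lambda}}-\mbf{\Lambda}\Vert_{\max}$; since $\mbf{\Psi}$ has $k$ columns, $\Vert\mbf{\Psi}\Vert_\text{F}\le\sqrt{k}\,\Vert\mbf{\Psi}\Vert$ and hence $\Vert(\widehat{\mbf{A}}-\mbf{A})\mbf{\Psi}\Vert_\text{F}\le\sqrt{k}\,\Vert\widehat{\mbf{A}}-\mbf{A}\Vert\,\Vert\mbf{\Psi}\Vert$. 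Adding these to $\Vert\mbf{\Delta}\Vert_\text{F}$ reproduces exactly the numerator in the claimed bound.

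Then I would left-multiply $\widehat{\mbf{A}}\mbf{\Psi}=\mbf{\Psi}\widehat{\mbf{\Lambda}}+\widetilde{\mbf{\Delta}}$ by $\widehat{\mbf{\Psi}}_c^\T$ and substitute $\widehat{\mbf{\Psi}}_c^\T\widehat{\mbf{A}}=\widehat{\mbf{\Lambda}}_c\widehat{\mbf{\Psi}}_c^\T$, obtaining the Sylvester identity $\widehat{\mbf{\Lambda}}_c\mbf{M}-\mbf{M}\widehat{\mbf{\Lambda}}=\widehat{\mbf{\Psi}}_c^\T\widetilde{\mbf{\Delta}}$ with $\mbf{M}:=\widehat{\mbf{\Psi}}_c^\T\mbf{\Psi}$. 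Both $\widehat{\mbf{\Lambda}}_c$ and $\widehat{\mbf{\Lambda}}$ being diagonal, this decouples entrywise as $(\widehat\lambda_a-\widehat\lambda_b)M_{ab}=(\widehat{\mbf{\Psi}}_c^\T\widetilde{\mbf{\Delta}})_{ab}$, where $\widehat\lambda_a$ runs over the $n-k$ eigenvalues outside the index block $\{l,\dots,r\}$ and $\widehat\lambda_b$ over $\widehat\lambda_l,\dots,\widehat\lambda_r$. Each such $\widehat\lambda_a$ is either $\ge\widehat\lambda_{l-1}$ or $\le\widehat\lambda_{r+1}$ while $\widehat\lambda_b\in[\widehat\lambda_r,\widehat\lambda_l]$, so $|\widehat\lambda_a-\widehat\lambda_b|\ge\min\{\widehat\lambda_{l-1}-\widehat\lambda_l,\widehat\lambda_r-\widehat\lambda_{r+1}\}=:\delta$. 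Dividing entrywise and summing squares gives $\Vert\mbf{M}\Vert_\text{F}\le\delta^{-1}\Vert\widehat{\mbf{\Psi}}_c^\T\widetilde{\mbf{\Delta}}\Vert_\text{F}\le\delta^{-1}\Vert\widetilde{\mbf{\Delta}}\Vert_\text{F}$, the last inequality because $\widehat{\mbf{\Psi}}_c^\T$ has all singular values equal to $1$; combined with the bound on $\Vert\widetilde{\mbf{\Delta}}\Vert_\text{F}$ this is the claim.

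The step I expect to require the most care is the spectral-separation inequality $|\widehat\lambda_a-\widehat\lambda_b|\ge\delta$: it must be justified using only the gaps immediately above index $l$ and below index $r$ (with the conventions $\widehat\lambda_0=+\infty$, $\widehat\lambda_{n+1}=-\infty$), and it is essential that the denominator comes out as $\delta$ and not as $\delta$ diminished by $\Vert\widehat{\mbf{\Lambda}}-\mbf{\Lambda}\Vert_{\max}$---which is precisely what writing the Sylvester equation with $\widehat{\mbf{\Lambda}}$ rather than $\mbf{\Lambda}$ secures. Everything else---the triangle inequalities for $\widetilde{\mbf{\Delta}}$ and the interplay of operator and Frobenius norms that produces the $\sqrt{k}$ factors---is routine bookkeeping.
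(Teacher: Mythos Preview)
Your proposal is correct and follows essentially the same route as the paper: both rewrite the approximate eigenrelation as $\widehat{\mbf{A}}\mbf{\Psi}-\mbf{\Psi}\widehat{\mbf{\Lambda}}=\widetilde{\mbf{\Delta}}$, bound $\Vert\widetilde{\mbf{\Delta}}\Vert_\text{F}$ termwise to produce the numerator, then pass to the $\widehat{\mbf{\Psi}}_c$-block to obtain the Sylvester relation $\widehat{\mbf{\Lambda}}_c\mbf{M}-\mbf{M}\widehat{\mbf{\Lambda}}=\widehat{\mbf{\Psi}}_c^\T\widetilde{\mbf{\Delta}}$ and exploit the eigengap. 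The only cosmetic difference is that you solve the Sylvester equation entrywise, whereas the paper vectorizes it and uses the Kronecker-product identity $\vecc(\mbf{T}_1\mbf{T}_2\mbf{T}_3)=(\mbf{T}_3^\T\otimes\mbf{T}_1)\vecc(\mbf{T}_2)$ to reach the same lower bound; the two formulations are equivalent.
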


\begin{proof}
Write
$$\mbf \Delta = \mbf{A} \mbf{\Psi} - \mbf{\Psi}\mbf{\Lambda} = \widehat{\mbf A} \mbf{\Psi} - \mbf{\Psi} \widehat{\mbf{\Lambda}} + ( \mbf{A} - \widehat{\mbf{A}}) \mbf{\Psi} - \mbf{\Psi} ( \mbf{\Lambda} -\widehat{\mbf{\Lambda}}).$$
Using the facts that $\Vert \mbf{T}_1 \mbf{T}_2\Vert_\text{F} \le \Vert \mbf{T}_1 \Vert \Vert \mbf{T}_2\Vert_\text{F}$ and that $\Vert \mbf{T}_1 \Vert \le \Vert \mbf{T}_1 \Vert_\text{F} \le \sqrt{\rank (\mbf{T}_1)}\Vert \mbf{T}_1 \Vert$, we derive that
\begin{align*}
\Vert \widehat{\mbf A} \mbf{\Psi} - \mbf{\Psi} \widehat{\mbf{\Lambda}} \Vert_\text{F}
&\le \Vert \mbf \Delta \Vert_\text{F} + \Vert (\widehat{\mbf A} - \mbf A) \mbf{\Psi} \Vert_\text{F} + \Vert \mbf{\Psi} (\widehat{\mbf{\Lambda}} - \mbf{\Lambda}) \Vert_\text{F}\\
&\le \Vert \mbf \Delta \Vert_\text{F} + \Vert \widehat{\mbf A} - \mbf A \Vert \Vert \mbf{\Psi} \Vert_\text{F} + \Vert \mbf{\Psi} \Vert \Vert \widehat{\mbf{\Lambda}} - \mbf{\Lambda} \Vert_\text{F},\\
&\le \Vert \mbf \Delta \Vert_\text{F} + \Vert \widehat{\mbf A} - \mbf A \Vert \sqrt{k}\Vert \mbf{\Psi} \Vert  + \sqrt{k} \Vert \mbf{\Psi} \Vert  \Vert \widehat{\mbf{\Lambda}} - \mbf{\Lambda} \Vert_{\max},
\end{align*}
which is the numerator on the right hand side of the claimed inequality. On the other hand,
\begin{align*}
\Vert \widehat{\mbf A} \mbf{\Psi} - \mbf{\Psi} \widehat{\mbf{\Lambda}} \Vert_\text{F}
&= \Vert \widehat{\mbf{\Psi}} \widehat{\mbf{\Psi}}^\T \mbf{\Psi} \widehat{\mbf{\Lambda}} + \widehat{\mbf{\Psi}}_c \widehat{\mbf{\Psi}}_c^\T \mbf{\Psi} \widehat{\mbf{\Lambda}}- \widehat{\mbf{\Psi}} \widehat{\mbf{\Lambda}} \widehat{\mbf{\Psi}}^\T \mbf{\Psi} - \widehat{\mbf{\Psi}}_c \widehat{\mbf{\Lambda}}_c \widehat{\mbf{\Psi}}_c^\T \mbf{\Psi} \Vert_\text{F}\\
&\ge \Vert \widehat{\mbf{\Psi}}_c \widehat{\mbf{\Psi}}_c^\T \mbf{\Psi} \widehat{\mbf{\Lambda}} - \widehat{\mbf{\Psi}}_c \widehat{\mbf{\Lambda}}_c \widehat{\mbf{\Psi}}_c^\T \mbf{\Psi}  \Vert_\text{F}\\
&= \Vert \widehat{\mbf{\Psi}}_c^\T \mbf{\Psi} \widehat{\mbf{\Lambda}} - \widehat{\mbf{\Lambda}}_c \widehat{\mbf{\Psi}}_c^\T \mbf{\Psi}  \Vert_\text{F}
\end{align*}
where the first (in)equality follows from the identities that $\mbf{I} = \widehat{\mbf{\Psi}} \widehat{\mbf{\Psi}}^\T+ \widehat{\mbf{\Psi}}_c \widehat{\mbf{\Psi}}_c^\T$ and that $\widehat{\mbf{A}} = \widehat{\mbf{\Psi}} \widehat{\mbf{\Lambda}}\widehat{\mbf{\Psi}}^\T + \widehat{\mbf{\Psi}}_c \widehat{\mbf{\Lambda}}_c \widehat{\mbf{\Psi}}_c^\T$, the second (in)equality uses the orthogonality of $[\widehat{\mbf{\Psi}}, \widehat{\mbf{\Psi}}_c]$ to derive that
\begin{align*}
\Vert \widehat{\mbf{\Psi}} \mbf{T}_1 + \widehat{\mbf{\Psi}}_c \mbf{T}_2 \Vert_\text{F}
&= \trace^{1/2}\left[(\widehat{\mbf{\Psi}} \mbf{T}_1 + \widehat{\mbf{\Psi}}_c \mbf{T}_2 )^\T(\widehat{\mbf{\Psi}} \mbf{T}_1 +\widehat{\mbf{\Psi}}_c \mbf{T}_2) \right]\\
&= \trace^{1/2}\left[\mbf{T}_1^\T \widehat{\mbf{\Psi}}^\T \widehat{\mbf{\Psi}} \mbf{T}_1 + \mbf{T}_2^\T \widehat{\mbf{\Psi}}_c^\T \widehat{\mbf{\Psi}}_c \mbf{T}_2\right]\\
&\ge \trace^{1/2}\left[\mbf{T}_2^\T \widehat{\mbf{\Psi}}_c^\T \widehat{\mbf{\Psi}}_c \mbf{T}_2\right] = \Vert \widehat{\mbf{\Psi}}_c \mbf{T}_2 \Vert_\text{F},
\end{align*}
and the third (in)equality uses the column orthonormality of $\widehat{\mbf{\Psi}}_c$ again.

Proceed to consider the term $\Vert \widehat{\mbf{\Psi}}_c^\T \mbf{\Psi} \widehat{\mbf{\Lambda}} - \widehat{\mbf{\Lambda}}_c \widehat{\mbf{\Psi}}_c^\T \mbf{\Psi}  \Vert_\text{F}$. For real matrices $\mbf{T}_1, \mbf{T}_2, \mbf{T}_3$, we write $\vectorize(\mbf{T}_1)$ as the vectorization of $\mbf{T}_1$, which is the vector obtained by stacking columns of $\mbf{T}_1$, and denote by $\mbf{T}_1 \otimes \mbf{T}_2$ the kronecker product of matrices $\mbf{T}_1$ and $\mbf{T}_2$. Using the identity $\vectorize(\mbf{T}_1 \mbf{T}_2 \mbf{T}_3) = \mbf{T}_3^\T \otimes \mbf{T}_1 \vectorize(\mbf{T}_2)$ for any matrices $\mbf{T}_1, \mbf{T}_2, \mbf{T}_3$ with appropriate dimensions, we have
\begin{align*}
\Vert \widehat{\mbf{\Psi}}_c^\T \mbf{\Psi} \widehat{\mbf{\Lambda}} - \widehat{\mbf{\Lambda}}_c \widehat{\mbf{\Psi}}_c^\T \mbf{\Psi}  \Vert_\text{F}
&= \Vert \vectorize(\mbf{I}_{n-k}\widehat{\mbf{\Psi}}_c^\T \mbf{\Psi} \widehat{\mbf{\Lambda}})  - \vectorize(\widehat{\mbf{\Lambda}}_c \widehat{\mbf{\Psi}}_c^\T \mbf{\Psi} \mbf{I}) \Vert\\
&= \Vert \widehat{\mbf{\Lambda}} \otimes \mbf{I}_{n-k} \vectorize(\widehat{\mbf{\Psi}}_c^\T \mbf{\Psi})  - \mbf{I} \otimes \widehat{\mbf{\Lambda}}_c \vectorize( \widehat{\mbf{\Psi}}_c^\T\mbf {\Psi}) \Vert\\
&\ge \min \{\widehat{\lambda}_{l-1} - \widehat{\lambda}_{l}, \widehat{\lambda}_{r} - \widehat{\lambda}_{r+1}\} \Vert \vectorize( \widehat{\mbf{\Psi}}_c^\T\mbf {\Psi})\Vert\\
&= \min \{\widehat{\lambda}_{l-1} - \widehat{\lambda}_{l}, \widehat{\lambda}_{r} - \widehat{\lambda}_{r+1}\} \Vert \widehat{\mbf{\Psi}}_c^\T\mbf {\Psi}\Vert_\text{F},
\end{align*}
which is the left hand side times the denominator on the right hand side in the claimed inequality.
\end{proof}

\begin{lemma}[Theorem \ref{thm:3}, part (b)] \label{lem:eigenspace}
Suppose Assumptions \ref{asm:5}-\ref{asm:7} hold. Recall that $\widetilde{\mbf{F}}$ consists of $\sqrt{n}$-scaled left singular vectors of $\mbf{F}$, and that $\widehat{\mbf{F}}$ consists of $\sqrt{n}$-scalded top $k$ left singular vectors of $\mbf{X}$. Then $\widehat{\mbf{F}}$ recovers the column space of the latent common factor matrix $\mbf{F}$ in the sense that
$$\Vert \sin \angle (\widehat{\mbf{F}}/\sqrt{n}, \widetilde{\mbf{F}}/\sqrt{n})\Vert^2  = \bigOp(\log p/n).$$
\end{lemma}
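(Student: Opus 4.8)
\emph{Strategy.} The plan is to invoke the generalized Davis--Kahan bound (Proposition~\ref{prop:davis-kahan}) with $\widehat{\mbf{A}}=\mbf{X}\mbf{X}^\T/n$, whose top $k$ eigenvectors are by definition $\widehat{\mbf{F}}/\sqrt{n}$, and with the ``signal'' matrix $\mbf{A}=\mbf{F}\mbf{B}^\T\mbf{B}\mbf{F}^\T/n$. Under Assumptions~\ref{asm:6}--\ref{asm:7} the matrices $\mbf{B}$ and $\mbf{F}$ have full column rank with probability approaching one, so $\mbf{A}$ is symmetric, positive semidefinite, of rank $k$, and its range equals $\mathrm{col}(\mbf{F})=\mathrm{col}(\widetilde{\mbf{F}})$. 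I would take $\mbf{\Psi}$ to be the orthonormal eigenbasis of $\mbf{A}$ lying in that range, so that $\mbf{A}\mbf{\Psi}=\mbf{\Psi}\mbf{\Lambda}$ \emph{exactly}, i.e.\ $\mbf{\Delta}=\mbf{0}$, with $\mbf{\Lambda}$ collecting the $k$ nonzero eigenvalues of $\mbf{A}$. Since $\mbf{\Psi}$ and $\widetilde{\mbf{F}}/\sqrt{n}$ are two orthonormal bases of the same subspace, $\Vert\widehat{\mbf{\Psi}}_c^\T\mbf{\Psi}\Vert_\text{F}=\Vert\sin\angle(\widehat{\mbf{F}}/\sqrt{n},\widetilde{\mbf{F}}/\sqrt{n})\Vert$, which is exactly the quantity to be bounded.

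\emph{Verifying the hypotheses.} Lemma~\ref{lem:eigenvalue} and Assumption~\ref{asm:6} give $\widehat{\lambda}_k\ge\lambda_k-\bigOp(p\sqrt{\log p/n})\gtrsim p$ and $\widehat{\lambda}_{k+1}=\bigOp(p\sqrt{\log p/n})=\smallo(p)$, so the relevant eigengap $\min\{\widehat{\lambda}_0-\widehat{\lambda}_1,\widehat{\lambda}_k-\widehat{\lambda}_{k+1}\}=\widehat{\lambda}_k-\widehat{\lambda}_{k+1}\gtrsim p$ with probability approaching one. Applying Weyl's inequality to $\mbf{F}^\T\mbf{F}/n=\mbf{I}+\bigOp(\sqrt{\log p/n})$ (here $k$ is fixed) together with Lemma~\ref{lem:eigenvalue} shows that the $k$ nonzero eigenvalues of $\mbf{A}$ and the top $k$ eigenvalues of $\widehat{\mbf{A}}$ all lie within $\bigOp(p\sqrt{\log p/n})$ of $\lambda_1,\dots,\lambda_k$, hence $\Vert\widehat{\mbf{\Lambda}}-\mbf{\Lambda}\Vert_{\max}=\bigOp(p\sqrt{\log p/n})$. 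Feeding these into Proposition~\ref{prop:davis-kahan}---but tracking, as its proof actually does, the sharper quantity $\Vert(\widehat{\mbf{A}}-\mbf{A})\mbf{\Psi}\Vert_\text{F}$ in place of the cruder $\Vert\widehat{\mbf{A}}-\mbf{A}\Vert\,\Vert\mbf{\Psi}\Vert_\text{F}$---the entire statement reduces to proving
$$\Vert(\widehat{\mbf{A}}-\mbf{A})\mbf{\Psi}\Vert_\text{F}=\bigOp\!\left(p\sqrt{\log p/n}\right),$$
after which $\Vert\sin\angle(\widehat{\mbf{F}}/\sqrt{n},\widetilde{\mbf{F}}/\sqrt{n})\Vert=\bigOp(\sqrt{\log p/n})$ and squaring gives the claim.

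\emph{Bounding the perturbation.} Write $\widehat{\mbf{A}}-\mbf{A}=(\mbf{F}\mbf{B}^\T\mbf{U}^\T+\mbf{U}\mbf{B}\mbf{F}^\T+\mbf{U}\mbf{U}^\T)/n$ and apply it to $\mbf{\Psi}$, which equals $\widetilde{\mbf{F}}/\sqrt{n}$ up to an orthogonal rotation; using $\widetilde{\mbf{F}}=\mbf{F}\mbf{R}^{-1}$ with $\mbf{R}^\T\mbf{R}=\mbf{F}^\T\mbf{F}/n=\mbf{I}+\bigOp(\sqrt{\log p/n})$ produces three pieces. The piece $\mbf{F}\mbf{B}^\T\mbf{U}^\T\widetilde{\mbf{F}}/(n\sqrt{n})=\mbf{F}\mbf{B}^\T\mbf{U}^\T\mbf{F}\mbf{R}^{-1}/(n\sqrt{n})$ is $\bigOp(p\sqrt{\log p/n})$ because $\Vert\mbf{F}\Vert\,\Vert\mbf{B}\Vert=\bigOp(\sqrt{np})$, $\Vert\mbf{U}^\T\mbf{F}\Vert_\text{F}\le\sqrt{pk}\,\Vert\mbf{U}^\T\mbf{F}/n\Vert_{\max}\cdot n=\bigOp(\sqrt{np\log p})$ by Assumption~\ref{asm:7}, and $\Vert\mbf{R}^{-1}\Vert=\bigOp(1)$; the piece $\mbf{U}\mbf{U}^\T\widetilde{\mbf{F}}/(n\sqrt{n})$ likewise factors through $\mbf{U}^\T\mbf{F}$ together with a factor $\Vert\mbf{U}/\sqrt{n}\Vert$, and $\Vert\mbf{U}^\T\mbf{U}/n\Vert\le\Vert\mbf{\Sigma}\Vert+p\Vert\mbf{U}^\T\mbf{U}/n-\mbf{\Sigma}\Vert_{\max}=\smallo(\log p)+\bigOp(p\sqrt{\log p/n})$ (with $\Vert\mbf{\Sigma}\Vert\le C_0^{1-q}m_q(p)=\smallo(\log p)$) gives $\Vert\mbf{U}/\sqrt{n}\Vert=\bigOp(\sqrt{p}\,(\log p/n)^{1/4})$ and hence a contribution $\bigOp(p(\log p/n)^{3/4})=\smallo(p\sqrt{\log p/n})$. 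The remaining piece $\mbf{U}\mbf{B}\mbf{F}^\T\widetilde{\mbf{F}}/(n\sqrt{n})=\mbf{U}\mbf{B}\mbf{R}^\T/\sqrt{n}$ is the hard one: $\mbf{F}^\T\widetilde{\mbf{F}}=n\mbf{R}^\T$ cancels the averaging factor but leaves a bare $\mbf{U}\mbf{B}$, whose entries $\sum_c U_{ic}B_{ca}$ are sums over the $p$ feature coordinates and are \emph{not} governed by the observation-direction averages in Assumption~\ref{asm:7}. Controlling $\Vert\mbf{U}\mbf{B}\mbf{R}^\T/\sqrt{n}\Vert_\text{F}$, equivalently $\tfrac1n\Vert\mbf{U}\mbf{B}\Vert_\text{F}^2=\trace\!\big(\mbf{B}^\T(\mbf{U}^\T\mbf{U}/n)\mbf{B}\big)$, at order $p^2\log p/n$ is precisely the role of Lemma~\ref{lem:UB}, which splits $\mbf{U}^\T\mbf{U}/n=\mbf{\Sigma}+(\mbf{U}^\T\mbf{U}/n-\mbf{\Sigma})$ and exploits the sparsity $\mbf{\Sigma}\in\mc{S}_q^+$ to bound $\trace(\mbf{B}^\T\mbf{\Sigma}\mbf{B})$, the low effective rank of $\mbf{B}\mbf{B}^\T$, and the fluctuation of the sample covariance to bound the remainder; Lemma~\ref{lem:D} then assembles the three pieces into $\Vert(\widehat{\mbf{A}}-\mbf{A})\mbf{\Psi}\Vert_\text{F}=\bigOp(p\sqrt{\log p/n})$.

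\emph{The main obstacle.} Everything except that last piece---the eigengap, the eigenvalue errors, and the Davis--Kahan step itself---is routine. The genuine difficulty is the cross term $\mbf{U}\mbf{B}\mbf{F}^\T\mbf{\Psi}$: unlike $\mbf{F}\mbf{B}^\T\mbf{U}^\T\mbf{\Psi}$ and $\mbf{U}\mbf{U}^\T\mbf{\Psi}$, which collapse onto the well-controlled quantity $\mbf{F}^\T\mbf{U}/n$, this term exposes $\mbf{U}\mbf{B}$, for which the naive bounds (using $\Vert\mbf{U}\Vert\,\Vert\mbf{B}\Vert$, or converting the max-norm bound on $\mbf{U}^\T\mbf{U}/n-\mbf{\Sigma}$ to an operator-norm bound) are a factor $\sqrt{n/\log p}$ too large; extracting the extra cancellation, as in Lemma~\ref{lem:UB}, is where the real work lies, and it is also what forces the auxiliary lemmas to be proved before Lemma~\ref{lem:eigenspace}.
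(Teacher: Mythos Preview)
Your strategy is sound and leads to the same rate, but it diverges from the paper's proof in how Proposition~\ref{prop:davis-kahan} is instantiated. The paper does \emph{not} take $\mbf{\Psi}$ to be the exact orthonormal eigenvectors of $\mbf{A}=\mbf{F}\mbf{B}^\T\mbf{B}\mbf{F}^\T/n$; instead it sets $\mbf{\Psi}=\mbf{F}\mbf{R}/\sqrt{n}$ with $\mbf{B}^\T\mbf{B}=\mbf{R}\mbf{\Lambda}\mbf{R}^\T$, so $\mbf{\Lambda}$ holds the eigenvalues of $\mbf{B}^\T\mbf{B}$ and $\mbf{\Delta}=\mbf{F}\mbf{R}\mbf{\Lambda}(\mbf{R}^\T\mbf{F}^\T\mbf{F}\mbf{R}/n-\mbf{I})/\sqrt{n}\ne\mbf{0}$. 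With this choice the paper applies the proposition \emph{as stated}, bounding the full operator norm $\Vert\widehat{\mbf{A}}-\mbf{A}\Vert=\bigOp(p\sqrt{\log p/n})$ directly (the cross term $\Vert\mbf{U}\mbf{B}\mbf{F}^\T/n\Vert\le\Vert\mbf{F}/\sqrt{n}\Vert\,\Vert\mbf{U}\mbf{B}\Vert_\text{F}/\sqrt{n}=\bigOp(\sqrt{p\log p})$ is absorbed since $p\succ n$). This yields $\Vert\widehat{\mbf{F}}_c^\T\mbf{F}/n\Vert_\text{F}=\bigOp(\sqrt{\log p/n})$, and a final step using Lemma~\ref{lem:D} (which controls the singular values of $\mbf{F}/\sqrt{n}$) converts this to $\Vert\widehat{\mbf{F}}_c^\T\widetilde{\mbf{F}}/n\Vert_\text{F}$.

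What each approach buys: your choice of an orthonormal $\mbf{\Psi}$ kills $\mbf{\Delta}$ and skips the closing conversion via Lemma~\ref{lem:D}, at the price of needing the sharper intermediate inequality $\Vert(\widehat{\mbf{A}}-\mbf{A})\mbf{\Psi}\Vert_\text{F}$ and a three-term decomposition; the paper trades a small $\mbf{\Delta}$ and one extra line for being able to quote the proposition verbatim. Both routes ultimately hinge on Lemma~\ref{lem:UB} for the $\mbf{U}\mbf{B}$ term, which you correctly flag as the crux. Two small corrections: Lemma~\ref{lem:UB} is proved by the one-line moment bound $\mbb{E}\Vert\mbf{U}\mbf{B}\Vert_\text{F}^2=n\,\trace(\mbf{B}^\T\mbf{\Sigma}\mbf{B})=\smallo(np\log p)$ plus Markov, not the sample/population split you sketch; and Lemma~\ref{lem:D} does not ``assemble the three pieces'' but rather bounds $\Vert\mbf{D}^2-\mbf{I}\Vert_{\max}$, which is exactly the conversion step your approach avoids.
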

\begin{proof}
Let $\widehat{\mbf{F}}_c$ consist of $\sqrt{n}$-scaled left singular vectors of $\mbf{X}$ except those in $\widetilde{\mbf{F}}$, then
$$\Vert \sin \angle (\widehat{\mbf{F}}/\sqrt{n}, \widetilde{\mbf{F}}/\sqrt{n})\Vert^2 = k - \Vert \widehat{\mbf{F}}^\T\widetilde{\mbf{F}}/n\Vert_\text{F}^2 = \Vert \widehat{\mbf{F}}_c^\T\widetilde{\mbf{F}}/n\Vert_\text{F}^2.$$
Thus it suffices to show $\Vert \widehat{\mbf{F}}_c^\T\widetilde{\mbf{F}}/n\Vert_\text{F} = \bigOp(\sqrt{\log p/n})$. For this goal, we first apply Proposition \ref{prop:davis-kahan} to show that
\begin{equation}\label{step1}
\Vert \widehat{\mbf{F}}_c^\T \mbf{F} / n \Vert_\text{F} = \bigOp(\sqrt{\log p/n}).
\end{equation}
Recall that $\mbf{R}$ consist of the right singular vectors of $\mbf{B}$, i.e. $\mbf{B}^\T\mbf{B} = \mbf{R} \mbf{\Lambda} \mbf{R}^\T$, and write
\begin{align*}
\frac{\mbf{X} \mbf{X}^\T}{n} \left(\frac{\widehat{\mbf{F}}}{\sqrt{n}}\right) &= \left(\frac{\widehat{\mbf{F}}}{\sqrt{n}}\right)\widehat{\mbf{\Lambda}},\\
\frac{\mbf{F} \mbf{B}^\T \mbf{B} \mbf{F}^\T}{n} \left(\frac{\mbf{F} \mbf{R}}{\sqrt{n}}\right) &= \left(\frac{\mbf{F} \mbf{R}}{\sqrt{n}}\right)\mbf{\Lambda} + \mbf{\Delta}
\end{align*}
where $\mbf{\Delta} = \mbf{F} \mbf{R} \mbf{\Lambda} \left(\mbf{R}^\T \mbf{F}^\T \mbf{F} \mbf{R}/ n - \mbf{I}\right) / \sqrt{n}$.  Applying Proposition \ref{prop:davis-kahan} yields
\begin{align*}
&~~~\Vert \widehat{\mbf{F}}_c^\T \mbf{F} / n \Vert_\text{F} = \Vert \widehat{\mbf{F}}_c^\T \mbf{F}\mbf{R}/ n \Vert_\text{F}\\
&\le \frac{\Vert \mbf{\Delta} \Vert_\text{F} + \sqrt{k} \Vert \mbf{F}\mbf{R}/\sqrt{n} \Vert \left( \Vert \mbf{X} \mbf{X}^\T/n -  \mbf{F} \mbf{B}^\T \mbf{B} \mbf{F}^\T/n\Vert + \Vert \mbf{\Lambda} - \widehat{\mbf{\Lambda}}\Vert_{\max}\right)}{\widehat{\lambda}_k - \widehat{\lambda}_{k+1}}.
\end{align*}
To prove \eqref{step1}, we are going to bound each term in the last display by Assumptions \ref{asm:5}-\ref{asm:7}, and Lemmas \ref{lem:eigenvalue},\ref{lem:UB}.
\begin{enumerate}[label=(\alph*)]
\item For the term $\Vert \mbf{F}\mbf{R}/\sqrt{n} \Vert = \Vert \mbf{F}/\sqrt{n} \Vert$, we have by Assumption \ref{asm:7} that
\begin{align*}
\Vert \mbf{F}^\T\mbf{F} / n - \mbf{I}\Vert &\le k \Vert \mbf{F}^\T\mbf{F} / n - \mbf{I}\Vert_{\max} = \bigOp(\sqrt{\log p / n}),\\
\left|\Vert \mbf{F}\mbf{R} / \sqrt{n} \Vert^2 - 1\right| &= \left| \Vert \mbf{F}^\T\mbf{F} / n\Vert -  \Vert \mbf{I} \Vert \right| \le \Vert \mbf{F}^\T\mbf{F} / n - \mbf{I}\Vert  = \bigOp(\sqrt{\log p / n}).
\end{align*}
\item Using the facts that $\Vert \mbf{T}_1 \mbf{T}_2\Vert_\text{F} = \Vert \mbf{T}_1 \Vert_\text{F} \Vert \mbf{T_2} \Vert$, that $\Vert \mbf{R} \Vert = 1$ and that $ \Vert \mbf{T}_1 \Vert_\text{F} \le \sqrt{\rank(\mbf{T}_1)} \Vert \mbf{T}_1 \Vert$,
\begin{align*}
\Vert \mbf{\Delta} \Vert_\text{F}
&\le \Vert \mbf{F} \mbf{R}/ \sqrt{n} \Vert_\text{F} \Vert \mbf{\Lambda} \Vert \Vert \mbf{R}^\T \mbf{F}^\T \mbf{F} \mbf{R}/ n - \mbf{I} \Vert\\
&= \sqrt{k}\Vert \mbf{F} / \sqrt{n} \Vert \Vert \mbf{\Lambda} \Vert \Vert \mbf{F}^\T \mbf{F} / n - \mbf{I}\Vert = \bigOp(p\sqrt{\log p / n}).
\end{align*}
\item For the term
$$\mbf{X} \mbf{X}^\T/n -  \mbf{F} \mbf{B}^\T \mbf{B} \mbf{F}^\T/n = \mbf{U} \mbf{B} \mbf{F}^\T/n + \mbf{F} \mbf{B}^\T \mbf{U}^\T/n + \mbf{U} \mbf{U}^\T/n,$$
we have, by Assumptions \ref{asm:5} and \ref{asm:7},
$$\Vert \mbf{U} \mbf{U}^\T/n \Vert \le p\Vert \mbf{U} \mbf{U}^\T/n - \mbf{\Sigma}\Vert_{\max} + \Vert \mbf{\Sigma}\Vert = \bigOp(p\sqrt{\log p /n})$$
and, by Lemma \ref{lem:UB},
$$\Vert \mbf{U} \mbf{B} \mbf{F}^\T/n \Vert = \Vert \mbf{F} \mbf{B}^\T \mbf{U}^\T/n \Vert
\le \Vert \mbf{F} /\sqrt{n} \Vert \Vert \mbf{U}\mbf{B}\Vert_\text{F}/\sqrt{n} = \bigOp(\sqrt{p\log p}).$$
\item From Lemma \ref{lem:eigenvalue}, it follows that
$$\Vert \widehat{\mbf{\Lambda}} - \mbf{\Lambda} \Vert_{\max} = \bigOp(p\sqrt{\log p/n}), ~~\widehat{\lambda}_{k+1} = \smallop(p),~~\widehat{\lambda}_k = \lambda_k + \smallop(p).$$
\end{enumerate}
Next, recall that $\mbf{F}/\sqrt{n} = (\widetilde{\mbf{F}}/\sqrt{n}) \mbf{D} \mbf{O}_0^\T$ is the singular value decomposition of $\mbf{F}/\sqrt{n}$. Write
\begin{align*}
\Vert \widehat{\mbf{F}}_c^\T\widetilde{\mbf{F}}/n\Vert_\text{F} &= \Vert \widehat{\mbf{F}}_c^\T \mbf{F} \mbf{O}_0 \mbf{D}^{-1}/n\Vert_\text{F}
\le \Vert \mbf{D}^{-1} \Vert_{\max} \Vert \widehat{\mbf{F}}_c^\T \mbf{F} \mbf{O}_0/n\Vert_\text{F}\\
&= \Vert \mbf{D}^{-1} \Vert_{\max} \Vert \widehat{\mbf{F}}_c^\T \mbf{F} /n\Vert_\text{F} = \bigOp(\sqrt{\log p/n}).
\end{align*}
where the second (in)equality follows from the fact that, if $\mbf{T_2}$ is diagonal,
$$\Vert \mbf{T}_1\mbf{T}_2\Vert_\text{F} = \Vert \vectorize(\mbf{I}\mbf{T}_1\mbf{T}_2)\Vert = \Vert \mbf{T}_2 \otimes \mbf{I} \vectorize(\mbf{T}_1) \Vert \le \Vert \mbf{T}_2\Vert_{\max} \Vert \mbf{T}_1 \Vert_\text{F},$$
the third (in)equality uses the orthogonality of $\mbf{O}_0$, and the final (in)equality combines rates given by Lemma \ref{lem:D} and \eqref{step1}.
\end{proof}

\begin{lemma} \label{lem:UB}
Suppose Assumptions \ref{asm:5} and \ref{asm:6} hold. Then
$$\Vert \mbf{U}\mbf{B} \Vert_\text{F} = \bigOp(\sqrt{np\log p}).$$
\end{lemma}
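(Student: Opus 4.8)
The plan is to control $\Vert\mbf{U}\mbf{B}\Vert_\text{F}$ through a direct second-moment computation followed by Markov's inequality, deliberately avoiding any concentration bound on $\mbf{U}^\T\mbf{U}$. Treating $\mbf{B}$ as fixed and satisfying Assumption \ref{asm:6}, I would first write
$$\Vert\mbf{U}\mbf{B}\Vert_\text{F}^2 = \trace(\mbf{B}^\T\mbf{U}^\T\mbf{U}\mbf{B}) = \sum_{i=1}^n \bm{u}_i^\T\mbf{B}\mbf{B}^\T\bm{u}_i,$$
using $\mbf{U}^\T\mbf{U} = \sum_{i=1}^n \bm{u}_i\bm{u}_i^\T$. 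Since by Assumption \ref{asm:5} the $\bm{u}_i$ are identically (not necessarily independently) distributed with $\mbb{E}\bm{u}_i = \mbf{0}$ and $\Cov(\bm{u}_i) = \mbf{\Sigma}$, taking expectations termwise gives
$$\mbb{E}\Vert\mbf{U}\mbf{B}\Vert_\text{F}^2 = n\,\trace(\mbf{B}\mbf{B}^\T\mbf{\Sigma}) = n\,\trace(\mbf{B}^\T\mbf{\Sigma}\mbf{B}),$$
and observe that only identical distribution is used here, consistent with the standing Assumption \ref{asm:5}.

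Next I would bound $\trace(\mbf{B}^\T\mbf{\Sigma}\mbf{B}) \le \lambda_{\max}(\mbf{\Sigma})\,\trace(\mbf{B}^\T\mbf{B})$. For the first factor, $\lambda_{\max}(\mbf{\Sigma}) = \Vert\mbf{\Sigma}\Vert \le \Vert\mbf{\Sigma}\Vert_1 \le C_0^{1-q}m_q(p)$ when $0 \le q < 1$ (and $\Vert\mbf{\Sigma}\Vert_1 < m_1(p)$ when $q = 1$), which is $\smallo(\log p)$ by Assumption \ref{asm:5}; this is exactly the estimate already invoked in the proof of Lemma \ref{lem:eigenvalue}. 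For the second factor, Assumption \ref{asm:6} gives $\trace(\mbf{B}^\T\mbf{B}) = p\,\trace(\mbf{B}^\T\mbf{B}/p) \le pk\,\lambda_{\max}(\mbf{B}^\T\mbf{B}/p) = \bigO(p)$ (alternatively $\trace(\mbf{B}^\T\mbf{B}) \le pk\Vert\mbf{B}\Vert_{\max}^2 = \bigO(p)$ since $k$ is fixed). Combining, $\mbb{E}\Vert\mbf{U}\mbf{B}\Vert_\text{F}^2 = n\cdot\smallo(\log p)\cdot\bigO(p) = \smallo(np\log p)$, so by Markov's inequality $\Vert\mbf{U}\mbf{B}\Vert_\text{F}^2 = \smallop(np\log p)$, and in particular $\Vert\mbf{U}\mbf{B}\Vert_\text{F} = \bigOp(\sqrt{np\log p})$, as claimed.

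The lemma is genuinely routine, and the only real ``obstacle'' is a temptation to resist: one should \emph{not} route the argument through $\Vert\mbf{U}^\T\mbf{U}/n - \mbf{\Sigma}\Vert \le p\,\Vert\mbf{U}^\T\mbf{U}/n - \mbf{\Sigma}\Vert_{\max} = \bigOp(p\sqrt{\log p/n})$ (Assumption \ref{asm:7}), since that would yield only $\Vert\mbf{U}\mbf{B}\Vert_\text{F}^2 = \bigOp(p^2\sqrt{n\log p})$, overshooting the target $np\log p$ by a factor of order $p/\sqrt{n\log p}\to\infty$ in the regime $p\succ n$. Working in expectation instead exposes only the operator norm of $\mbf{\Sigma}$, which the sparsity condition in Assumption \ref{asm:5} forces to be $\smallo(\log p)$, and no fluctuation control on $\mbf{U}$ is needed at all --- only $\mbb{E}[\bm{u}_i\bm{u}_i^\T] = \mbf{\Sigma}$. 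The sole bookkeeping subtleties are the $\mathcal{S}_q^+$-versus-$\mathcal{S}_1^+$ case split when estimating $\lambda_{\max}(\mbf{\Sigma})$ and, if one wishes to allow a random loading matrix, conditioning on $\mbf{B}$ so that the deterministic bounds of Assumption \ref{asm:6} apply.
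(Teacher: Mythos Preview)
Your proposal is correct and follows essentially the same approach as the paper: compute $\mbb{E}\Vert\mbf{U}\mbf{B}\Vert_\text{F}^2 = n\,\trace(\mbf{B}^\T\mbf{\Sigma}\mbf{B})$, bound this by $n\Vert\mbf{\Sigma}\Vert\cdot\bigO(p) = \smallo(np\log p)$ using the sparsity of $\mbf{\Sigma}$ and the boundedness of $\mbf{B}$, then apply Markov's inequality. The paper writes out the trace bound column-by-column as $\sum_{j=1}^k \mbf{B}_j^\T\mbf{\Sigma}\mbf{B}_j \le \Vert\mbf{\Sigma}\Vert\sum_j\Vert\mbf{B}_j\Vert^2 \le nkp\Vert\mbf{B}\Vert_{\max}^2\Vert\mbf{\Sigma}\Vert$, but this is the same inequality you use, just unpacked; your added commentary on why routing through Assumption~\ref{asm:7} would overshoot is a nice diagnostic that the paper omits.
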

\begin{proof}
Write
\begin{align*}
\mbb{E}\left[\Vert \mbf{U}\mbf{B} \Vert_\text{F}^2 \right]
&= \mbb{E}\left[\trace(\mbf{B}^\T\mbf{U}^\T\mbf{U}\mbf{B})\right] = \trace(\mbf{B}^\T\mbb{E}\left[\mbf{U}^\T\mbf{U}\right]\mbf{B}) = n \times \trace(\mbf{B}^\T \mbf{\Sigma} \mbf{B})\\
&= n \sum_{j=1}^k \mbf{B}_j^\T \mbf{\Sigma} \mbf{B}_j \le n \Vert \mbf{\Sigma} \Vert \sum_{j=1}^k \Vert \mbf{B}_j \Vert^2 \le nkp\Vert \mbf{B}\Vert_{\max}^2 \Vert \mbf{\Sigma} \Vert\\
&\le nkp \Vert \mbf{B}\Vert_{\max}^2 m_q(p)C_0^{1-q} = \smallo(np\log p)
\end{align*}
Applying Markov's inequalities to $\Vert \mbf{U}\mbf{B} \Vert_\text{F}^2$ completes the proof.
\end{proof}

\begin{lemma} \label{lem:D}
Suppose Assumption \ref{asm:6} and \ref{asm:7} holds. Let $\mbf{F}/\sqrt{n} = \widetilde{\mbf{F}}/\sqrt{n} \mbf{D}\mbf{O}_0^\T$ be the singular value decomposition of $\mbf{F}/\sqrt{n}$. Then
$$\Vert \mbf{D}^2 - \mbf{I} _k\Vert_{\max} = \bigOp(\sqrt{\log p/n}).$$
\end{lemma}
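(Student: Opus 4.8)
The plan is to reduce the statement to a perturbation bound on the $k\times k$ Gram matrix $\mbf{F}^\T\mbf{F}/n$, which is exactly what Assumption \ref{asm:7} controls. First I would use the singular value decomposition $\mbf{F}/\sqrt{n} = (\widetilde{\mbf{F}}/\sqrt{n})\mbf{D}\mbf{O}_0^\T$, in which $(\widetilde{\mbf{F}}/\sqrt{n})^\T(\widetilde{\mbf{F}}/\sqrt{n}) = \mbf{I}_k$ and $\mbf{O}_0$ is $k\times k$ orthogonal, to compute
\[
\mbf{F}^\T\mbf{F}/n = \mbf{O}_0\mbf{D}\,(\widetilde{\mbf{F}}/\sqrt{n})^\T(\widetilde{\mbf{F}}/\sqrt{n})\,\mbf{D}\mbf{O}_0^\T = \mbf{O}_0\mbf{D}^2\mbf{O}_0^\T,
\]
so the diagonal entries of $\mbf{D}^2$ are precisely the eigenvalues of $\mbf{F}^\T\mbf{F}/n$.

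Next I would upgrade the entrywise rate in Assumption \ref{asm:7} to an operator-norm rate: since $k$ is fixed, $\Vert \mbf{F}^\T\mbf{F}/n - \mbf{I}_k\Vert \le k\Vert \mbf{F}^\T\mbf{F}/n - \mbf{I}_k\Vert_{\max} = \bigOp(\sqrt{\log p/n})$ (the same step used in the proof of Lemma \ref{lem:eigenspace}). By Weyl's inequality, each eigenvalue of $\mbf{F}^\T\mbf{F}/n$ lies within $\Vert \mbf{F}^\T\mbf{F}/n - \mbf{I}_k\Vert$ of the corresponding eigenvalue of $\mbf{I}_k$, i.e. of $1$; in particular $\mbf{F}$ has full column rank $k$ with probability tending to one, so the thin decomposition above is genuinely $k$-dimensional. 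Hence every diagonal entry of $\mbf{D}^2 - \mbf{I}_k$ is $\bigOp(\sqrt{\log p/n})$, and since $\mbf{D}^2 - \mbf{I}_k$ is diagonal this is exactly $\Vert \mbf{D}^2 - \mbf{I}_k\Vert_{\max} = \bigOp(\sqrt{\log p/n})$.

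There is no real obstacle here: the lemma is an immediate consequence of Assumption \ref{asm:7} once one notices that $\mbf{D}^2$ collects the eigenvalues of $\mbf{F}^\T\mbf{F}/n$. Assumption \ref{asm:6} is not actually used in this particular argument (it enters elsewhere, e.g. in the proof of Lemma \ref{lem:eigenspace} through Lemma \ref{lem:UB}); it is listed only for uniformity of hypotheses. The sole point worth a moment's care is the well-definedness of the decomposition when $\mbf{F}$ happens to be rank-deficient, which the Weyl bound rules out on the relevant high-probability event.
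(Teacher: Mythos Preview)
Your proposal is correct and follows essentially the same route as the paper: observe that $\mbf{F}^\T\mbf{F}/n = \mbf{O}_0\mbf{D}^2\mbf{O}_0^\T$, so $\Vert \mbf{D}^2 - \mbf{I}\Vert_{\max} = \Vert \mbf{D}^2 - \mbf{I}\Vert = \Vert \mbf{F}^\T\mbf{F}/n - \mbf{I}\Vert \le k\Vert \mbf{F}^\T\mbf{F}/n - \mbf{I}\Vert_{\max} = \bigOp(\sqrt{\log p/n})$ by Assumption~\ref{asm:7}. The paper's one-line proof sidesteps the explicit appeal to Weyl by using the operator-norm equality under orthogonal conjugation directly, and your observation that Assumption~\ref{asm:6} plays no role here is accurate.
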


\begin{proof}
Write
$$\Vert \mbf{D}^2 - \mbf{I}\Vert_{\max} = \Vert \mbf{D}^2 - \mbf{I}\Vert = \Vert \mbf{F}^\T\mbf{F}/n - \mbf{I}\Vert \le k\Vert \mbf{F}^\T\mbf{F}/n - \mbf{I}\Vert_{\max} = \bigOp( \sqrt{\log p / n})$$.
\end{proof}

\begin{lemma}[Theorem \ref{thm:3}, part (c)] \label{lem:F}
Suppose Assumptions \ref{asm:5}-\ref{asm:7} hold. For some non-singular matrix $\mbf{H}_{k \times k}$ with $\Vert \mbf{H}^\T\mbf{H} - \mbf{I} \Vert = \bigOp(\sqrt{\log p/n})$ and $\Vert \mbf{H}\mbf{H}^\T - \mbf{I} \Vert = \bigOp(\sqrt{\log p/n})$, 
$$\Vert \widehat{\mbf{F}}\mbf{H} - \mbf{F} \Vert_\text{F} = \bigOp(\sqrt{\log p}),$$
\end{lemma}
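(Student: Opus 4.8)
The plan is to take $\mbf{H}$ to be the least-squares coefficient of $\mbf{F}$ regressed on $\widehat{\mbf{F}}$, i.e.
$$\mbf{H} := \widehat{\mbf{F}}^\dagger\mbf{F} = \widehat{\mbf{F}}^\T\mbf{F}/n,$$
where the second equality uses $\widehat{\mbf{F}}^\T\widehat{\mbf{F}}/n = \mbf{I}$. With this choice $\widehat{\mbf{F}}\mbf{H} = (\widehat{\mbf{F}}\widehat{\mbf{F}}^\T/n)\mbf{F}$ is the orthogonal projection of $\mbf{F}$ onto the column space of $\widehat{\mbf{F}}$, so $\Vert\widehat{\mbf{F}}\mbf{H}-\mbf{F}\Vert_\text{F}$ is exactly the size of the projection residual $(\mbf{I}-\widehat{\mbf{F}}\widehat{\mbf{F}}^\T/n)\mbf{F}$, and this is where Lemma~\ref{lem:eigenspace} will enter. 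I would also keep in hand the singular value decomposition $\mbf{F}/\sqrt n = (\widetilde{\mbf{F}}/\sqrt n)\mbf{D}\mbf{O}_0^\T$ used in the proof of Lemma~\ref{lem:eigenspace}, with $\mbf{O}_0$ orthogonal and $\mbf{D}$ diagonal, so that $\mbf{F} = \widetilde{\mbf{F}}\mbf{D}\mbf{O}_0^\T$.

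First I would bound the residual. Using $\Vert\mbf{A}\mbf{B}\Vert_\text{F}\le\Vert\mbf{A}\Vert_\text{F}\Vert\mbf{B}\Vert$ and $\Vert\mbf{D}\mbf{O}_0^\T\Vert = \Vert\mbf{D}\Vert$,
$$\Vert\widehat{\mbf{F}}\mbf{H}-\mbf{F}\Vert_\text{F} = \Vert(\mbf{I}-\widehat{\mbf{F}}\widehat{\mbf{F}}^\T/n)\widetilde{\mbf{F}}\mbf{D}\mbf{O}_0^\T\Vert_\text{F}\le\Vert(\mbf{I}-\widehat{\mbf{F}}\widehat{\mbf{F}}^\T/n)\widetilde{\mbf{F}}\Vert_\text{F}\,\Vert\mbf{D}\Vert.$$
Because $\widetilde{\mbf{F}}/\sqrt n$ has orthonormal columns and $\widehat{\mbf{F}}\widehat{\mbf{F}}^\T/n$ is the orthogonal projector onto the column space of $\widehat{\mbf{F}}$, the same computation with orthonormal bases as in the proof of Lemma~\ref{lem:eigenspace} gives $\Vert(\mbf{I}-\widehat{\mbf{F}}\widehat{\mbf{F}}^\T/n)(\widetilde{\mbf{F}}/\sqrt n)\Vert_\text{F}^2 = \Vert\sin\angle(\widehat{\mbf{F}}/\sqrt n,\widetilde{\mbf{F}}/\sqrt n)\Vert^2$, so by Lemma~\ref{lem:eigenspace}, $\Vert(\mbf{I}-\widehat{\mbf{F}}\widehat{\mbf{F}}^\T/n)\widetilde{\mbf{F}}\Vert_\text{F} = \sqrt n\,\bigOp(\sqrt{\log p/n}) = \bigOp(\sqrt{\log p})$. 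Lemma~\ref{lem:D} gives $\Vert\mbf{D}\Vert^2 = \Vert\mbf{D}^2\Vert\le 1+\Vert\mbf{D}^2-\mbf{I}\Vert = \bigOp(1)$. Multiplying the two bounds yields $\Vert\widehat{\mbf{F}}\mbf{H}-\mbf{F}\Vert_\text{F} = \bigOp(\sqrt{\log p})$, the main assertion.

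It then remains to verify the near-orthogonality of $\mbf{H}$. I would factor $\mbf{H} = \widehat{\mbf{F}}^\T\mbf{F}/n = \mbf{W}\mbf{D}\mbf{O}_0^\T$, where $\mbf{W} := (\widehat{\mbf{F}}/\sqrt n)^\T(\widetilde{\mbf{F}}/\sqrt n)$ is the $k\times k$ Gram matrix of two sets of orthonormal columns; its singular values are precisely the cosines $d_1,\dots,d_k$ of the principal angles, so $\Vert\mbf{W}\Vert\le 1$ and $\Vert\mbf{W}^\T\mbf{W}-\mbf{I}\Vert,\ \Vert\mbf{W}\mbf{W}^\T-\mbf{I}\Vert\le\max_i(1-d_i^2)\le\Vert\sin\angle(\widehat{\mbf{F}}/\sqrt n,\widetilde{\mbf{F}}/\sqrt n)\Vert^2 = \bigOp(\log p/n)$ by Lemma~\ref{lem:eigenspace}. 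Since $\mbf{D}$ is diagonal, $\Vert\mbf{D}^2-\mbf{I}\Vert = \Vert\mbf{D}^2-\mbf{I}\Vert_{\max} = \bigOp(\sqrt{\log p/n})$ by Lemma~\ref{lem:D}. Then
$$\mbf{H}\mbf{H}^\T = \mbf{W}\mbf{D}^2\mbf{W}^\T = \mbf{W}\mbf{W}^\T + \mbf{W}(\mbf{D}^2-\mbf{I})\mbf{W}^\T = \mbf{I}+\bigOp(\sqrt{\log p/n}),$$
using $\Vert\mbf{W}\Vert\le 1$, and likewise $\mbf{H}^\T\mbf{H} = \mbf{O}_0\mbf{D}^2\mbf{O}_0^\T + \mbf{O}_0\mbf{D}(\mbf{W}^\T\mbf{W}-\mbf{I})\mbf{D}\mbf{O}_0^\T = \mbf{I}+\bigOp(\sqrt{\log p/n})$, using $\Vert\mbf{D}\Vert = \bigOp(1)$ and $\Vert\mbf{O}_0\mbf{D}^2\mbf{O}_0^\T-\mbf{I}\Vert = \Vert\mbf{D}^2-\mbf{I}\Vert$. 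In particular $\lambda_{\min}(\mbf{H}^\T\mbf{H})$ is bounded away from $0$ with probability approaching $1$, so $\mbf{H}$ is non-singular, completing the proof. I do not anticipate a genuine obstacle; the only point that needs care is this last display, where each factor ($\mbf{W}$, $\mbf{D}$, $\mbf{O}_0$) must be kept bounded in operator norm so that products of the $\bigOp(\sqrt{\log p/n})$ perturbation terms do not worsen the rate, but everything reduces to Lemmas~\ref{lem:eigenspace} and~\ref{lem:D}.
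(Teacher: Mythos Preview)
Your argument is correct and a bit more direct than the paper's, though the two proofs differ in the choice of $\mbf{H}$. You take $\mbf{H}=\widehat{\mbf{F}}^\T\mbf{F}/n=\mbf{W}\mbf{D}\mbf{O}_0^\T$, the least-squares regression coefficient, so that $\widehat{\mbf{F}}\mbf{H}-\mbf{F}$ is literally the projection residual and its Frobenius norm is identified with the sine of the principal angles immediately; the near-orthogonality then requires you to control the extra factor $\mbf{W}$, which you do correctly via $\Vert\mbf{W}^\T\mbf{W}-\mbf{I}\Vert\le\Vert\sin\angle(\cdot,\cdot)\Vert^2$. The paper instead takes the Procrustes-type alignment $\mbf{H}=\mbf{O}_1\mbf{O}_2^\T\mbf{D}\mbf{O}_0^\T$, where $\mbf{O}_1,\mbf{O}_2$ are the left and right singular vectors of $\mbf{W}$; this makes the near-orthogonality trivial (since $\mbf{O}_1\mbf{O}_2^\T$ is exactly orthogonal, $\mbf{H}^\T\mbf{H}-\mbf{I}=\mbf{O}_0(\mbf{D}^2-\mbf{I})\mbf{O}_0^\T$), at the cost of a slightly less transparent bound on $\Vert\widehat{\mbf{F}}\mbf{H}-\mbf{F}\Vert_\text{F}$, obtained via $\Vert\widehat{\mbf{F}}\mbf{O}_1-\widetilde{\mbf{F}}\mbf{O}_2\Vert_\text{F}^2/n=2k-2\sum_i d_i\le 2k-2\sum_i d_i^2$. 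Both routes reduce to Lemmas~\ref{lem:eigenspace} and~\ref{lem:D}; your version trades a one-line residual bound for a two-line near-orthogonality check, while the paper makes the opposite trade.
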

\begin{proof}
Recall that $\mbf{F}/\sqrt{n} = (\widetilde{\mbf{F}}/\sqrt{n})\mbf{D}\mbf{O}_0^\T$ is the singular value decomposition of $\mbf{F}/\sqrt{n}$. Note that all singular values of $\widehat{\mbf{F}}^\T\widetilde{\mbf{F}}/n$ is bounded by $\Vert \widehat{\mbf{F}}^\T\widetilde{\mbf{F}}/n \Vert \le 1$. Let $\mbf{O}_1$ and $\mbf{O}_2$ consist of the left and right singular vectors of $\widehat{\mbf{F}}^\T\widetilde{\mbf{F}}/n$ (the signs of vectors are properly set such that the singular values of $\widehat{\mbf{F}}^\T\widetilde{\mbf{F}}/n$ are non-negative). Thus
\begin{align*}
\Vert \widehat{\mbf{F}}\mbf{O}_1 - \widetilde{\mbf{F}} \mbf{O}_2\Vert_\text{F}^2/n &= \trace\left[ (\widehat{\mbf{F}}\mbf{O}_1 - \widetilde{\mbf{F}} \mbf{O}_2)^\T (\widehat{\mbf{F}}\mbf{O}_1 - \widetilde{\mbf{F}} \mbf{O}_2) \right]/n\\
&= 2k - 2 \trace\left[\mbf{O}_1^\T (\widehat{\mbf{F}}^\T \widetilde{\mbf{F}}/n) \mbf{O}_2\right] \le 2k - 2 \Vert \mbf{O}_1^\T (\widehat{\mbf{F}}^\T \widetilde{\mbf{F}}/n) \mbf{O}_2 \Vert_\text{F}^2\\
&= 2k - 2 \Vert \widehat{\mbf{F}}^\T \widetilde{\mbf{F}}/n \Vert_\text{F}^2 = \bigOp(\log p/n).
\end{align*}
where the last step uses Lemma \ref{lem:eigenspace}. Set $\mbf{H} = \mbf{O}_1\mbf{O}_2^\T\mbf{D}\mbf{O}_0^\T$ then
\begin{align*}
\Vert \widehat{\mbf{F}} \mbf{H} - \mbf{F} \Vert_\text{F}
&= \Vert (\widehat{\mbf{F}} \mbf{O}_1 - \widetilde{\mbf{F}}\mbf{O}_2) \mbf{O}_2^\T\mbf{D}\mbf{O}_0^\T \Vert_\text{F} \le \Vert \widehat{\mbf{F}} \mbf{O}_1 - \widetilde{\mbf{F}}\mbf{O}_2 \Vert_\text{F} \Vert \mbf{D} \Vert = \bigOp(\log p)\\
\mbf{H}^\T\mbf{H} - \mbf{I} &= \mbf{O}_0 (\mbf{D}^2 - \mbf{I})\mbf{O}_0^\T\\
\mbf{H}\mbf{H}^\T - \mbf{I} &= \mbf{O}_1 \mbf{O}_2^\T (\mbf{D}^2 - \mbf{I})\mbf{O}_2 \mbf{O}_1^\T.
\end{align*}
The eigenvalues of $\mbf{H}\mbf{H}^\T$ or $\mbf{H}^\T\mbf{H}$ are the diagonal elements in $\mbf{D}^2$, which are $\bigOp(\sqrt{\log p/n})$-close to $1$ as shown by Lemma \ref{lem:D}.
\end{proof}

\begin{lemma}[Theorem \ref{thm:3}, part (d)] \label{lem:U}
Suppose Assumptions \ref{asm:5}-\ref{asm:7} hold. $\widehat{\mbf{U}} = (\mbf{I} - \widehat{\mbf{F}} \widehat{\mbf{F}}^\T/n)\mbf{X}$ recovers the latent individual factor matrix $\mbf{U}$ in the sense that
$$\max_{j=1}^p \Vert \widehat{\mbf{U}}_j - \mbf{U}_j \Vert = \bigOp(\sqrt{\log p}).$$
\end{lemma}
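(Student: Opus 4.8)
The plan is to start from the exact algebraic identity for $\widehat{\mbf{U}}_j-\mbf{U}_j$ and reduce it to two error terms, each controllable by bounds already in hand. Writing $\mbf{X}_j=\mbf{F}\bm{b}_j+\mbf{U}_j$ (the $j$-th column of $\mbf{X}=\mbf{F}\mbf{B}^\T+\mbf{U}$, with $\bm{b}_j$ the $j$-th row of $\mbf{B}$) and $\widehat{\mbf{U}}_j=(\mbf{I}-\widehat{\mbf{F}}\widehat{\mbf{F}}^\T/n)\mbf{X}_j$, the key simplification is that $\widehat{\mbf{F}}^\T\widehat{\mbf{F}}/n=\mbf{I}$ forces $(\mbf{I}-\widehat{\mbf{F}}\widehat{\mbf{F}}^\T/n)\widehat{\mbf{F}}=\mbf{0}$, so that
$$\widehat{\mbf{U}}_j-\mbf{U}_j=-\left(\mbf{I}-\frac{\widehat{\mbf{F}}\widehat{\mbf{F}}^\T}{n}\right)(\widehat{\mbf{F}}\mbf{H}-\mbf{F})\bm{b}_j-\frac{\widehat{\mbf{F}}\widehat{\mbf{F}}^\T}{n}\mbf{U}_j,$$
where $\mbf{H}$ is the matrix produced by Lemma \ref{lem:F}. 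The task then splits into bounding these two terms uniformly over $j$ at rate $\sqrt{\log p}$.

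For the first term I would use that $\mbf{I}-\widehat{\mbf{F}}\widehat{\mbf{F}}^\T/n$ is an orthogonal projection (operator norm $\le 1$), that $\|\widehat{\mbf{F}}\mbf{H}-\mbf{F}\|\le\|\widehat{\mbf{F}}\mbf{H}-\mbf{F}\|_\text{F}=\bigOp(\sqrt{\log p})$ by Lemma \ref{lem:F}, and that $\max_j\|\bm{b}_j\|\le\sqrt{k}\,\|\mbf{B}\|_{\max}=\bigO(1)$ by Assumption \ref{asm:6}; multiplying the three bounds gives $\bigOp(\sqrt{\log p})$ uniformly in $j$. For the second term, write it as $(\widehat{\mbf{F}}/\sqrt{n})(\widehat{\mbf{F}}^\T\mbf{U}_j/\sqrt{n})$ so that, since $\widehat{\mbf{F}}/\sqrt{n}$ has orthonormal columns, its norm is at most $\|\widehat{\mbf{F}}^\T\mbf{U}_j\|/\sqrt{n}$. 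To control $\|\widehat{\mbf{F}}^\T\mbf{U}_j\|$ I would substitute $\widehat{\mbf{F}}=\mbf{F}\mbf{H}^{-1}+(\widehat{\mbf{F}}\mbf{H}-\mbf{F})\mbf{H}^{-1}$ (the matrix $\mbf{H}$ is invertible with $\|\mbf{H}^{-1}\|=\bigOp(1)$, since $\|\mbf{H}^\T\mbf{H}-\mbf{I}\|=\smallop(1)$), then bound $\|\mbf{F}^\T\mbf{U}_j\|\le\sqrt{k}\,n\,\|\mbf{F}^\T\mbf{U}/n\|_{\max}=\bigOp(\sqrt{n\log p})$ by Assumption \ref{asm:7}, and $\|(\widehat{\mbf{F}}\mbf{H}-\mbf{F})^\T\mbf{U}_j\|\le\|\widehat{\mbf{F}}\mbf{H}-\mbf{F}\|_\text{F}\,\|\mbf{U}_j\|$, where $\max_j\|\mbf{U}_j\|^2=n\max_j(\mbf{U}_j^\T\mbf{U}_j/n)\le n(\|\mbf{\Sigma}\|_{\max}+\|\mbf{U}^\T\mbf{U}/n-\mbf{\Sigma}\|_{\max})=\bigOp(n)$ by Assumption \ref{asm:7} together with $\|\mbf{\Sigma}\|_{\max}<C_0$ from the definition of $\mc{S}_q^+$, so $\max_j\|\mbf{U}_j\|=\bigOp(\sqrt{n})$. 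This yields $\|\widehat{\mbf{F}}^\T\mbf{U}_j\|=\bigOp(\sqrt{n\log p})$ uniformly in $j$, hence the second term is $\bigOp(\sqrt{\log p})$. Combining the two bounds proves the claim.

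The computation is essentially routine once Lemma \ref{lem:F} is available; the only points requiring a little care are making every estimate uniform in $j$ — in particular controlling $\max_j\|\mbf{U}_j\|$ and $\max_j\|\mbf{F}^\T\mbf{U}_j\|$ through the maximum-norm bounds of Assumption \ref{asm:7} rather than spectral-norm bounds — and noticing that $\widehat{\mbf{F}}^\T\mbf{U}_j$ must be routed through $\mbf{F}^\T\mbf{U}_j$ (which Assumption \ref{asm:7} controls) plus the Lemma \ref{lem:F} error, since there is no direct a priori bound on $\widehat{\mbf{F}}^\T\mbf{U}_j$ itself.
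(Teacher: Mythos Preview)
Your proof is correct and follows essentially the same route as the paper: the paper also writes $\widehat{\mbf{U}}_j-\mbf{U}_j=(\mbf{I}-\widehat{\mbf{F}}\widehat{\mbf{F}}^\T/n)(\mbf{F}-\widehat{\mbf{F}}\mbf{H})\bm{b}_j-\widehat{\mbf{F}}(\widehat{\mbf{F}}-\mbf{F}\mbf{H}^{-1})^\T\mbf{U}_j/n-\widehat{\mbf{F}}\mbf{H}^{-\T}\mbf{F}^\T\mbf{U}_j/n$ and bounds the three pieces exactly as you do, using Lemma~\ref{lem:F}, Assumption~\ref{asm:6} for $\|\bm{b}_j\|$, and Assumption~\ref{asm:7} for $\|\mbf{F}^\T\mbf{U}_j\|$ and $\|\mbf{U}_j\|$. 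Your two-term presentation (with the second term split via $\widehat{\mbf{F}}=\mbf{F}\mbf{H}^{-1}+(\widehat{\mbf{F}}\mbf{H}-\mbf{F})\mbf{H}^{-1}$) is just a regrouping of the paper's three terms.
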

\begin{proof}
Recall that $\bm{b}_j$ denote the $j$-th row of $\mbf{B}$, $j=1,\dots,p$. Recall the definition of $\mbf{H}$ in Lemma \ref{lem:F}. It is elementary that $\Vert \mbf{H} \Vert = \bigOp(1)$ and $\Vert \mbf{H}^{-1} \Vert = \bigOp(1)$. Note that $\widehat{\mbf{F}}^\T\widehat{\mbf{F}}/n = \mbf{I}$. Write
\begin{align*}
\widehat{\mbf{U}}_j - \mbf{U}_j 
&= (\mbf{I} - \widehat{\mbf{F}} \widehat{\mbf{F}}^\T/n)\mbf{F}\bm{b}_j - (\widehat{\mbf{F}} \widehat{\mbf{F}}^\T/n) \mbf{U}_j,\\
&= (\mbf{I} - \widehat{\mbf{F}} \widehat{\mbf{F}}^\T/n)(\mbf{F} - \widehat{\mbf{F}}\mbf{H})\bm{b}_j - \widehat{\mbf{F}}(\widehat{\mbf{F}} - \mbf{F}\mbf{H}^{-1})^\T \mbf{U}_j/n - \widehat{\mbf{F}}\mbf{H}^{-\T}\mbf{F}^\T \mbf{U}_j/n
\end{align*}
For the first term,
\begin{align*}
\Vert (\mbf{I} - \widehat{\mbf{F}} \widehat{\mbf{F}}^\T/n)(\mbf{F} - \widehat{\mbf{F}}\mbf{H})\bm{b}_j \Vert
&\le \Vert \mbf{F} - \widehat{\mbf{F}}\mbf{H} \Vert \Vert \bm{b}_j \Vert\\
&\le \Vert \mbf{F} - \widehat{\mbf{F}}\mbf{H} \Vert_\text{F} \sqrt{k} \Vert \mbf{B} \Vert_{\max} = \bigOp(\sqrt{\log p}).
\end{align*}
For the second term,
\begin{align*}
\Vert \widehat{\mbf{F}}(\widehat{\mbf{F}} - \mbf{F}\mbf{H}^{-1})^\T \mbf{U}_j/n \Vert &\le \Vert \widehat{\mbf{F}}/\sqrt{n} \Vert \Vert \widehat{\mbf{F}}\mbf{H} - \mbf{F} \Vert \Vert \mbf{H}^{-1} \Vert \Vert \mbf{U}_j /\sqrt{n}\Vert\\
&\le 1 \times \Vert \widehat{\mbf{F}}\mbf{H} - \mbf{F} \Vert_\text{F} \Vert \mbf{H}^{-1} \Vert \sqrt{\mbf{\Sigma}_{jj} + \Vert \mbf{U}^\T \mbf{U}/n - \bm{\Sigma}\Vert_{\max}} = \bigOp(\sqrt{\log p}).
\end{align*}
For the third term,
\begin{align*}
\Vert \widehat{\mbf{F}} \mbf{H}^{-\T} \mbf{F}^\T \mbf{U}_j/n \Vert &\le  \sqrt{n}\Vert \widehat{\mbf{F}}/\sqrt{n} \Vert \Vert \mbf{H}^{-1} \Vert \Vert \mbf{F}^\T \mbf{U}_j/n \Vert\\
&\le \sqrt{n} \times 1 \times \Vert \mbf{H}^{-1} \Vert \times \sqrt{k} \Vert \mbf{F}^\T\mbf{U}/n\Vert_{\max} = \bigOp(\sqrt{\log p}).
\end{align*}
\end{proof}

\section{Implementation of Gibbs Samplers}
For the prior \eqref{prior}, we set $h$ as the Gaussian density function $h(z) = e^{-z^2/2}/\sqrt{2\pi}$ and $g$ as the inverse-gamma density function with shape $a_0 = 1$ and scale $b_0 = 1$. A Gibbs sampler is implemented to explore the pseudo-posterior distribution \eqref{posterior}. This Gibbs sampler runs towards the pseudo-posterior joint distribution of $(\sigma^2, \bm{\alpha}, \bm{\beta})$ by iterating the following steps: (1) draw $\xi$ given $\bm{\alpha}$ and $\sigma^2$, (2) draw $\bm{\beta}$ given $\xi$, $\bm{\alpha}$ and $\sigma^2$, (3) draw $\bm{\alpha}$ given $\xi,\bm{\beta}$ and $\sigma^2$, (4) draw $\sigma^2$ given $\xi,\bm{\beta}$ and $\bm{\alpha}$.

For simplicity, we illustrate the implementation details with $s_0 = 1$, $\tau_j = 1$ for $j = 1,\dots,p$. For the first step, we have
\begin{align*}
\widehat{\pi}(\xi, \bm{\beta}_\xi | \sigma^2, \bm{\alpha}, \widehat{\mbf{F}}, \widehat{\mbf{U}}, \mbf{Y})
&\propto p^{-|\xi|} \exp{\left(-\frac{\Vert \mbf{Y} - \widehat{\mbf{F}}\bm{\alpha} - \widehat{\mbf{U}}_\xi\bm{\beta}_\xi\Vert^2}{2\sigma^2}\right)} \sigma^{-|\xi|}\exp{\left(-\frac{\Vert \bm{\beta}_\xi\Vert^2}{2\sigma^2}\right)}.
\end{align*}
This implies
\begin{align}
&~~~\widehat{\pi}(\xi | \sigma^2, \bm{\alpha}, \widehat{\mbf{F}}, \widehat{\mbf{U}}, \mbf{Y}) \nonumber\\
&= \int \widehat{\pi}(\xi, \bm{\beta}_\xi | \sigma^2, \bm{\alpha}, \widehat{\mbf{F}}, \widehat{\mbf{U}}, \mbf{Y}) d\bm{\beta}_\xi \nonumber\\
&\propto p^{-|\xi|} \det(\mbf{S}_\xi)^{-1/2} \exp{\left(-\frac{\left(\mbf{Y} - \widehat{\mbf{F}}\bm{\alpha}\right)^\T \mbf{S}_\xi^{-1}\left(\mbf{Y} - \widehat{\mbf{F}}\bm{\alpha}\right)}{2\sigma^2}\right)}, \label{cp xi}
\end{align}
where $\mbf{S}_\xi = \widehat{\mbf{U}}_\xi \widehat{\mbf{U}}_\xi^\T + \mbf{I}$. However, it is computationally prohibitive to directly sample from this conditional distribution, as $\xi$ takes $2^p$ possible values. As a remedy, we flip $Z_j = 1\{j \in \xi\}$ in Gibbs random scans. In our experiments, we found that just one random scan suffices for the proposed method to perform well. Details of flipping $Z_j$ will be given at the end of this section.

For the second step, we derive, by elementary calculus,
$$\widehat{\pi}(\bm{\beta}_\xi | \sigma^2, \bm{\alpha}, \xi, \widehat{\mbf{F}}, \widehat{\mbf{U}},\mbf{Y}) \sim \mc{N}\left(\left(\widehat{\mbf{U}}_\xi^\T\widehat{\mbf{U}}_\xi+\mbf{I}\right)^{-1}\widehat{\mbf{U}}_\xi^\T\left(\mbf{Y} - \widehat{\mbf{F}}\bm{\alpha}\right), \sigma^2 \left(\widehat{\mbf{U}}_\xi^\T\widehat{\mbf{U}}_\xi+\mbf{I}\right)^{-1}\right).$$
Recall that $\bm{\beta}_{\xi^c} \equiv 0$. Similarly, for the third step,
\begin{align*}
\widehat{\pi}(\bm{\alpha} | \sigma^2, \bm{\beta}, \xi, \widehat{\mbf{F}}, \widehat{\mbf{U}},\mbf{Y})
&\sim \mc{N}\left(\left(\widehat{\mbf{F}}^\T\widehat{\mbf{F}}+\mbf{I}\right)^{-1}\widehat{\mbf{F}}^\T\left(\mbf{Y} - \widehat{\mbf{U}}_\xi\bm{\beta}_\xi\right), \sigma^2 \left(\widehat{\mbf{F}}^\T\widehat{\mbf{F}}+\mbf{I}\right)^{-1}\right)\\
&\sim  \mc{N}\left(\widehat{\mbf{F}}^\T\left(\mbf{Y} - \widehat{\mbf{U}}_\xi\bm{\beta}_\xi\right) / (n+1), \sigma^2 \mbf{I}/(n+1)\right)
\end{align*}
The final step uses the conjugacy of normal distribution and inverse-gamma distribution
\begin{align*}
&~~~\widehat{\pi}(\sigma^2 |\bm{\alpha}, \bm{\beta}, \xi, \widehat{\mbf{F}}, \widehat{\mbf{U}},\mbf{Y})\\
&\propto g(\sigma^2|a_0,b_0) \mc{N}\left(\bm{\beta}_\xi|\mbf{0},\sigma^2\mbf{I}\right)\mc{N}\left(\bm{\alpha}|\mbf{0},\sigma^2\mbf{I}\right)\mc{N}\left(\mbf{Y}|\widehat{\mbf{F}}\bm{\alpha}+\widehat{\mbf{U}}_\xi\bm{\beta}_\xi, \sigma^2 \mbf{I} \right)\\
&\propto g\left(\sigma^2\left|a_0 + \frac{|\xi|+k+n}{2}, b_0 + \frac{\Vert \bm{\beta}_\xi \Vert^2 + \Vert \bm{\alpha} \Vert^2 + \Vert \mbf{Y} - \widehat{\mbf{F}}\bm{\alpha} - \widehat{\mbf{U}}_\xi\bm{\beta}_\xi\Vert^2}{2}\right.\right).
\end{align*}

In the first step, in order to sample from the conditional distribution \eqref{cp xi}, we flip $Z_j$ with probability
\begin{align*}
\widehat{\pi}(Z_j = 1 | \{Z_{j'}\}_{1 \le j' \ne j \le p}, \sigma^2, \bm{\alpha}, \widehat{\mbf{F}}, \widehat{\mbf{U}}, \mbf{Y}) = \left\{1 + \left[\frac{\widehat{\pi}(\xi = \omega \cup \{j\}| \sigma^2, \bm{\alpha}, \widehat{\mbf{F}}, \widehat{\mbf{U}}, \mbf{Y})}{\widehat{\pi}(\xi = \omega | \sigma^2, \bm{\alpha}, \widehat{\mbf{F}}, \widehat{\mbf{U}}, \mbf{Y})}\right]^{-1}\right\}^{-1},
\end{align*}
where $\omega = \{j' \ne j: Z_j' = 1\}$. The posterior probability ratio is computed as
\begin{align*}
p^{-1} \left[\frac{\det(\mbf{S}_{\omega \cup \{j\}})}{\det(\mbf{S}_\omega)}\right]^{-1/2} \exp{\left(-\frac{\left(\mbf{Y} - \widehat{\mbf{F}}\bm{\alpha}\right)^\T \left(\mbf{S}_{\omega \cup \{j\}}^{-1} - \mbf{S}_\omega^{-1}\right)\left(\mbf{Y} - \widehat{\mbf{F}}\bm{\alpha}\right)}{2\sigma^2}\right)},
\end{align*}
where we derive, by Sylvester's determinant theorem and properties of Schur complements, that
\begin{align*}
\frac{\det(\mbf{S}_{\omega \cup \{j\}})}{\det(\mbf{S}_\omega)}
&= \frac{\det(\widehat{\mbf{U}}_{\omega \cup \{j\}}^\T\widehat{\mbf{U}}_{\omega \cup \{j\}}+\mbf{I})}{\det(\widehat{\mbf{U}}_{\omega}^\T\widehat{\mbf{U}}_{\omega}+\mbf{I})} = (\widehat{\mbf{U}}_j^\T\widehat{\mbf{U}}_j +1) -  \widehat{\mbf{U}}_j^\T\widehat{\mbf{U}}_\omega\left[\widehat{\mbf{U}}_\omega^\T\widehat{\mbf{U}}_\omega + \mbf{I}\right]^{-1}\widehat{\mbf{U}}_\omega^\T\widehat{\mbf{U}}_j,
\end{align*}
and, by Sherman-Morrison-Woodbury identity, that
\begin{align*}
\mbf{S}_{\omega \cup \{j\}}^{-1} - \mbf{S}_\omega^{-1}
&= \widehat{\mbf{U}}_{\omega} \left(\widehat{\mbf{U}}_{\omega}^\T \widehat{\mbf{U}}_{\omega} + \mbf{I} \right)^{-1} \widehat{\mbf{U}}_{\omega}^\T - \widehat{\mbf{U}}_{\omega \cup \{j\}} \left( \widehat{\mbf{U}}_{\omega \cup \{j\}}^\T \widehat{\mbf{U}}_{\omega \cup \{j\}} + \mbf{I} \right)^{-1} \widehat{\mbf{U}}_{\omega \cup \{j\}}^\T.
\end{align*}

As shown in our theoretical analyses, this Gibbs sampler will deal with $|\omega| \le (M_0+1)s$ in most time. The computation of terms in the posterior probability ratio is numerically stable as the Gram matrices involved in the computation has small size. The computation

It is also time-efficient with complexity $\bigO(n|\omega|^2) \le \bigO(ns^2)$. The overall time complexity running $T$ iterations of Gibbs samplers in our Bayesian method is then $\bigO(Tpns^2)$. In contrast, the factor-adjusted lasso method costs $\bigO(p^3)$ time. In the simulation studies, we choose $T = 20$, $n=200$, $p=500$, $s = 5$ as the typical setting, and observe that our Bayesian method runs as fast as its lasso analogue.

\end{document}